\documentclass[conference,10pt,onecolumn]{IEEEtran}

\IEEEoverridecommandlockouts
\usepackage{textcomp}
\usepackage{stfloats}
\usepackage{verbatim}
\usepackage{cite}
\usepackage{amsmath,amssymb,amsfonts}
\usepackage{algorithmic}
\usepackage{graphicx}
\usepackage{float}
\usepackage{textcomp}
\usepackage{xcolor}
\usepackage{bbm}
\usepackage{array}
\usepackage{rotating}
\usepackage{gensymb}
\usepackage{comment}
\usepackage[acronym]{glossaries}
\usepackage{multirow}
\usepackage{lscape}
\usepackage{makecell}
\usepackage[inline]{enumitem}
\usepackage{svg}
\usepackage{hyperref}
\hypersetup{hidelinks}
\usepackage[font=small,labelsep=period,justification=centering]{caption}
\usepackage{subcaption}
\usepackage{amsmath}
\usepackage{amsthm}
\usepackage[noabbrev, capitalize]{cleveref}
\usepackage[normalem]{ulem}

\usepackage{etoolbox} 
\makeatletter
\patchcmd{\subsubsection}{\itshape}{\bfseries}{}{}
\makeatother

\usepackage{adjustbox}
\usepackage{tikz}
\usetikzlibrary{shapes,arrows}
\usetikzlibrary{positioning}
\usetikzlibrary{decorations.text}

\newcounter{CorrCounter}
\newcounter{LemmaCounter}

\DeclareMathOperator*{\argmax}{arg\,max}

\newtheoremstyle{mystyle}%
  {1pt}    % espace au-dessus
  {1pt}    % espace en-dessous
  {\itshape} % corps
  {}        % indentation
  {\bfseries} % titre
  {.}       % ponctuation après titre
  {3pt}     % espace après le titre
  {}        % style du titre

\theoremstyle{plain}
\newtheorem{proposition}{Proposition}
\crefname{proposition}{Proposition}{Propositions}
\theoremstyle{plain}
\newtheorem{definition}{Definition}
\crefname{definition}{Definition}{Definitions}

\newcommand{\nocontentsline}[3]{}
\newcommand{\tocless}[2]{\bgroup\let\addcontentsline=\nocontentsline#1{#2}\egroup}

\def\BibTeX{{\rm B\kern-.05em{\sc i\kern-.025em b}\kern-.08em
    T\kern-.1667em\lower.7ex\hbox{E}\kern-.125emX}}

\allowdisplaybreaks

\begin{document}

\title{Resolution-Aliasing Trade-off in Near-Field Localisation}

\author{\IEEEauthorblockN{Baptiste Sambon, Gilles Monnoyer, Luc Vandendorpe, and Claude Oestges \thanks{Baptiste Sambon is a Research Fellow of the Fonds de la Recherche Scientifique - FNRS.}}\\
\IEEEauthorblockA{ICTEAM, UCLouvain - Louvain-la-Neuve, Belgium\\}
email: firstname.lastname@uclouvain.be
}
% use for special paper notices
%\IEEEspecialpapernotice{(Invited Paper)}

\maketitle

\thispagestyle{plain}%to add vs delete page numbers: plain vs empty
\pagestyle{plain}

%%%%%%%%%%%%%%%%%%%%%%

\begin{abstract}

Extremely Large-scale MIMO (XL-MIMO) systems operating in Near-Field (NF) introduce new degrees of freedom for accurate source localisation, but make dense arrays impractical. Sparse or distributed arrays can reduce hardware complexity while maintaining high resolution, yet sub-Nyquist spatial sampling introduces aliasing artefacts in the localisation ambiguity function. 
This paper presents a unified framework to jointly characterise resolution and aliasing in NF localisation and study the trade-off between the two. Leveraging the concept of local chirp spatial frequency, we derive analytical expressions linking array geometry and sampling density to the spatial bandwidth of the received field. We introduce two geometric tools--Critical Antenna Elements (CAEs) and the Non-Contributive Zone (NCZ)--to intuitively identify how individual antennas contribute to resolution and/or aliasing. 
Our analysis reveals that resolution and aliasing are not always strictly coupled, e.g., increasing the array aperture can improve resolution without necessarily aggravating aliasing. These results provide practical guidelines for designing NF arrays that optimally balance resolution and aliasing, supporting efficient XL-MIMO deployment.

\end{abstract}

\begin{IEEEkeywords}
XL-MIMO, Near-Field, localisation, resolution, aliasing, spatial local frequency, spatial bandwidth, chirp. 
\end{IEEEkeywords}

%%%%%%%%%%% Introduction %%%%%%%%%%%

\section{Introduction}
\label{sec:introduction}

Multiple-Input Multiple-Output (MIMO) systems have been widely used in modern wireless communications thanks to their ability to enhance spectral efficiency, reliability, and multi-user support \cite{van_chien_massive_2017, bjornson_designing_2014, paulraj_overview_2004}. 
In this context, there are noticeable trends towards increasing both the array size and the carrier frequency, leading to Massive MIMO and Extremely Large-scale MIMO (XL-MIMO) systems \cite{lu_tutorial_2024,liu_near-field_2023,chen_6g_2024}. These trends extend the Fraunhofer distance--the conventional Far-Field (FF) limit which scales with the square of the array aperture and the carrier frequency--thereby enlarging the Near-Field (NF) region \cite{guerra_near-field_2021, ye_extremely_2024}. 
Unlike in the FF region, where wavefronts can be accurately approximated as planar, the NF region requires accounting for the Spherical Wavefront (SW) propagation. The latter introduces a range-dependent degree of freedom, complementing the traditional angular information available in the FF, thereby enabling source localisation capability. 

In the NF region, the SW propagation induces a non-linear dependence of both phase and amplitude on the exact antenna–source distance, so that each antenna measures a distinct response. As a result, the array effectively captures a wider spatial frequency spectrum that depends on both range and angle, which can be exploited to achieve improved spatial resolution \cite{kosasih_spatial_2025}. The NF resolution is thus inherently linked to the spatial bandwidth of the received field--that is, to the range of spatial frequencies captured across the array aperture. Larger apertures yield wider spatial bandwidths and, consequently, sharpens the localisation ambiguity function \cite{wachowiak_approximation_2025}.

Increasing the array size and carrier frequency is desirable to enhance the resolution, but it implies a higher number of closely spaced antennas to meet the half-wavelength inter-element spacing criterion, required to avoid grating lobes \cite{li_sparse_2025}. Implementing such dense arrays may become impractical due to the resulting complexity and hardware limitation.
Alternative array topologies, such as sparse or distributed arrays \cite{li_sparse_2025, ye_extremely_2024}, reduce the number of antennas by increasing inter-element spacing but induce sub-Nyquist spatial sampling, resulting in aliasing artefacts in the localisation ambiguity function \cite{spawc}. 
In the NF, these artefacts can manifest as multiple peaks, which may be misinterpreted as potential source locations, thereby creating additional ambiguities in the localisation process.

These results lead us to an apparent conflict: for a constrained number of antennas, an improved resolution requires a larger aperture but increases the risk of aliasing owing to higher antenna spacing.
Conversely, mitigating aliasing artefacts requires a denser array, which limits the aperture and thus the achievable resolution.
In FF scenarios, this trade-off has been discussed in \cite{wang_spatial-sampling-based_2025} for Direction-of-Arrival (DoA) estimation.
However, this analysis is limited to linear arrays and only accounts for the inter-element spacing. 
In the NF context, previous studies concentrated their effort on treating these two effects separately, without accounting for their potentially complex interplay.

\medskip

More precisely, the authors in \cite{wachowiak_approximation_2025,lu_tutorial_2024, lei_near-field_2024} related the achievable \emph{resolution} of the localisation ambiguity function to the array aperture. 
These works restricted their focus to dense arrays. Hence, they did not include the impact of spatial sampling that can cause aliasing.

On the other hand, the impact of \emph{aliasing} on the localisation ambiguity function in the NF regime has been thoroughly investigated in \cite{spawc, npj, camsap}.
These studies exploited a chirp-based modelling of steering signals to compactly represent their spectral content through the set of \textit{local spatial frequency} they capture. 
This concept had previously been used to characterise the Degrees of Freedom (DoF) achievable in NF communications with XL-MIMO \cite{ding_degrees_2022, ding_spatial_2024, pan_analysis_2025,kosasih_spatial_2025}.
On this basis, the theoretical framework we recently formalised \cite{npj} and exploited \cite{camsap} provides a systematic methodology to describe the geometry of the NF aliasing artefacts in the ambiguity function.
These contributions resulted in analytical insights into the spatial sampling requirements at the receive array for NF point source uplink localisation. 
However, these formulations remain limited to one-dimensional arrays, and these findings lack practical interpretability with regards to structural system parameters. 
Additionally, they do not address the achievable spatial resolution nor the fundamental trade-off between resolution and aliasing. 
These are the gaps we intend to bridge in the current paper.

Specifically, building upon the framework in \cite{npj}, the current paper introduces new tools to jointly characterise aliasing and resolution in NF localisation scenarios.
This extension enables a comparison of array configurations that underscore the trade-off between the two concepts.
It further reveals decoupling conditions where a larger aperture, improving resolution, can be reached without strengthening the presence of aliasing artefacts, thereby providing deeper insight into the design of NF antenna arrays.
To the best of the authors' knowledge, no prior work has jointly investigated the resolution-aliasing trade-off in NF localisation scenarios, considering a comprehensive set of system parameters and array configurations.

\subsection{Contributions}

Based on the aforementioned works, the contributions of this paper are summarised as follows.

\begin{itemize}
    \item \textbf{Resolution-aliasing trade-off}: The chirp framework is leveraged to extend the aliasing analysis of \cite{npj} to broader configurations (e.g., multidimensional arrays) and to jointly integrate the notions of resolution and aliasing.
    This extended framework enables us to characterise both quantities and to explicitly analyse their trade-off.
    \item \textbf{Geometric insights}: We introduce the concepts of Critical Antenna Elements (CAEs) and Non-Contributive Zone (NCZ), explicitly connecting the array geometry to the above trade-off. These tools highlight which antennas in the array generate aliasing artefacts and formalise locations where adding an antenna leads to a resolution improvement.
    \item \textbf{Design guidelines}: We provide an explicit characterisation of how system parameters (array geometry, spacing, relative source position) govern the resolution-aliasing trade-off, providing recommendations into designing NF antenna arrays that balance resolution and aliasing considerations. 
\end{itemize}

\subsection{Paper Structure}

The remainder of this paper is structured as follows. First, \cref{sec:system_model} introduces the system model. \cref{sec:ambiguity_function_analysis} analyses the ambiguity function, the impact of spatial sampling, the extension of the Aliasing-Free Region (AFR) definition, and the achievable resolution. Then, \cref{sec:chirp} derives closed-form expressions for the non-aliasing conditions and the spatial bandwidth. In \cref{sec:geometrical_tools}, two geometrical tools are introduced: the Critical Antenna Elements and the Non-Contributive Zone. Finally, \cref{sec:point_scatterer}  and \cref{sec:polar_arrays} characterise the resolution-aliasing trade-off for rectangular and circular arrays, respectively, before concluding in \cref{sec:conclusion}.

\subsection{Notations}

Scalars are written in regular font (e.g., $x$), vectors in bold lowercase (e.g., $\boldsymbol{x}$), and sets in calligraphic uppercase (e.g., $\mathcal{X}$). The boundary of a set $X$ is denoted by $\partial X$. 
Strict (nominal) quantities are denoted with an overbar (e.g., $\bar{K}$), whereas their chirp-based representations are written in roman font (e.g., $\mathrm{K}$).
For any vector $\boldsymbol{x}$, $\|\boldsymbol{x}\|$ and $\angle \boldsymbol{x}$ denote its Euclidean norm and phase, and $x_i$ its $i$-th component. The operators $(\cdot)^*$ and $(\cdot)^T$ denote complex conjugation and transposition, $j=\sqrt{-1}$ is the imaginary unit, “$\cdot$” denotes the dot product, and “$\triangleq$” indicates a definition.

%%%%%%%%%%% System Model %%%%%%%%%%%

\section{System Model}
\label{sec:system_model}

\begin{figure}
    \centering
    \includegraphics[width=0.5\linewidth, trim=140 300 150 100, clip]{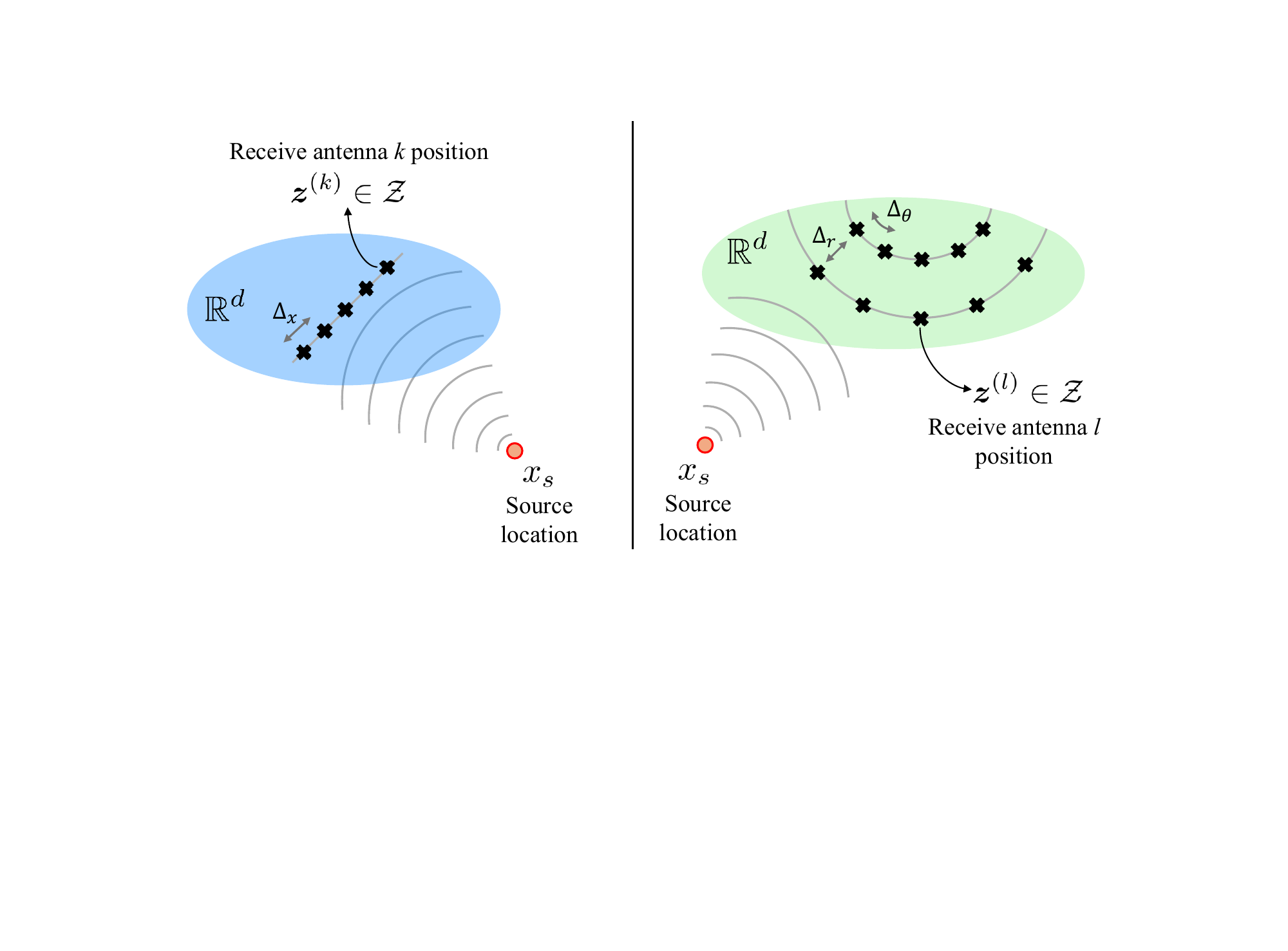}
    \caption{Configuration with a source location $\boldsymbol{x}_s$, receive antenna set $\mathcal{Z}$, in a near-field configuration.}
    \label{fig:bistatic_system_model}
\end{figure}

As illustrated in Figure~\ref{fig:bistatic_system_model}, we consider an uplink scenario where a signal transmitted by a static source positioned at \( \boldsymbol{x}_s \) is captured by a set of \( N \) receiving antennas. The receive array is defined as  
\begin{equation}
    \mathcal{Z} \triangleq \{ \boldsymbol{z}^{(k)} \}_{k=1}^{N} \subset \mathbb{R}^{d},
\end{equation}
where \( \boldsymbol{z} \) denotes the coordinates of the antenna positions and \( d \in \{1, 2, 3\} \) indicates the array dimensionality: \( d=1 \) for linear, \( d=2 \) for planar, and \( d=3 \) for volumetric arrays.
The antennas are assumed to be uniformly spaced along each spatial axis, with inter-element spacing \( \Delta_{i} \), where \( i \) indicates the spatial dimension. In all generality, the antenna coordinates $\boldsymbol{z}$ can be expressed in different coordinate systems, such as Cartesian or spherical, i.e. $i \in \{x, y, z\}$ or $i \in \{r, \theta, \phi\}$.

Following the physical optics approximation \cite{gutierrez-meana_high_2011}, the noise-free signal $s_R$ at a receive antenna located at $\boldsymbol{z}$ from a source at $\boldsymbol{x}_s$ can be expressed as 
\begin{equation}
    s_R(\boldsymbol{z}, t;\boldsymbol{x}_s) = h(\boldsymbol{z}; \boldsymbol{x}_s) \, \, s\left(t - \frac{\| \boldsymbol{x}_s - \boldsymbol{z} \|}{c}\right),
\label{eq:signal}
\end{equation}
with 
\begin{equation}
    h(\boldsymbol{z}, \boldsymbol{x}) \triangleq \frac{e^{-jk \| \boldsymbol{x} - \boldsymbol{z} \|}}{\| \boldsymbol{x} - \boldsymbol{z} \|}, 
\end{equation}
where \( s(t) \) is the transmitted baseband signal, \( k~=~\frac{2\pi}{\lambda} \) is the wavenumber with wavelength $\lambda$ and $c$ is the speed of light. The term \( h(\boldsymbol{z}; \boldsymbol{x}_s) \) models the propagation channel between the source and the receive antenna, which includes both the free-space path loss and the phase shift due to propagation delay. In this paper,  we assume narrowband signals $s(t)$. This assumption allows us to focus on the wave propagation characteristics $h(\boldsymbol{z}, \boldsymbol{x})$ of the received signal without considering the temporal aspects. Note that the extension to wideband temporal signals is left for future work.

%%%%%%%%%%% Ambiguity Function Analysis %%%%%%%%%%%

\section{Ambiguity Function Analysis}
\label{sec:ambiguity_function_analysis}

The ambiguity function ($AF$) is a fundamental tool that characterises the performance and limitations of a sensing system. The AF is commonly used to theoretically assess the ability of a system to distinguish between different spatial locations, revealing both the achievable resolution and the potential ambiguities that may arise due to aliasing artefacts when when estimating the source location $\boldsymbol{x}_s$. This approach enables to derive fundamental insights into the relationship between array geometry, spatial sampling, and localisation performance.
% The AF enables to theoretically assess the ability to ... 
% is commonly used to assess the theoretical ability to ... 

%The ambiguity function ($AF$) plays a key role in estimating the source location \( {\boldsymbol{x}}_s \) from the received signals \( s_R(\boldsymbol{z};\boldsymbol{x}_s) \). %The objective is to identify and locate the source \( {\boldsymbol{x}}_s \), while excluding any artefactual or other responses.  
%Following a Maximum Likelihood (ML) estimation approach \cite{trees_detection_2004}, the $AF$ can be formulated as the outcome of a $\boldsymbol{z}$-matched filtering operation applied to a \textit{tentative} location \( \tilde{\boldsymbol{x}}_s \). This process maximises the Signal-to-Noise Ratio (SNR) in Additive White Gaussian Noise (AWGN) conditions and reflects the likelihood of selecting \( \tilde{\boldsymbol{x}}_s \) as the source location. 
%However, in practical scenarios, aliasing may introduce artefacts in the $AF$, thereby creating ambiguities in source localisation. It is therefore crucial to characterise the $AF$ regions that are impacted by aliasing, distinguish them from artefact-free areas, and investigate how system parameters influence these areas.

%\subsection{Ambiguity Function Construction}

Assuming initially space-continuous antenna arrays, the continuous ambiguity function $AF_c$ matched at a tested location \( \tilde{\boldsymbol{x}}_s \) (e.g. \( \tilde{\boldsymbol{x}}_s = (\tilde{x}_s, \tilde{y}_s, \tilde{z}_s) \in \mathbb{R}^{3} \)  ) is thus expressed as
\begin{align}
    AF_c(\tilde{\boldsymbol{x}}_s ; \boldsymbol{x}_s) 
    &= \int_{\mathcal{Z}} 
    g(\boldsymbol{z} ; \tilde{\boldsymbol{x}}_s, \boldsymbol{x}_s)  
    \, \mathrm{d}\boldsymbol{z}, 
    \label{eq:reconstructed_image_continuous}
\end{align}
where 
\begin{equation}
    g(\boldsymbol{z} ; \tilde{\boldsymbol{x}}_s, \boldsymbol{x}_s) \triangleq  
    h(\boldsymbol{z}, {\boldsymbol{x}}_s) \, h^{*}(\boldsymbol{z}, \tilde{\boldsymbol{x}}_s). 
\end{equation}   

In the following, \cref{sec:discrete_arrays_and_non_aliasing_conditions} discusses the impact of spatial sampling on the $AF$ and derives the non-aliasing conditions while \cref{sec:localisation_resolution} explores the achievable localisation resolution and qualitatively observes the trade-off with aliasing.

\subsection{Discrete Antenna Arrays and Non-Aliasing Conditions}
\label{sec:discrete_arrays_and_non_aliasing_conditions}

In practical scenarios, the antenna arrays are space-discrete, and the integral in \eqref{eq:reconstructed_image_continuous} becomes a summation over the antenna positions \( \boldsymbol{z}^{(k)} \in \mathcal{Z} \), \( k = 1, \ldots, N \). The discrete $AF$ is thus given by
\begin{align}
    AF(\tilde{\boldsymbol{x}}_s ; \boldsymbol{x}_s) 
    &= \sum_{k=1}^{N} 
    g(\boldsymbol{z}^{(k)} ; \tilde{\boldsymbol{x}}_s, \boldsymbol{x}_s). 
    \label{eq:reconstructed_image_discrete}
\end{align}
However, this spatial sampling may lead to aliasing artefacts in the $AF$. In FF, for instance, when the inter-antenna spacing exceeds half a wavelength (i.e., \( \Delta_{i} > \lambda/2 \text{ for some axis } i \)), spatial aliasing arises \cite{jerri_shannon_1977}. 
Such aliasing produces repeated patterns that degrade the $AF$. In particular, these replicas may create spurious peaks at locations where no source is present, thereby introducing additional ambiguities in the $AF$.

\begin{figure}
    \centering
    \makebox[0.4\linewidth][c]{\small $AF(\tilde{\boldsymbol{x}}_s ; \boldsymbol{x}_s)$}
    \makebox[0.2\linewidth][c]{\small \hspace{-6.5em} Zoom}
    %\rotatebox{90}{\makebox[0.3\linewidth][c]{\small \hspace{1cm} $\Delta_x = 4\lambda$ \hspace{1cm} $\Delta_x = 4 \lambda$ \hspace{1cm} $\Delta_x = \lambda/2$}}
    \includegraphics[width=0.45\textwidth, trim=10 60 450 10, clip]{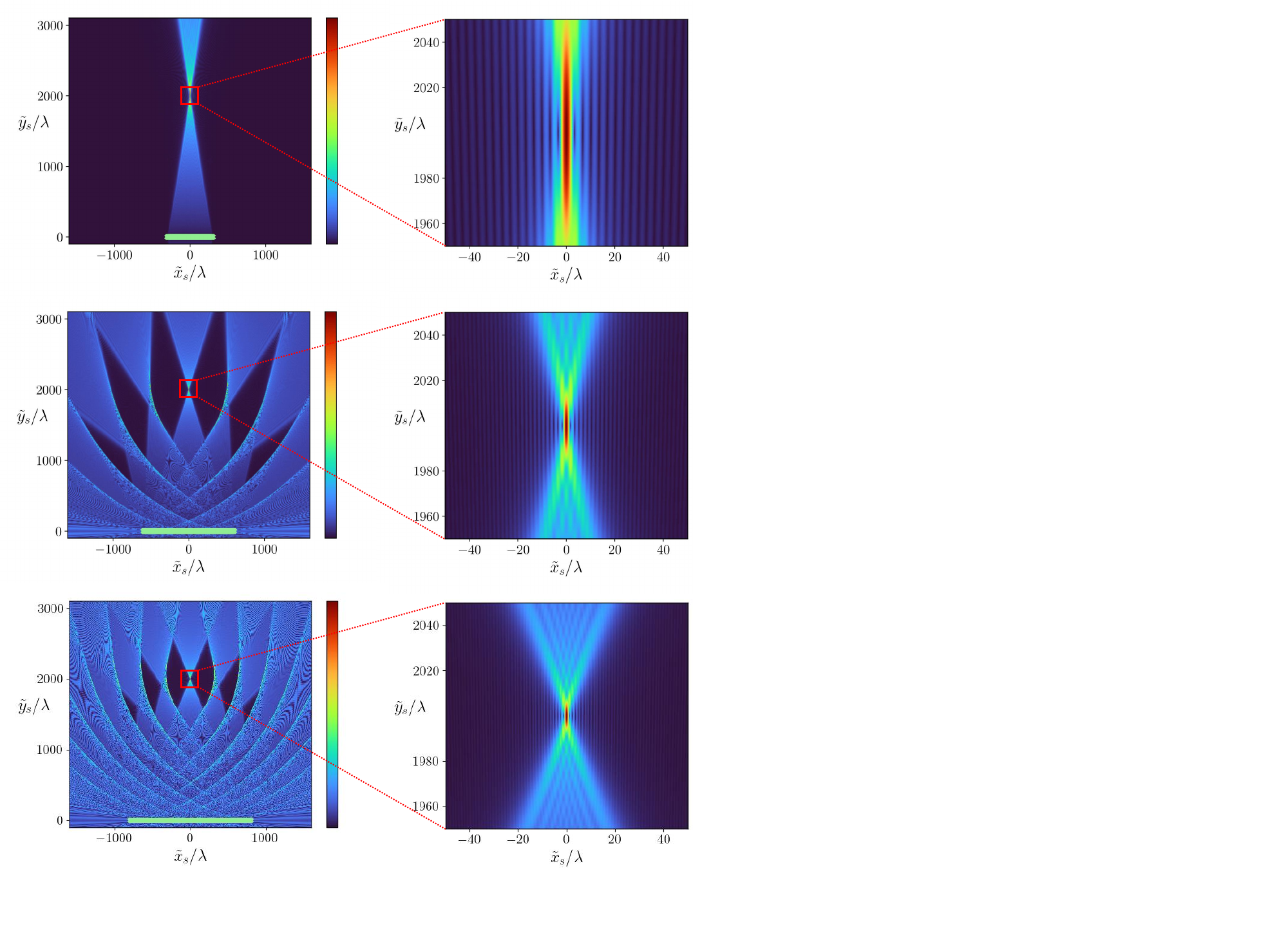} \\
    \includegraphics[width=0.12\linewidth, trim=0 50 405 40, clip]{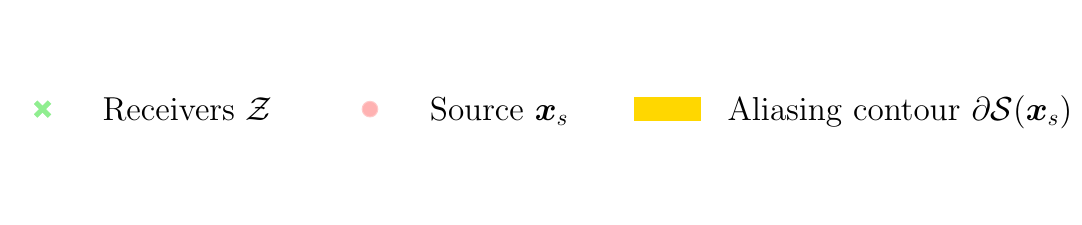} 
    \caption{Ambiguity functions of a point source at \( \boldsymbol{x}_s = (0, 2000) \lambda \) for linear arrays of $N=1200$ antennas over a $600\lambda$ aperture (first row), $N=300$ antennas over a $1200\lambda$ aperture (second row), $N=400$ antennas over a $1600\lambda$ aperture (third row) . }
    \label{fig:mono_aliased_vs_non_aliased}
\end{figure}

\cref{fig:mono_aliased_vs_non_aliased} illustrates this phenomenon for source at \( \boldsymbol{x}_s = (0, 2000)\lambda \). From top to bottom, the number of antennas reduces from \( N=1200\) to \( N=300 \) and $N=400$, while the aperture increases from \( 600\lambda \) to \( 1200\lambda \) and $1600\lambda$. Consequently, the inter-element spacing grows from \( \Delta_x = \lambda/2 \) (first row) to \( \Delta_x = 4\lambda > \lambda/2 \) (second and third rows). 
As a result, the $AF$ transitions from an artefact-free response (top figure) to ones exhibiting significant aliasing artefacts (bottom figures). Notably, the presence of aliasing artefacts complicates the localisation process by creating additional high-spurious responses in the $AF$. 

Due to aliasing, the $AF$ exhibits additional repeated peaks along the angular direction, rather than a single one as in the non-aliased case. This is consistent with the conventional directional cosine rule observed in FF conditions. In contrast, for NF configurations, the repeated patterns follow a curved conical shape due to the ability of NF communication to provide accurate spatial localisation.  

To establish the conditions under which a tentative location $\tilde{\boldsymbol{x}}_s$ remains free from aliasing in the $AF$, we now examine the propagation term $g(\boldsymbol{z} ; \tilde{\boldsymbol{x}}_s, \boldsymbol{x}_s)$ appearing in the integrand of \eqref{eq:reconstructed_image_continuous}. 
For a spatial wavevector \( \bar{\boldsymbol{k}}^{g} \), the Fourier transform $G$ of this term is given by 
\begin{equation}
    G(\bar{\boldsymbol{k}}^{g} ; \tilde{\boldsymbol{x}}_s, \boldsymbol{x}_s) = \sum_{k=1}^{N} 
    g(\boldsymbol{z}^{(k)} ; \tilde{\boldsymbol{x}}_s, \boldsymbol{x}_s) e^{-j \bar{\boldsymbol{k}}^{g} \cdot \, \boldsymbol{z}^{(k)}}.  %\mathrm{d}\boldsymbol{z}_. 
\end{equation}

A key point is that the spectrum \( G \) evaluated at \( \bar{\boldsymbol{k}}^{g}~=~\boldsymbol{0} \) corresponds to the integral of \( g(\boldsymbol{x} ; \tilde{\boldsymbol{x}}_s, \boldsymbol{x}_s) \) over the array aperture and thus to the ambiguity function $AF(\tilde{\boldsymbol{x}}_s, \boldsymbol{x}_s)$, i.e.,
\begin{equation}
    G(\boldsymbol{0} ; \tilde{\boldsymbol{x}}_s, \boldsymbol{x}_s) = \sum_{k=1}^{N} 
    g(\boldsymbol{z}^{(k)} ; \tilde{\boldsymbol{x}}_s, \boldsymbol{x}_s) = AF(\tilde{\boldsymbol{x}}_s, \boldsymbol{x}_s).
\end{equation} 
As illustrated in Figure~\ref{fig:spectral_folding} for an arbitrary triangular spectrum $G$ along the $i$-th dimension (shown in \cref{fig:fig1}), discretising \( \mathcal{Z} \) with a spacing \( \Delta_{i} \) replicates the spectrum \( G \) along this same dimension with a period \( {2\pi}/{\Delta_{i}} \)  (see \cref{fig:fig2}) \cite{dudgeon_multidimensional_1984}. These periodic replicas induce aliasing (spectral folding) in the $AF$ at the tentative location $\tilde{\boldsymbol{x}}_s$ whenever they overlap the component at \( \bar{\boldsymbol{k}}^{g} = \boldsymbol{0} \) (as shown in \cref{fig:fig4}), as this component corresponds $AF(\tilde{\boldsymbol{x}}_s; \boldsymbol{x}_s) $.
In this setting, the conventional Nyquist sampling criterion \cite{dudgeon_multidimensional_1984, shannon_communication_1949} which requires all replicas to be disjoint, must be reconsidered: relaxing this criterion by a factor two can still yield accurate reconstruction (see \cref{fig:fig3}), as aliasing away from the origin in the spatial-frequency domain does not degrade the $AF$. Accordingly, the condition ensuring that $\tilde{\boldsymbol{x}}_s$ remains aliasing-free is\footnote{As stated, this condition is less restrictive than the Nyquist criterion, which requires non-overlapping spectral replicas, i.e. $2\bar{K}_{i}(\tilde{\boldsymbol{x}}_s, \boldsymbol{x}_s) < 2\pi/\Delta_{i}$.}
\begin{equation}
    \bar{K}_{i}(\tilde{\boldsymbol{x}}_s, \boldsymbol{x}_s) \le 2\pi/\Delta_{i} \quad \forall i, 
\label{eq:aliasing_conditions}
\end{equation}
where \( \bar{K}_{i}(\tilde{\boldsymbol{x}}_s, \boldsymbol{x}_s) \) is the maximum spatial frequency of the spectrum \( G \) along the \( i \)-th dimension strictly defined as 
\begin{equation}
    \bar{K}_{i}(\tilde{\boldsymbol{x}}_s, \boldsymbol{x}_s) 
    \triangleq 
    \max \bigl\{\, |\bar{k}_i^{g}| \; : \; 
    G(\bar{\boldsymbol{k}}^{g} ; \tilde{\boldsymbol{x}}_s, \boldsymbol{x}_s) \neq 0 \, \bigr\},
\end{equation}
with \( \bar{k}_i^{g} \triangleq (\bar{\boldsymbol{k}}^{g})_i \) being the \( i \)-th component of the spatial wavevector~\( \bar{\boldsymbol{k}}^{g} \). Conversely, if \eqref{eq:aliasing_conditions} is violated, aliasing artefacts will appear in the $AF$ at \( \tilde{\boldsymbol{x}}_s \).

\begin{figure}
    \centering
    % Figure 1
    \begin{subfigure}[t]{0.5\linewidth}
        \centering
        \includegraphics[width=0.65\linewidth, trim=190 390 190 140, clip]{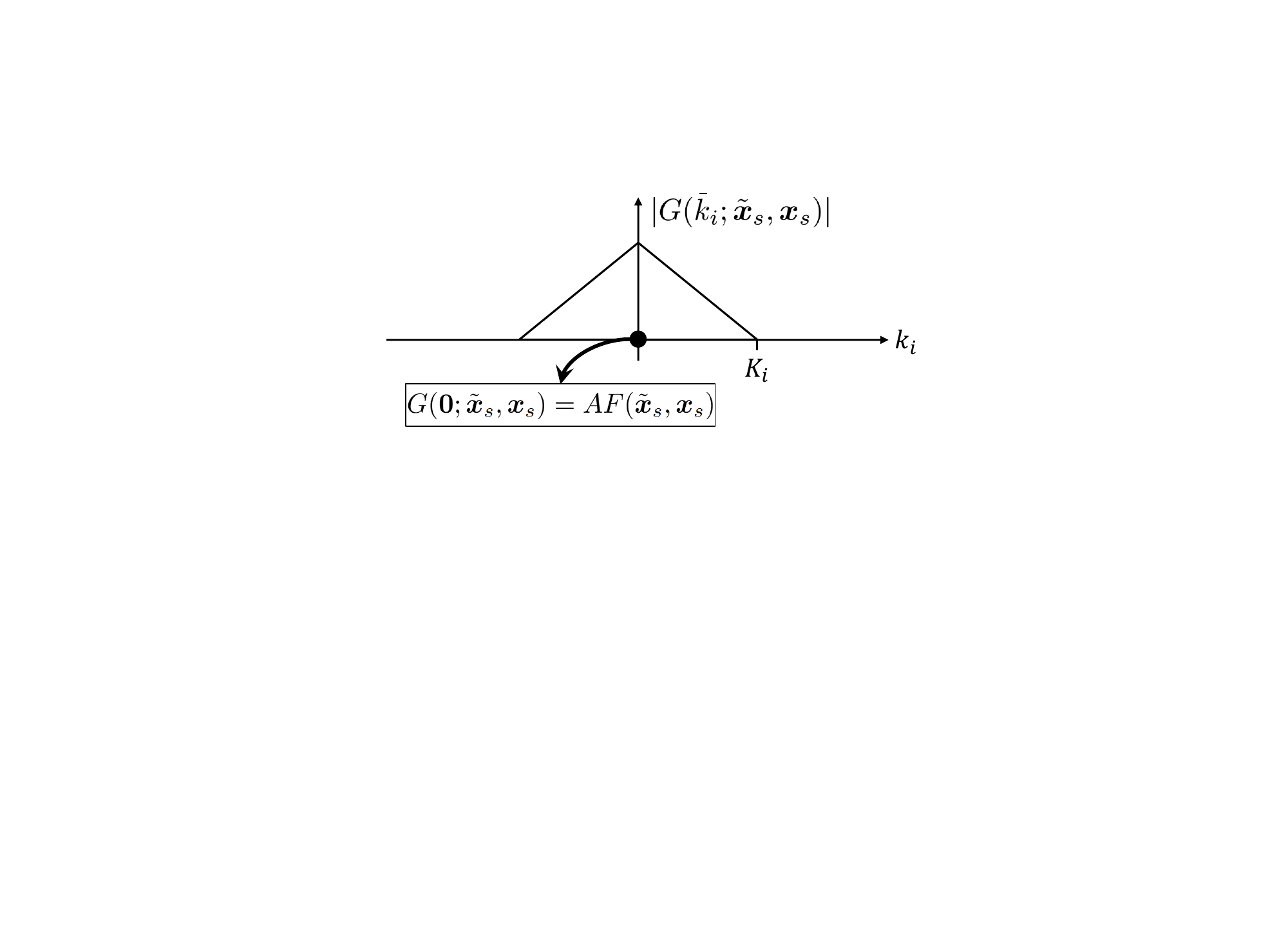}
        \caption{Magnitude of arbitratry spectrum $G(\bar{\boldsymbol{k}}^{g} ; \tilde{\boldsymbol{x}}_s, \boldsymbol{x}_s)$ along the \( i \)-th dimension, i.e. $G(\bar{k}_i^{g} ; \tilde{\boldsymbol{x}}_s, \boldsymbol{x}_s)$, with maximum spatial frequency \( \bar{K}_i \). }
        \label{fig:fig1}
    \end{subfigure}
    
    \vspace{0.5em}
    % Figure 2
    \begin{subfigure}[t]{0.5\linewidth}
        \centering
        \includegraphics[width=\linewidth, trim=10 550 10 10, clip]{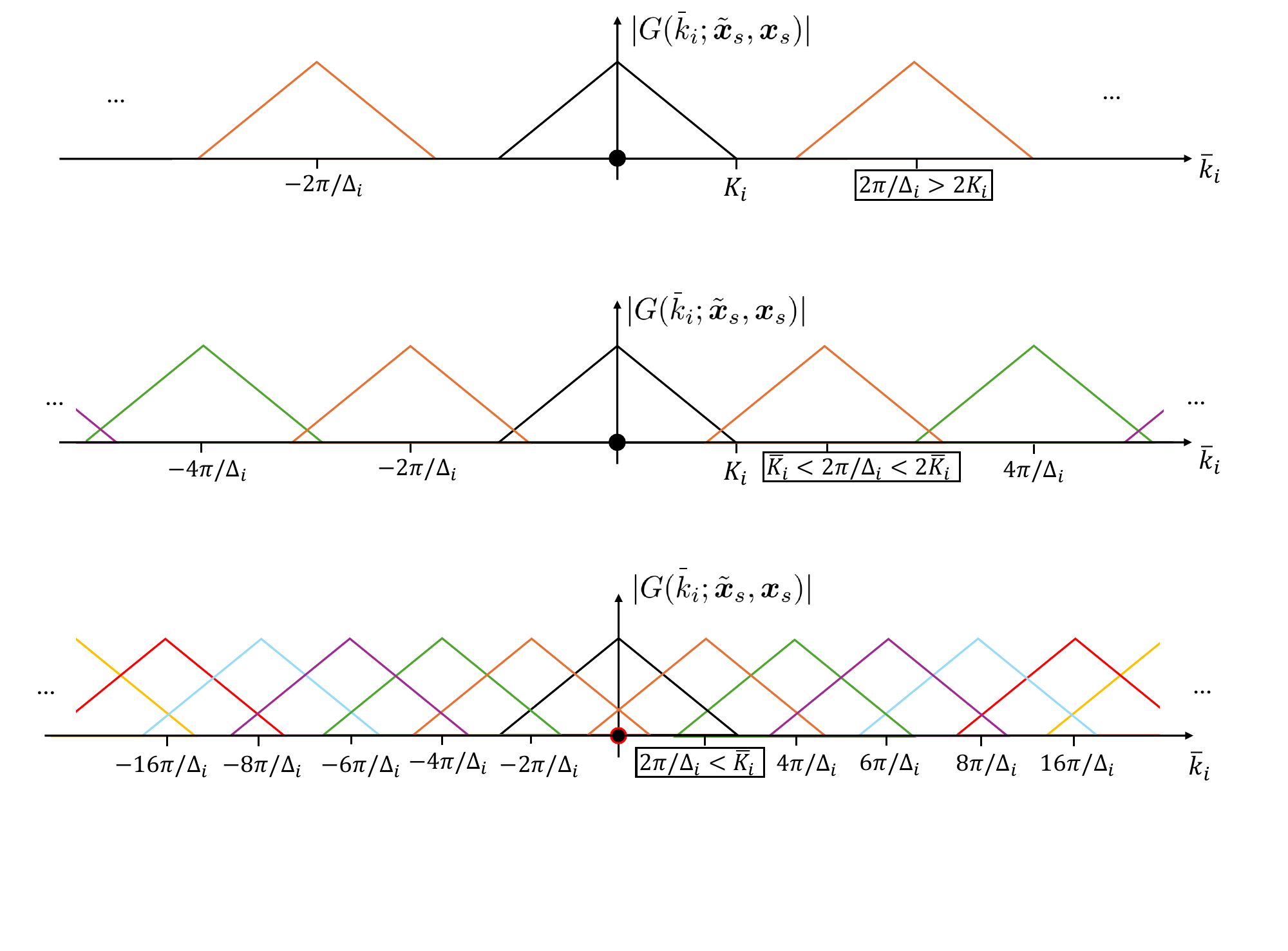}
        \caption{$\Delta_i$ is sufficiently small to avoid any overlap between spectral replicas.}
        \label{fig:fig2}
    \end{subfigure}
    
    \vspace{0.5em}
    % Figure 3
    \begin{subfigure}[t]{0.5\linewidth}
        \centering
        \includegraphics[width=\linewidth, trim=10 345 10 190, clip]{Figures_needed/Nyquist_Spectrum_analysis_bis_bis_new_notation.pdf}
        \caption{Nyquist sampling criterion not respected: replicas overlap. However, no aliasing occurs in the $AF$ since the $\bar{\boldsymbol{k}}^{g}=\boldsymbol{0}$ component is preserved.}
        \label{fig:fig3}
    \end{subfigure}
    
    \vspace{0.5em}
    % Figure 4
    \begin{subfigure}[t]{0.5\linewidth}
        \centering
        \includegraphics[width=\linewidth, trim=10 125 10 400, clip]{Figures_needed/Nyquist_Spectrum_analysis_bis_bis_new_notation.pdf}
        \caption{Overlap will cause aliasing in the $AF$ since $\bar{\boldsymbol{k}}^{g}=\boldsymbol{0}$ will no longer be preserved.}
        \label{fig:fig4}
    \end{subfigure}
    
    \caption{Illustration of the spectral folding phenomenon and its impact on aliasing in the $AF$.}
    \label{fig:spectral_folding}
\end{figure}

\subsection{Localisation Resolution}
\label{sec:localisation_resolution}

The resolution of a system can be defined as the minimum distance between two point sources that can be distinguished. It is typically measured as the Full Width at Half Maximum (FWHM), i.e. the distance between two points for which the amplitude falls to 50\% (or \(-3\,\mathrm{dB}\)) of the peak value \cite{ahmed2014electronic}. In the $AF$, this translates to the sharpness of the peak at the true source location \( \boldsymbol{x}_s \). A sharper peak indicates a higher resolution, enabling more accurate source localisation.

The resolution along axis $i$, denoted $\delta x_i(\boldsymbol{x}_s)$, at a source point $\boldsymbol{x}_s$  is linked to the spatial bandwidth \( \bar{B}_i(\boldsymbol{x}_s, \mathcal{Z}) \) along that axis of the received field $h(\boldsymbol{z}, \boldsymbol{x}_s)$ captured by the array $\mathcal{Z}$. %from the source located in $\boldsymbol{x}_s$. 
This relationship can be expressed as\footnote{These derivations neglect the amplitude distribution in the spatial frequency domain. %which can affect the shape of the focused point \cite{ahmed2014electronic}. 
Nevertheless, this simplification remains sufficiently accurate to capture the parameter dependencies \cite{ahmed2014electronic}.} \cite{ahmed2014electronic}
\begin{equation}
    \delta x_i(\boldsymbol{x}_s) = \frac{ 2\pi}{\bar{B}_i(\boldsymbol{x}_s, \mathcal{Z})}.  %\quad i \in \{x, y, z\},   
\label{eq:resolution}
\end{equation}
Thus, larger bandwidth leads to finer resolution, as the system captures more detailed spatial frequency information from the source, while a smaller bandwidth results in coarser resolution. 

Formally, $\bar{B}_i(\boldsymbol{x}_s, \mathcal{Z})$ is thus linked to the support of 
\begin{equation}
    H(\bar{\boldsymbol{k}}^{h} ; \boldsymbol{x}_s, \mathcal{Z}) \triangleq 
    \int_{\mathcal{Z}} h(\boldsymbol{z}, \boldsymbol{x}_s) \, e^{-j \bar{\boldsymbol{k}}^{h} \cdot \boldsymbol{z}} \, \mathrm{d}\boldsymbol{z},  
\label{eq:spatial_spectrum}
\end{equation}
i.e. the spatial spectrum of the received field $h(\boldsymbol{z}, \boldsymbol{x}_s)$, and can be strictly defined as the spread of spatial frequencies $\bar{\boldsymbol{k}}^{h}$ captured across the array $\mathcal{Z}$ along axis \( i \):
\begin{equation}
    \bar{B}_i(\boldsymbol{x}_s, \mathcal{Z}) \triangleq \bar{k}_{i,\max} (\boldsymbol{x}_s, \mathcal{Z}) - \bar{k}_{i,\min}(\boldsymbol{x}_s, \mathcal{Z}),
\label{eq:bandwidth_definition}
\end{equation}
where
\begin{equation}
\left\{
\begin{aligned}
    \bar{k}_{i,\max}(\boldsymbol{x}_s, \mathcal{Z}) &\triangleq 
    \max \bigl\{\, \bar{k}_{i}^{h} :
    H(\boldsymbol{k}^{h}; \boldsymbol{x}_s, \mathcal{Z}) \neq 0  \bigr\}, \\[2mm]
    \bar{k}_{i,\min}(\boldsymbol{x}_s, \mathcal{Z}) &\triangleq 
    \min \bigl\{\, \bar{k}_{i}^{h} :
    H(\bar{\boldsymbol{k}}^{h}; \boldsymbol{x}_s, \mathcal{Z}) \neq 0  \bigr\}.
\end{aligned}
\right.
\label{eq:local_spatial_frequency_difference}
\end{equation}
Although the exact spectral characterisation given by \eqref{eq:bandwidth_definition}-\eqref{eq:local_spatial_frequency_difference} is theoretically rigorous, it is of limited practical use in the NF regime, since the corresponding spectrum generally exhibits unbounded support \cite{npj}.

In \cref{fig:mono_aliased_vs_non_aliased}, the zoomed-in views (right column) show that increasing the array aperture (from top to bottom) improves resolution (i.e., the peak becomes sharper) owing to the larger spatial bandwidth captured. However, this improvement comes at the cost of increased aliasing artefacts, even when the antenna spacing is kept constant (for instance, between the second and third rows of that figure). This example highlights the fundamental trade-off between resolution and aliasing in NF localisation, and the need to jointly account for both in array design.

%%%%%%%%%%% Chirp Framework %%%%%%%%%%%

\section{Chirp-based framework}
\label{sec:chirp}

This section extends the chirp-based framework in \cite{npj} to include the local spatial-frequency characterisation required for resolution analysis, as well as to the derivation of aliasing conditions for multidimensional arrays. The extended framework enables to derive closed-form expressions for the maximum and minimum spatial frequencies, $\bar{k}_{i, max}$ and $\bar{k}_{i, min}$, used to determine the bandwidth $\bar{B}_i$ captured at the receive array and for the maximum spatial frequency $\bar{K}_i$, respectively.

\subsection{Chirp-based Modelling of the Received Field}
\label{sec:resolution_chirp}

A chirp is typically defined as a signal whose frequency varies over time \cite{flandrin_time_2001}, i.e. 
\begin{equation}
    c(t) = A \, e^{j \phi_c(t)} %= A \, e^{j (2\pi f_0 t + \pi \alpha t^2)},
\end{equation}
where \( A \) is the amplitude and $\phi_c(t)$ is the time-varying phase function. The instantaneous frequency \( f(t) \) of the chirp signal can then be derived from the phase \( \phi_c(t) \) as 
\begin{equation}
    f(t) = \frac{1}{2\pi} \frac{\mathrm{d}\phi_c(t)}{\mathrm{d}t}. %= f_0 + \alpha t.
\end{equation}

In our setting, the received field \( h(\boldsymbol{z}; \boldsymbol{x}_s) \) in \eqref{eq:signal} can be interpreted as a spatial chirp when considered as a function of the antenna position \( \boldsymbol{z} \) for a fixed source location \( \boldsymbol{x}_s \). Its phase
\begin{equation}
    \angle h(\boldsymbol{z}; \boldsymbol{x}_s) = \phi_h(\boldsymbol{z}; \boldsymbol{x}_s) \triangleq -k \| \boldsymbol{x}_s - \boldsymbol{z} \|,
\end{equation} 
induces a spatially non-linear phase component that depends on the distance between the source and the antenna element.
Accordingly, the local spatial chirp frequency \( \boldsymbol{\mathrm{k}}^{h}(\boldsymbol{z}; \boldsymbol{x}_s) \) of the received field \( h(\boldsymbol{z}; \boldsymbol{x}_s) \) at a given antenna position \( \boldsymbol{z} \) can be expressed as the gradient of the phase with respect to the spatial coordinates \cite{chassande-mottin_stationary_1998, ding_degrees_2022,ding_spatial_2024}, i.e.,
\begin{equation}
    \boldsymbol{\mathrm{k}}^{h}(\boldsymbol{z}; \boldsymbol{x}_s) \triangleq \nabla_{\boldsymbol{z}} \phi_h(\boldsymbol{z}; \boldsymbol{x}_s) = k \frac{\boldsymbol{x}_s - \boldsymbol{z}}{\| \boldsymbol{x}_s - \boldsymbol{z} \|}.
\label{eq:local_spatial_frequency_h}
\end{equation}
Unlike the usual spatial frequency $\bar{\boldsymbol{k}}^{h}$, which is well defined only for plane-wave propagation, $\boldsymbol{\mathrm{k}}^{h}(\boldsymbol{z};\boldsymbol{x}_s)$ describes a position-dependent, i.e. local, spatial frequency induced by the non-linear phase of a spherical wavefront. It therefore generalises the notion of wave vector by capturing the local, chirp-like behaviour of the steering signal in the near field, at a given antenna position $\boldsymbol{z}.$ Naturally, in the FF, the phase is linear in $\boldsymbol{z}$, so that the local spatial frequency becomes unique and equal to $\bar{\boldsymbol{k}}^{h}$.
Note that the amplitude term $1/\| \boldsymbol{x}_s - \boldsymbol{z} \|$ of $h(\boldsymbol{z}; \boldsymbol{x}_s)$ is neglected when determining the local spatial frequency. As motivated in \cite{npj}, the amplitude contribution is negligible since the phase exhibits a much faster non-linear variation compared to the slowly varying amplitude term.

Equation \eqref{eq:local_spatial_frequency_h} shows that the local spatial frequency vector \( \boldsymbol{\mathrm{k}}^{h}(\boldsymbol{z}; \boldsymbol{x}_s) \) defined at a given antenna position $\boldsymbol{z}$, has a magnitude equal to \( k = 2 \pi/ \lambda \) and points from the antenna toward the source. %The directionality of this vector reflects the fact that the phase variation is most pronounced along the line connecting the source and the antenna element.
\cref{fig:spectrum_resolution} illustrates those local spatial frequencies \( {\boldsymbol{\mathrm{k}}}^{h}(\boldsymbol{z}; \boldsymbol{x}_s) \) (red arrows) at different antenna positions \( \boldsymbol{z} \in \mathcal{Z} \) for a source arbitrarily located at \( \boldsymbol{x}_s = (0,40)\lambda \). As can be seen, the local spatial frequency varies across the array, with different antenna positions experiencing different phase gradients depending on their relative position to the source. Collectively, these vectors characterise the support of the spectrum \( H(\boldsymbol{k}^{h} ; \tilde{\boldsymbol{x}}_s, \boldsymbol{x}_s) \). 
%This variation in local spatial frequency is a key characteristic of the near-field regime, where the wavefront curvature leads to non-uniform phase variations across the array. 
The grey zone denotes the Non-Contributive Zone (NCZ), detailed in \cref{sec:ncz}.

\begin{figure}
    \centering 
    \includegraphics[width=0.5\linewidth, trim=50 20 50 30, clip]{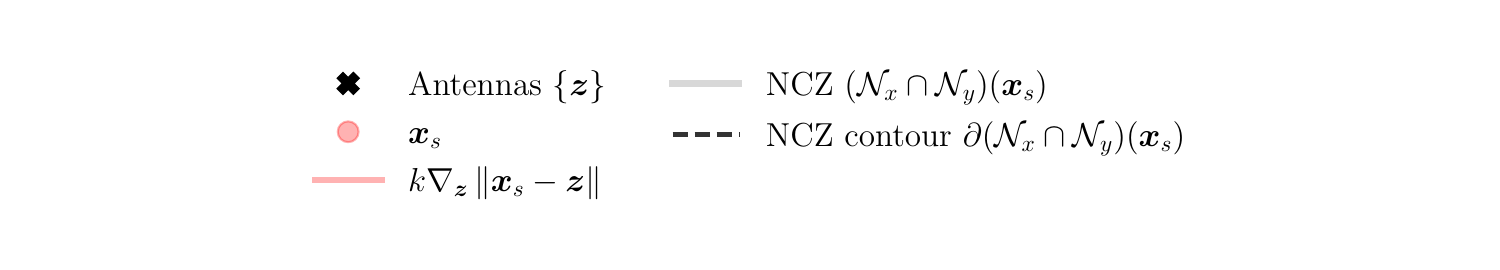} \\

    \includegraphics[width=0.35\linewidth, trim=10 0 0 5, clip]{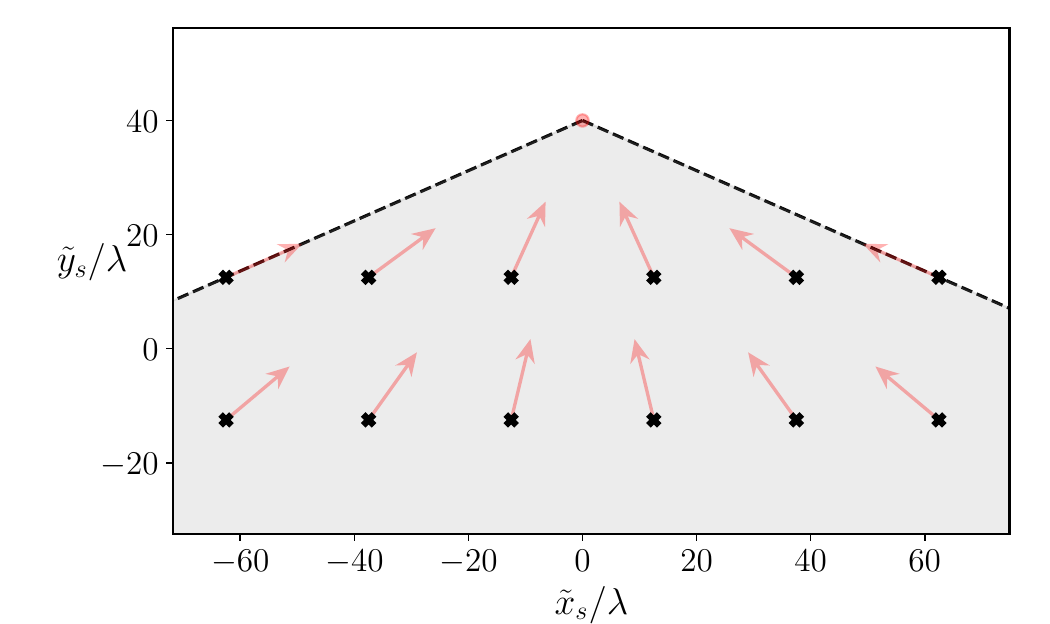}
    \caption{Local spatial frequencies \( \boldsymbol{\mathrm{k}}^{h}(\boldsymbol{z}; \boldsymbol{x}_s) \) of the received field \( h(\boldsymbol{z}; \boldsymbol{x}_s) \) at various antenna positions \( \boldsymbol{z} \) for a source at \( \boldsymbol{x}_s \).}
    \label{fig:spectrum_resolution}
\end{figure}

% Faire le lien avec la bande passante spatiale captée

Using this chirp interpretation, the maximum and minimum spatial frequencies \( \mathrm{k}_{i,\max}(\boldsymbol{x}_s, \mathcal{Z}) \) and \( \mathrm{k}_{i,\min}(\boldsymbol{x}_s, \mathcal{Z}) \) along axis~\( i \) follow from the extrema of the local chirp frequencies across the array \( \mathcal{Z} \) as 
\begin{equation}
\left\{
\begin{aligned}
    \mathrm{k}_{i,\max}(\boldsymbol{x}_s, \mathcal{Z}) &= \max_{\boldsymbol{z} \in \mathcal{Z}} \{ {\mathrm{k}}_{i}^{h}(\boldsymbol{z}; \boldsymbol{x}_s) \}, \\ %[2mm]
    \mathrm{k}_{i,\min}(\boldsymbol{x}_s, \mathcal{Z}) &= \min_{\boldsymbol{z} \in \mathcal{Z}} \{ {\mathrm{k}}_{i}^{h}(\boldsymbol{z}; \boldsymbol{x}_s) \}, 
\end{aligned}
\right.
\label{eq:ki_max_min_chirp}
\end{equation}
where $\mathrm{k}_{i}^{h}(\boldsymbol{z}; \boldsymbol{x}_s) = (\boldsymbol{\mathrm{k}}^{h}(\boldsymbol{z}; \boldsymbol{x}_s))_i$ and $ \boldsymbol{\mathrm{k}}^{h}(\boldsymbol{z}; \boldsymbol{x}_s)$ is computed as \eqref{eq:local_spatial_frequency_h}.
These expressions indicate that the maximum and minimum spatial frequencies along axis \( i \) can be determined by evaluating the local spatial chirp frequency \( \boldsymbol{\mathrm{k}}^{h}(\boldsymbol{z}; \boldsymbol{x}_s) \) at all antenna positions \( \boldsymbol{z} \) in the array \( \mathcal{Z} \) and selecting the extreme values. 
Substituting these expressions into \eqref{eq:bandwidth_definition} yields the chirp-based spatial bandwidth \( \mathrm{B}_i(\boldsymbol{x}_s, \mathcal{Z}) \), which directly determines the resolution via \eqref{eq:resolution}.

\cref{fig:chirp_spectrum_resolution} compares these chirp-derived extrema with the actual support of the spatial spectrum \( H(\bar{\boldsymbol{k}}^{h} ; \tilde{\boldsymbol{x}}_s, \boldsymbol{x}_s) \) for rectangular (left) and a circular (right) arrays. The source and array geometries are arbitrarily chosen. As can be seen, the maximum and minimum chirp frequencies (represented by the vertical and horizontal lines) closely match the actual support of the spectrum \( H(\boldsymbol{k}^{h} ; \tilde{\boldsymbol{x}}_s, \boldsymbol{x}_s) \) in both cases, supporting the relevance of the chirp model for characterising the spatial frequency content of the received field and, consequently, for predicting localisation resolution.

\begin{figure}
    \centering
    % 2x3 Grid with titles aligned with columns

    \includegraphics[width=0.2\linewidth, trim=0 70 20 70, clip]{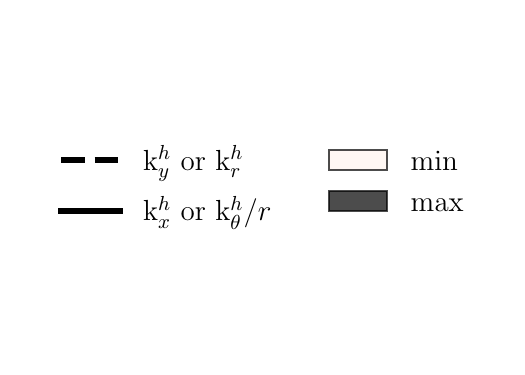}

    \begin{tabular}{cc}
        \tiny
        $\boldsymbol{x}_s = (0;10\lambda)$ & 
        \tiny
        ${\boldsymbol{x}}_s = (-10\lambda;10\lambda)$  \\

        \begin{subfigure}[t]{0.22\textwidth}
            \centering
            \includegraphics[width=\linewidth, trim=20 20 80 19, clip]{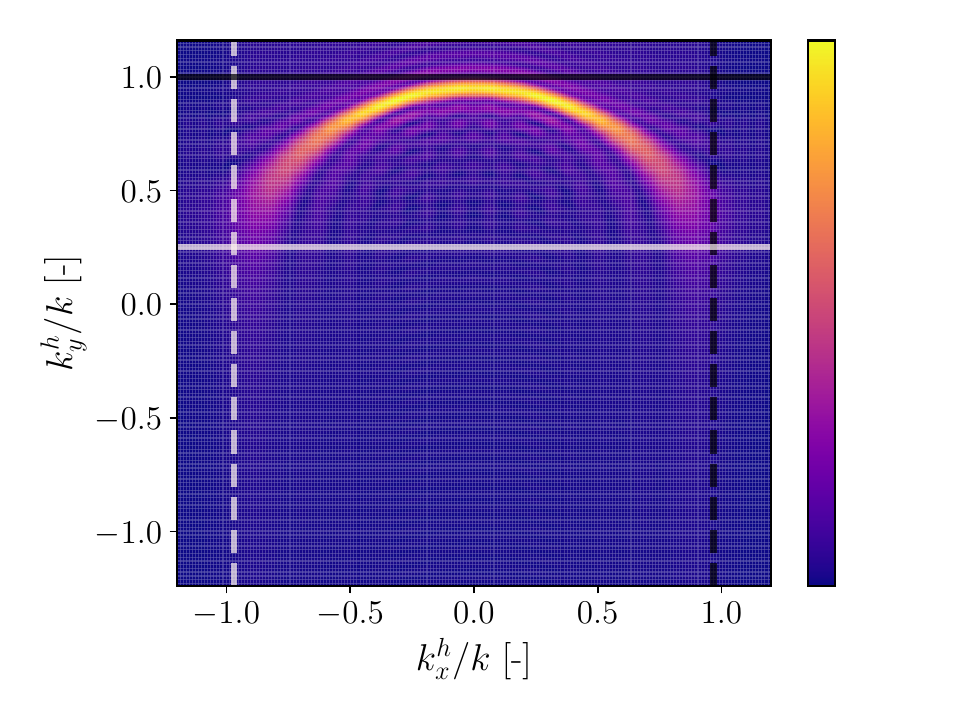}
            %\caption{Caption for figure 1.3}
            %\parbox{\linewidth}{\centering \tiny $R - \tilde{R}$ [m]}
            \caption{Rectangular array: $64\times64$ antennas, $D_x = D_y = 15\lambda$.}
        \label{fig:chirp_spectrum_resolution_cartesian}
        \end{subfigure} & 
        \begin{subfigure}[t]{0.24\textwidth}
            \centering
            \includegraphics[width=\linewidth, trim=20 20 50 19, clip]{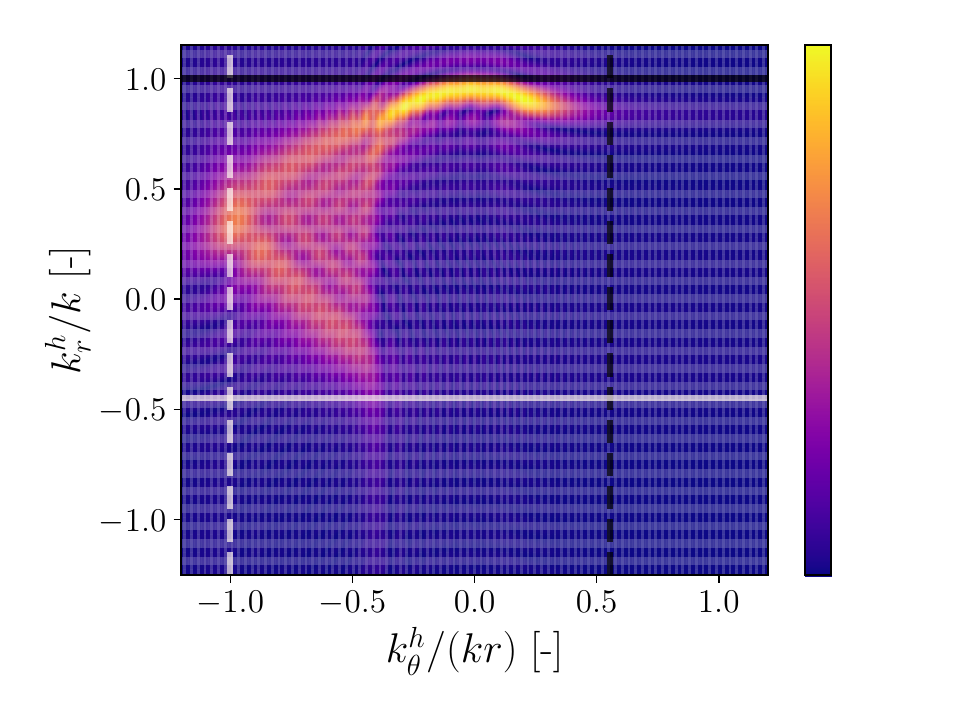}
            %\caption{Caption for figure 1.3}
            %\parbox{\linewidth}{\centering \tiny $R - \tilde{R}$ [m]}
            \caption{Circular array: $N_r=64$ rings (from $5\lambda$ to $15\lambda$), $N_\theta=256$ antennas per ring over a $180^\circ$ aperture.}
        \label{fig:chirp_spectrum_resolution_circular}
        \end{subfigure} 

    \end{tabular}

    \caption{Amplitude of the spectrum $H(\bar{\boldsymbol{k}}^{h} ; \tilde{\boldsymbol{x}}_s, \boldsymbol{x}_s)$ for two-dimensional rectangular and circular arrays. The maximum and minimum chirp frequencies along each axis, derived from \eqref{eq:ki_max_min_chirp}, are indicated by the vertical and horizontal lines.}
    \label{fig:chirp_spectrum_resolution}
\end{figure}

\subsection{Chirp-based Non-Aliasing Conditions}
\label{sec:non_aliasing_conditions_chirp}

Using a similar approach, closed-form expression for the maximum spatial frequency \( \bar{K}_i(\tilde{\boldsymbol{x}}_s, \boldsymbol{x}_s) \) of the ambiguity function $AF$ along each axis \( i \) at a test point $\tilde{\boldsymbol{x}}_s$, can be derived from the local spatial frequency of \( g(\boldsymbol{z} ; \tilde{\boldsymbol{x}}_s, \boldsymbol{x}_s) \). This function can itself be interpreted as a spatial chirp with phase
\begin{equation}
\angle g(\boldsymbol{z}; \tilde{\boldsymbol{x}}_s, \boldsymbol{x}_s) 
= \phi_g(\boldsymbol{z}; \tilde{\boldsymbol{x}}_s, \boldsymbol{x}_s) 
\triangleq \phi_h(\boldsymbol{z}; \boldsymbol{x}_s) 
- \phi_h(\boldsymbol{z}; \tilde{\boldsymbol{x}}_s). 
\end{equation}

Accordingly, the local spatial chirp frequency \( \boldsymbol{\mathrm{k}}^{g}(\boldsymbol{z} ; \tilde{\boldsymbol{x}}_s, \boldsymbol{x}_s) \) of \( g(\boldsymbol{z} ; \tilde{\boldsymbol{x}}_s, \boldsymbol{x}_s) \) at an antenna position \( \boldsymbol{z} \) can be expressed as
\begin{equation}
\begin{aligned}
    \boldsymbol{\mathrm{k}}^{g}(\boldsymbol{z} ; \tilde{\boldsymbol{x}}_s, \boldsymbol{x}_s) 
    &\triangleq \nabla_{\boldsymbol{z}} \phi_g(\boldsymbol{z} ; \tilde{\boldsymbol{x}}_s, \boldsymbol{x}_s) \\
    &= \boldsymbol{\mathrm{k}}^{h}(\boldsymbol{z}; \boldsymbol{x}_s) - \boldsymbol{\mathrm{k}}^{h}(\boldsymbol{z}; \tilde{\boldsymbol{x}}_s) \\
    &= k \left( \frac{\boldsymbol{x}_s - \boldsymbol{z}}{\| \boldsymbol{x}_s - \boldsymbol{z} \|} - \frac{\tilde{\boldsymbol{x}}_s - \boldsymbol{z}}{\| \tilde{\boldsymbol{x}}_s - \boldsymbol{z} \|} \right)
\end{aligned}
\label{eq:local_spatial_frequency_g}
\end{equation}
where \( \boldsymbol{\mathrm{k}}^{h}(\boldsymbol{z}; \boldsymbol{x}_s) \) and \( \boldsymbol{\mathrm{k}}^{h}(\boldsymbol{z}; \tilde{\boldsymbol{x}}_s) \) are given by \eqref{eq:local_spatial_frequency_h}. At any position $\boldsymbol{z}$, the local chirp spatial frequency \( \boldsymbol{\mathrm{k}}^{g}(\boldsymbol{z} ; \tilde{\boldsymbol{x}}_s, \boldsymbol{x}_s) \) is simply the difference between the local spatial frequencies of the received fields from the source \( \boldsymbol{x}_s \) and the tentative location \( \tilde{\boldsymbol{x}}_s \). Each term corresponds to a local vector of norm $k=2\pi/\lambda$ pointing in the direction between the antenna element \( \boldsymbol{z} \) and the points \( \tilde{\boldsymbol{x}}_s \) and \( \boldsymbol{x}_s \), respectively. 

\begin{figure}
    \centering
    \includegraphics[width=0.5\linewidth, trim=50 0 60 0, clip]{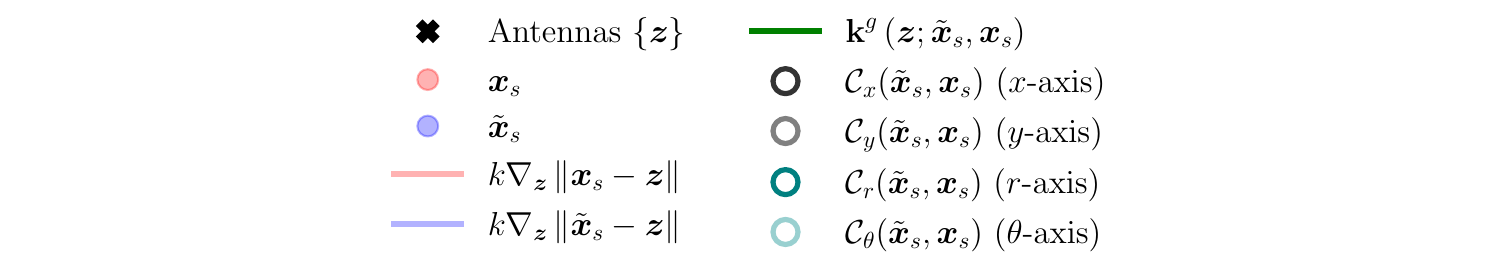}

    \begin{subfigure}{0.4\linewidth}
        \centering
        \includegraphics[width=0.9\linewidth, trim=15 20 0 20, clip]{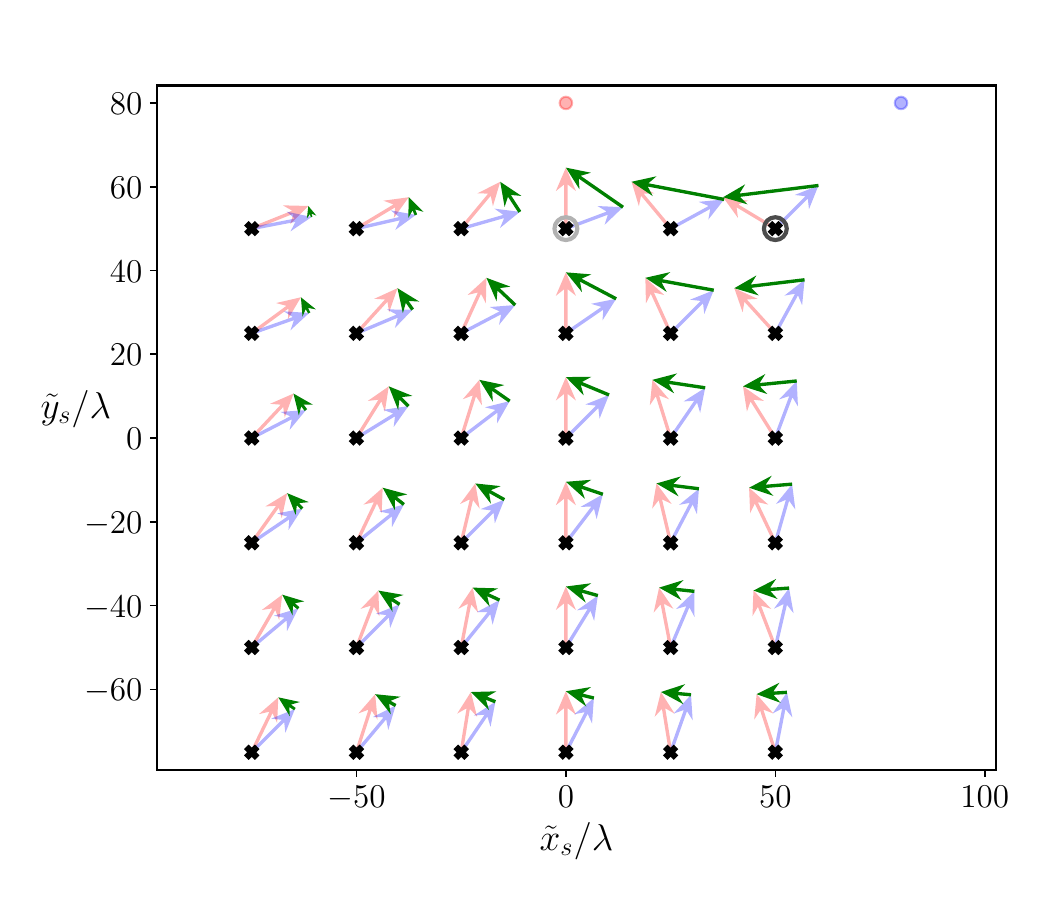}
        \caption{Rectangular geometry.}
        \label{fig:chirp_illustration_cartesian}
    \end{subfigure}

    \begin{subfigure}{0.4\linewidth}
        \centering
        \includegraphics[width=0.9\linewidth, trim=22 20 0 20, clip]{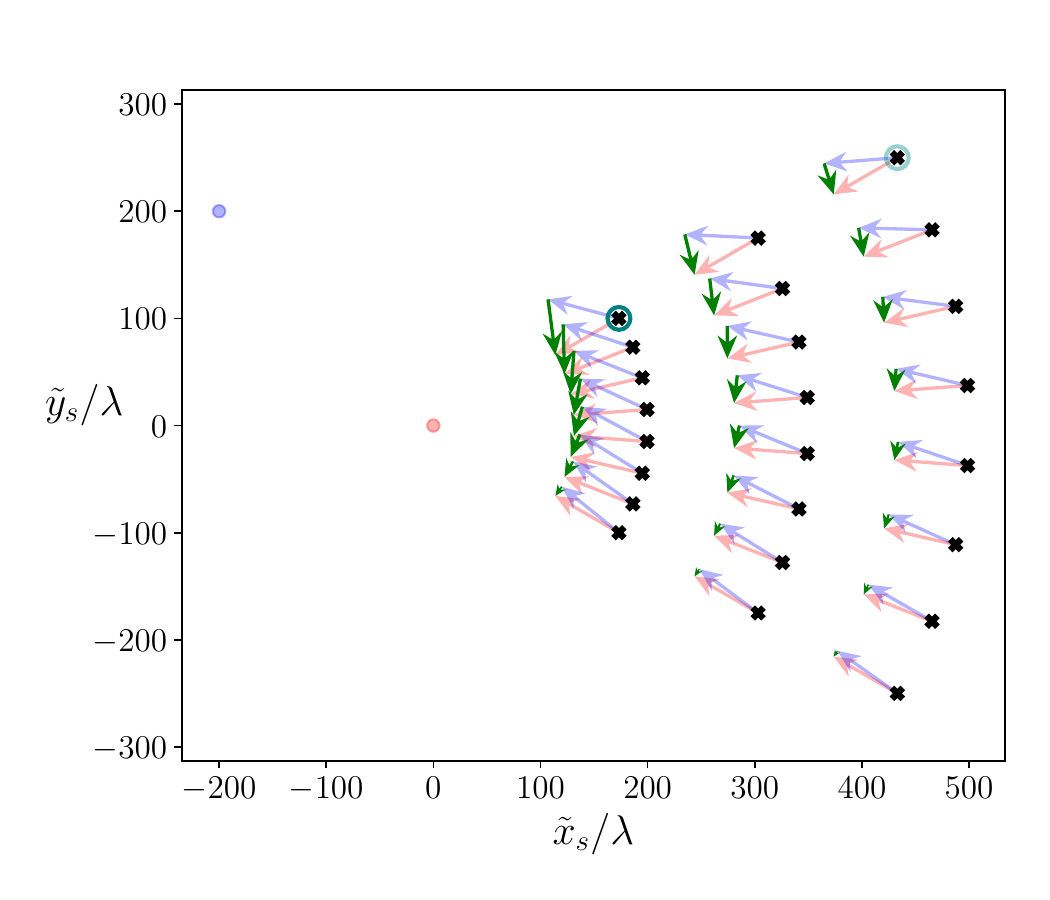}
        \caption{Circular geometry.}
        \label{fig:chirp_illustration_polar}
    \end{subfigure}

    \caption{Illustration of the two terms of the local wavenumber vectors \( \boldsymbol{\mathrm{k}}^{g}(\tilde{\boldsymbol{x}}_s, \boldsymbol{x}_s) \) at the antenna elements \( \boldsymbol{z} \) for a given point source \( \boldsymbol{x}_s \) and tentative location \( \tilde{\boldsymbol{x}}_s \).}
    \label{fig:chirp_illustration}
\end{figure}

Figure~\ref{fig:chirp_illustration} illustrates this for rectangular (Figure~\ref{fig:chirp_illustration_cartesian}) and circular (Figure~\ref{fig:chirp_illustration_polar}) arrays with arbitrary source $\boldsymbol{x}_s$ and tentative location $\tilde{\boldsymbol{x}}_s$. The vectors \( k \nabla_{\boldsymbol{z}} \|  \tilde{\boldsymbol{x}}_s - \boldsymbol{z} \| \) (in blue) and \( k \nabla_{\boldsymbol{z}} \| \boldsymbol{x}_s - \boldsymbol{z} \| \) (in red) at each antenna position $\{ \boldsymbol{z} \in \mathcal{Z} \}$ effectively point towards the points \( \tilde{\boldsymbol{x}}_s \) and \( \boldsymbol{x}_s \), respectively. 
Their difference yields the local wavenumber vector $\boldsymbol{\mathrm{k}}^{g}(\boldsymbol{z} ; \tilde{\boldsymbol{x}}_s, \boldsymbol{x}_s)$ (in green), which representing the spatial frequency content of \( g(\boldsymbol{z} ; \tilde{\boldsymbol{x}}_s, \boldsymbol{x}_s) \) at the element location \( \boldsymbol{z} \). %This visualisation provides an intuitive framework that will be used for analysing and interpreting the results presented in the following sections. 
Circled antennas denote Critical Antenna Elements (CAEs); their role and formal definition will be provided in Section~\ref{sec:critical_antennas}.

Using this chirp-based model, the chirp maximum spatial frequency \( \mathrm{K}_i(\tilde{\boldsymbol{x}}_s, \boldsymbol{x}_s) \) along axis \( i \) is
\begin{equation}
    \mathrm{K}_i(\tilde{\boldsymbol{x}}_s, \boldsymbol{x}_s) = \max_{\boldsymbol{z} \in \mathcal{Z}} \{ {\mathrm{k}}^{g}_{i}(\boldsymbol{z} ; \tilde{\boldsymbol{x}}_s, \boldsymbol{x}_s) \},  
\label{eq:Ki_chirp}
\end{equation}
where \( \boldsymbol{\mathrm{k}}^{g}(\boldsymbol{z} ; \tilde{\boldsymbol{x}}_s, \boldsymbol{x}_s) \) is defined in \eqref{eq:local_spatial_frequency_g}.

% As soon as (22) [...] meets (9) along each dimension i, we assume there is no aliasing. 
As soon as the maximum spatial chirp frequency $\mathrm{K}_i$ in \eqref{eq:Ki_chirp} satisfies the non-aliasing condition \eqref{eq:aliasing_conditions} along each dimension $i$, we assume that no aliasing artefact occurs at the tentative location $\tilde{\boldsymbol{x}}_s$. 
In \cite{npj}, a similar reasoning led to concept of the Aliasing-Free Region (AFR) $\mathcal{S}(\boldsymbol{x}_s)$, which can be redefined here as in Definition~\ref{def:afr}. In this reformulation, $\mathcal{S}(\boldsymbol{x}_s)$ explicitly corresponds to the intersection of the directional AFRs $\mathcal{S}_i(\boldsymbol{x}_s)$ along each axis $i$.

\begin{definition}\label{def:afr}
The Aliasing-Free Region (AFR) $\mathcal{S}(\boldsymbol{x}_s)$ for a monostatic array is given by
\begin{equation}
    \mathcal{S}(\boldsymbol{x}_s) \triangleq \bigcap_{i} \mathcal{S}_i(\boldsymbol{x}_s), 
\end{equation}
where the directional AFR $\mathcal{S}_i(\boldsymbol{x}_s)$ is defined as
\begin{equation}
    \mathcal{S}_i(\boldsymbol{x}_s) \triangleq \{ \tilde{\boldsymbol{x}}_s \in \mathbb{R}^{d} \, | \, \mathrm{K}_{i}(\tilde{\boldsymbol{x}}_s, \boldsymbol{x}_s) \le 2\pi/\Delta_{i} \}. 
\label{eq:def_afr}
\end{equation}
\end{definition}

The AFR is the subset of candidate points \( \tilde{\boldsymbol{x}}_s \) in the $AF$ that satisfy \eqref{eq:aliasing_conditions} and thus remain unaffected by aliasing. 
This formalism precisely identifies the support of the ambiguity function where no aliasing artefacts occur. Moreover, if the true source $\boldsymbol{x}_s$ is known to lie within a given search region, 
the AFR can be evaluated over all possible source points in that region to guarantee aliasing-free reconstruction throughout that entire region.

In \cref{fig:chirp_spectrum}, the maximum spatial frequencies \(\mathrm{K}_i(\tilde{\boldsymbol{x}}_s, \boldsymbol{x}_s)\) from \eqref{eq:Ki_chirp} along each axis are compared with the actual spectrum \(G(\bar{\boldsymbol{k}}^{g} ; \tilde{\boldsymbol{x}}_s, \boldsymbol{x}_s)\) and represented by the vertical and horizontal lines, for given $(\tilde{\boldsymbol{x}}_s;\boldsymbol{x}_s)$ pairs. 
The same two array geometries as in \cref{fig:chirp_spectrum_resolution} are considered: rectangular (\cref{fig:chirp_spectrum_cartesian}) and circular (\cref{fig:chirp_spectrum_circular}) arrays. As can be seen, the maximum chirp frequencies closely match the actual ones of the spectrum \( G(\bar{\boldsymbol{k}}^{g} ; \tilde{\boldsymbol{x}}_s, \boldsymbol{x}_s) \) in both cases. This supports the chirp-based definition of the spatial frequency content of the ambiguity function matched at a given test position $\tilde{\boldsymbol{x}}_s$ and, consequently, the non-aliasing conditions at this point. 

\begin{figure}
    \centering
    % 2x3 Grid with titles aligned with columns

    \includegraphics[width=0.2\linewidth, trim=0 70 20 70, clip]{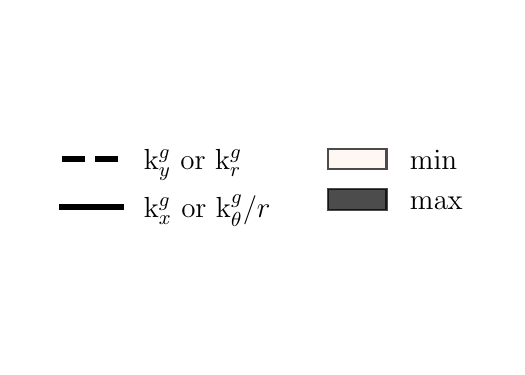}

    \begin{tabular}{cc}
        \tiny
        $\boldsymbol{x}_s = (0;10\lambda)$, $\tilde{\boldsymbol{x}}_s = (10\lambda;10\lambda)$  & 
        \tiny
        $\boldsymbol{x}_s = (-10;10\lambda)$, $\tilde{\boldsymbol{x}}_s = (0\lambda;5\lambda)$  \\

        \begin{subfigure}[t]{0.22\textwidth}
            \centering
            \includegraphics[width=\linewidth, trim=20 20 80 19, clip]{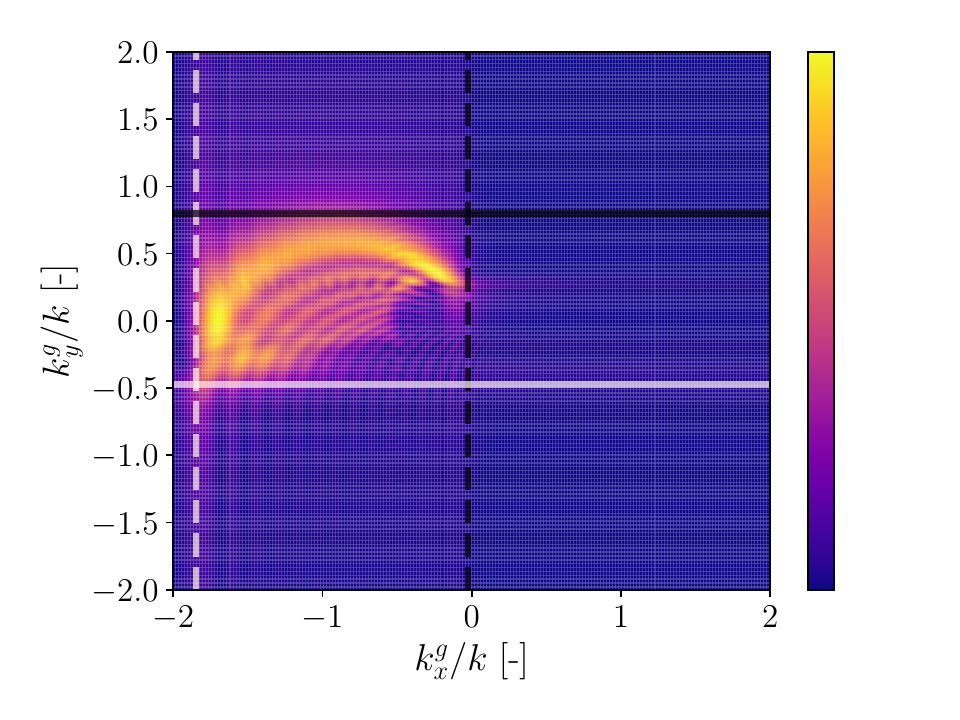}
            %\caption{Caption for figure 1.3}
            %\parbox{\linewidth}{\centering \tiny $R - \tilde{R}$ [m]}
            \caption{Rectangular array: $64\times64$ antennas, $D_x = D_y = 15\lambda$.}
        \label{fig:chirp_spectrum_cartesian}
        \end{subfigure} & 
        \begin{subfigure}[t]{0.24\textwidth}
            \centering
            \includegraphics[width=\linewidth, trim=20 20 50 19, clip]{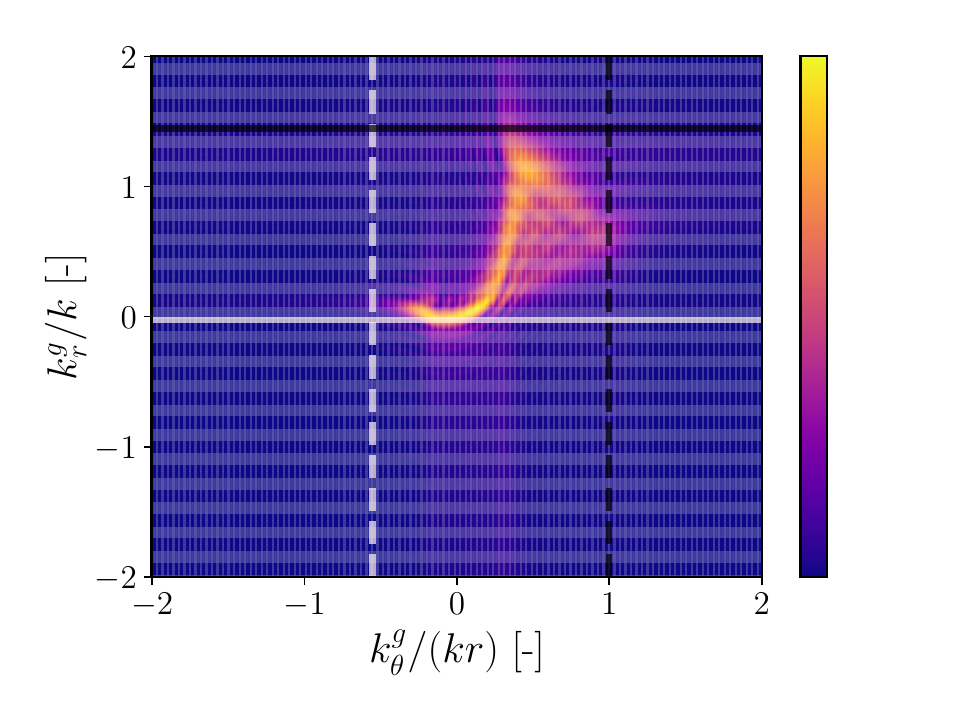}
            %\caption{Caption for figure 1.3}
            %\parbox{\linewidth}{\centering \tiny $R - \tilde{R}$ [m]}
            \caption{Circular array: $N_r=64$ rings (from $5\lambda$ to $15\lambda$), $N_\theta=256$ antennas per ring over a $180^\circ$ aperture.}
        \label{fig:chirp_spectrum_circular}
        \end{subfigure} 

    \end{tabular}

    \caption{Amplitude of the spectrum $G(\bar{\boldsymbol{k}}^{g} ; \tilde{\boldsymbol{x}}_s, \boldsymbol{x}_s)$ for two-dimensional rectangular and circular antenna arrays. Vertical and horizontal lines indicate the maximum and minimum chirp frequencies along each axis, as derived from \eqref{eq:Ki_chirp}.}
    \label{fig:chirp_spectrum}
\end{figure}

To summarise, \eqref{eq:local_spatial_frequency_g} and \eqref{eq:Ki_chirp} provide an effective approximation of the maximum spatial frequency \( \bar{K}_{i}(\tilde{\boldsymbol{x}}_s, \mathcal{S}) \). Combined with the aliasing conditions in \eqref{eq:aliasing_conditions}, this enables practical identification of ambiguity-function locations \( \tilde{\boldsymbol{x}}_s \) that are affected by aliasing and those that are not.

%%%%%%%%%%% Geometrical Tools %%%%%%%%%%%

\section{Geometrical Tools}
\label{sec:geometrical_tools}

This section introduces geometrical tools used to facilitate the analysis and interpretation of the results presented in the following sections. These tools include the Non-Contributive Zone (NCZ) (\cref{sec:ncz}), which identifies regions where antennas do not contribute to the localisation resolution, and the Critical Antenna Elements (CAEs) (\cref{sec:critical_antennas}), which correspond to antennas associated with the maximum spatial frequency \( \mathrm{K}_i(\tilde{\boldsymbol{x}}_s, \boldsymbol{x}_s) \) and thus influence the non-aliasing conditions. 
Both tools are presented separately, but their interplay is illustrated in Sections~\ref{sec:point_scatterer} and \ref{sec:polar_arrays}. These concepts do not strictly coincide, e.g. an antenna can be in the NCZ without being critical, leading to trade-off situations analysed empirically in these sections.

\subsection{Non-Contributive Zone}
\label{sec:ncz}

In \cref{sec:resolution_chirp}, it was shown that, for a given array $\mathcal{Z}$, the spatial bandwidth \( \mathrm{B}_i(\boldsymbol{x}_s, \mathcal{Z}) \) is determined by the extrema of the local spatial frequencies \( \boldsymbol{\mathrm{k}}^{h}(\boldsymbol{z}; \boldsymbol{x}_s) \) along axis \( i \), attained at specific antenna positions within the array. Antennas with intermediate frequencies do not influence the bandwidth nor, consequently, the resolution, and are thus non-contributive. \cref{def:non-contributive_region} formally defines the Non-Contributive Zone along axis \( i \) as the set of antenna positions \( \boldsymbol{z} \) where adding an antenna does not improve the resolution/increase the bandwidth along that axis.

\begin{definition}[Non-contributive Zone]\label{def:non-contributive_region}
    The Non-Contributive Zone (NCZ) along the \( i \)-th axis, noted $\mathcal{N}_{i}$,  is defined as
    \begin{equation}
    \mathcal{N}_{ i}(\boldsymbol{x}_s, \mathcal{Z}) \triangleq \left\{ \boldsymbol{z} : \mathrm{B}_i(\boldsymbol{x}_s, \mathcal{Z} ) = \mathrm{B}_i(\boldsymbol{x}_s, \mathcal{Z} \cup \left\{ \boldsymbol{z} \right\} )\right\}. 
\end{equation}
\end{definition}

In \cref{fig:spectrum_resolution}, the intersection of the NCZs along the $x$- and $y$-axes is shown as the grey area with a black contour. Antennas within this zone do not improve resolution in any direction, whereas antennas outside it are contributive, increasing the bandwidth along at least one axis. The concept of NCZ thus provides a practical guideline for placing antennas to enhance local resolution for a given source \( \boldsymbol{x}_s \).

\subsection{Critical Antenna Elements}
\label{sec:critical_antennas}

The chirp-based framework introduced in \cref{sec:non_aliasing_conditions_chirp} links the maximum spatial frequency \( \mathrm{K}_i(\tilde{\boldsymbol{x}}_s, \boldsymbol{x}_s) \) of the ambiguity function $AF$ at a test point $\tilde{\boldsymbol{x}}_s$, to a maximisation problem among the array elements \( \mathcal{Z} \). This enables the derivation of several properties of the aliasing conditions.
In particular, the inclusion principle from \cite{npj}, stated in \cref{prop:inclusion_principle_2}, provides general insights into how adding an antenna affects the AFR.

\begin{proposition}[Inclusion Principle]
    Given two antenna arrays \( \mathcal{Z}^{(1)} \subseteq \mathcal{Z}^{(2)} \) with the same inter-element spacings \( \Delta_{i} \) along each dimension \( i \), the corresponding aliasing-free regions, $\mathcal{S}^{(1)}$ and $\mathcal{S}^{(2)}$ satisfy
    \begin{equation}
        \mathcal{S}^{(2)}(\boldsymbol{x}_s) \subseteq \mathcal{S}^{(1)}(\boldsymbol{x}_s),
    \end{equation}
    for all sources locations \( \boldsymbol{x}_s \in \mathbb{R}^{d}\).
\label{prop:inclusion_principle_2}
\end{proposition}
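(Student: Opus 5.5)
The argument is a monotonicity property of a maximum over a set, transported through Definition~\ref{def:afr}. Fix a source $\boldsymbol{x}_s$, a tentative location $\tilde{\boldsymbol{x}}_s$ and an axis $i$. The local chirp frequency $\boldsymbol{\mathrm{k}}^{g}(\boldsymbol{z};\tilde{\boldsymbol{x}}_s,\boldsymbol{x}_s)$ in \eqref{eq:local_spatial_frequency_g} is a pointwise function of the antenna position $\boldsymbol{z}$ alone. It does not depend on the rest of the array. Writing $\mathrm{K}_i^{(1)}$ and $\mathrm{K}_i^{(2)}$ for the quantity \eqref{eq:Ki_chirp} computed over $\mathcal{Z}^{(1)}$ and $\mathcal{Z}^{(2)}$, the inclusion $\mathcal{Z}^{(1)}\subseteq\mathcal{Z}^{(2)}$ gives
\begin{equation*}
\mathrm{K}_i^{(1)}(\tilde{\boldsymbol{x}}_s,\boldsymbol{x}_s)=\max_{\boldsymbol{z}\in\mathcal{Z}^{(1)}}\mathrm{k}^{g}_i(\boldsymbol{z};\tilde{\boldsymbol{x}}_s,\boldsymbol{x}_s)\le\max_{\boldsymbol{z}\in\mathcal{Z}^{(2)}}\mathrm{k}^{g}_i(\boldsymbol{z};\tilde{\boldsymbol{x}}_s,\boldsymbol{x}_s)=\mathrm{K}_i^{(2)}(\tilde{\boldsymbol{x}}_s,\boldsymbol{x}_s).
\end{equation*}
A maximum over a larger set can only increase. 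The arrays are finite, so both maxima are attained. The same holds for continuous apertures if the maxima are read as suprema.

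Next I take $\tilde{\boldsymbol{x}}_s\in\mathcal{S}^{(2)}_i(\boldsymbol{x}_s)$, so that $\mathrm{K}_i^{(2)}\le 2\pi/\Delta_i$. By hypothesis both arrays have the same spacings $\Delta_i$, so the threshold in \eqref{eq:def_afr} is identical for the two arrays. The inequality above then gives $\mathrm{K}_i^{(1)}\le\mathrm{K}_i^{(2)}\le 2\pi/\Delta_i$, hence $\tilde{\boldsymbol{x}}_s\in\mathcal{S}^{(1)}_i(\boldsymbol{x}_s)$. This proves the directional inclusion $\mathcal{S}^{(2)}_i\subseteq\mathcal{S}^{(1)}_i$ for every $i$. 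Intersecting over $i$ gives $\mathcal{S}^{(2)}(\boldsymbol{x}_s)\subseteq\mathcal{S}^{(1)}(\boldsymbol{x}_s)$, since an intersection of sets is monotone in each of them. Because $\boldsymbol{x}_s$ was arbitrary, the statement holds for all source locations.

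There is no real technical obstacle. The one point that needs care is the role of the equal-spacing hypothesis. If $\mathcal{Z}^{(2)}$ could have a finer spacing than $\mathcal{Z}^{(1)}$, the threshold $2\pi/\Delta_i$ would change, and the monotonicity of $\mathrm{K}_i$ would no longer transfer to the regions. I would state explicitly that both AFRs are evaluated with the same $\Delta_i$. I would also note that the result concerns the chirp-based $\mathrm{K}_i$ of Definition~\ref{def:afr}, not the strict $\bar{K}_i$, for which the support argument is not available.
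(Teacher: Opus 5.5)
Your proof is correct and matches the paper's own (omitted) argument for this result, which it attributes to prior work: the maximisation defining $\mathrm{K}_i^{(2)}$ runs over a superset of the one defining $\mathrm{K}_i^{(1)}$, so $\mathrm{K}_i^{(1)}\le\mathrm{K}_i^{(2)}$, and because the threshold $2\pi/\Delta_i$ is common to both arrays, every point of $\mathcal{S}^{(2)}(\boldsymbol{x}_s)$ lies in $\mathcal{S}^{(1)}(\boldsymbol{x}_s)$. You go slightly further than the paper by proving the inclusion axis by axis before intersecting, and by stating explicitly where the equal-spacing hypothesis is used.
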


%\begin{proof}
%\gmcom{Pas besoin de le démontrer puisque déjà démontré dans NPJ que tu cites. Par ailleurs, pas sur que proposition soit le bon terme, puisque déjà démontré. Proposition = Sera démontré + tard il me semble. A check}
%\gmdel{Since \( \mathcal{Z}^{(1)} \subseteq \mathcal{Z}^{(2)} \), the maximisation problem \eqref{eq:Ki_chirp} $\mathrm{K}_i^{(2)}(\tilde{\boldsymbol{x}}_s, \boldsymbol{x}_s)$ is a relaxed version of that yielding $\mathrm{K}_i^{(2)}(\tilde{\boldsymbol{x}}_s, \boldsymbol{x}_s)$. It follows that $ 
%    \mathrm{K}_i^{(1)}(\tilde{\boldsymbol{x}}_s, \boldsymbol{x}_s) \leq \mathrm{K}_i^{(2)}(\tilde{\boldsymbol{x}}_s, \boldsymbol{x}_s),
%$
%\( \forall \tilde{\boldsymbol{x}}_s, \boldsymbol{x}_s \in \mathbb{R}^{d} \) and \( \forall i  \). Consequently, the aliasing conditions in \eqref{eq:aliasing_conditions} imply that if \( \tilde{\boldsymbol{x}}_s \in \mathcal{S}^{(2)}(\boldsymbol{x}_s) \), then \( \tilde{\boldsymbol{x}}_s \in \mathcal{S}^{(1)}(\boldsymbol{x}_s) \), leading to the conclusion that \( \mathcal{S}^{(2)}(\boldsymbol{x}_s) \subseteq \mathcal{S}^{(1)}(\boldsymbol{x}_s) \).}\
%\end{proof}

This proposition shows that adding antennas (without changing the inter-element spacings) cannot enlarge the AFR; it can only preserve or shrink it. This is because adding antennas may increase the maximum spatial frequency \( \mathrm{K}_i(\tilde{\boldsymbol{x}}_s, \boldsymbol{x}_s) \), thereby tightening the aliasing conditions in \eqref{eq:aliasing_conditions}. It also implicitly suggests that removing antennas is always acceptable for increasing the AFR. Yet, it does not account for the loss of localisation resolution induced by removing antennas, nor does it indicate which antennas should be removed to maximise the AFR while minimally degrading resolution, or under which conditions the AFR remains unchanged or decreases when antennas are added.
To fill these gaps, \cref{def:critical_antenna_element} formally defines the Critical Antenna Elements as the specific antenna elements associated with the maximum spatial frequency \( \mathrm{K}_i(\tilde{\boldsymbol{x}}_s, \boldsymbol{x}_s) \) along axis \( i \) at a given test point \( \tilde{\boldsymbol{x}}_s \). 

\begin{definition}[Critical Antenna Elements]\label{def:critical_antenna_element}
    The Critical Antenna Elements (CAEs) are defined as the set
    \begin{equation}
    \mathcal{C}_{ i} (\tilde{\boldsymbol{x}}_s, \boldsymbol{x}_s) \triangleq
    \argmax_{\boldsymbol{z} \in \mathcal{Z}} \left| \mathrm{k}_{i}^{g}(\boldsymbol{z} ; \tilde{\boldsymbol{x}}_s, \boldsymbol{x}_s) \right|, 
    \label{eq:critical_antenna_element}
    \end{equation}
i.e. the subset of antenna positions $\boldsymbol{z} \in \mathcal{Z}$ where the spatial frequency $\mathrm{k}_i^{g}$ reaches its maximum magnitude.

For a set of candidate points $\tilde{\boldsymbol{x}}_s \in \tilde{\mathcal{X}}_s$, the corresponding CAEs are defined as
\begin{equation}
\mathcal{C}_{i}(\tilde{\mathcal{X}}_s, {\boldsymbol{x}}_s) \triangleq
\bigcup_{\tilde{\boldsymbol{x}}_s \in \tilde{\mathcal{X}}}
\mathcal{C}_{i} (\tilde{\boldsymbol{x}}_s, \boldsymbol{x}_s).
\end{equation}
\end{definition}

Identifying those CAEs is important as they indicate where the maximum spatial frequency along a given direction, namely $\mathrm{K}_i$ in the non-aliasing condition in \eqref{eq:aliasing_conditions}, occurs within the array. 
In particular, determining 
$\mathcal{C}_{ i}(\partial \mathcal{S}(\boldsymbol{x}_s), {\boldsymbol{x}}_s)$, i.e. the CAEs associated with points on the AFR boundary $\partial \mathcal{S}$ where $\mathrm{K}_i(\tilde{\boldsymbol{x}}_s, \boldsymbol{x}_s) = 2\pi/\Delta_i$, directly identifies the antennas that constrain the AFR size.
This concept thus provides a geometric link between the non-aliasing condition and the array geometry, and will be used in the next sections to analyse the influence of key system parameters.

The critical antennas along the $x$- and $y$-axes are highlighted in Figure~\ref{fig:chirp_illustration_cartesian} for the rectangular array geometry. They visibly correspond to the antennas with the greatest spatial variation in the corresponding directions--vertical for $x$ and horizontal for $y$--among the set of local local wavenumber vectors \( \boldsymbol{\mathrm{k}}^{g}(\boldsymbol{z} ; \tilde{\boldsymbol{x}}_s, \boldsymbol{x}_s) \) (shown in green).  
For the circular geometry in Figure~\ref{fig:chirp_illustration_polar}, the critical antennas correspond to two distinct criteria: one exhibiting the largest angular discrepancy between the local wavenumber vectors \( \boldsymbol{\mathrm{k}}^{g}(\boldsymbol{z}; \tilde{\boldsymbol{x}}_s, \boldsymbol{x}_s) \) and the angular direction of the antenna location \( \boldsymbol{z} \) from the origin (critical antenna associated with the angular coordinate \( \theta \)), and the other one where the wavenumber vector exhibits the largest norm (corresponding to the radial direction~\( r \)).

\begin{figure*}
    \centering

    \includegraphics[width=0.9\linewidth, trim=0 40 0 50, clip]{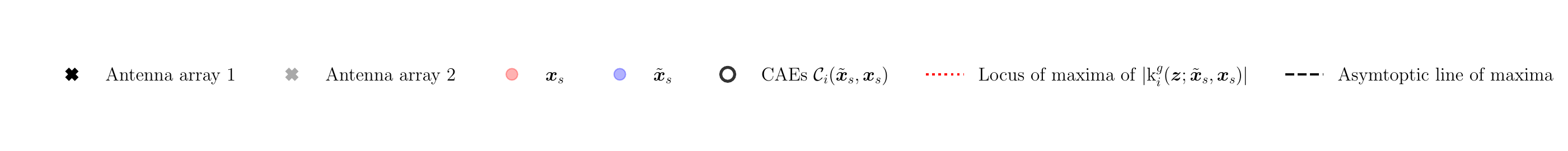}

    % --- Graphe de gauche ---
    \begin{subfigure}{0.18\linewidth}
        \centering
        \includegraphics[width=\linewidth, trim=60 50 60 50, clip]{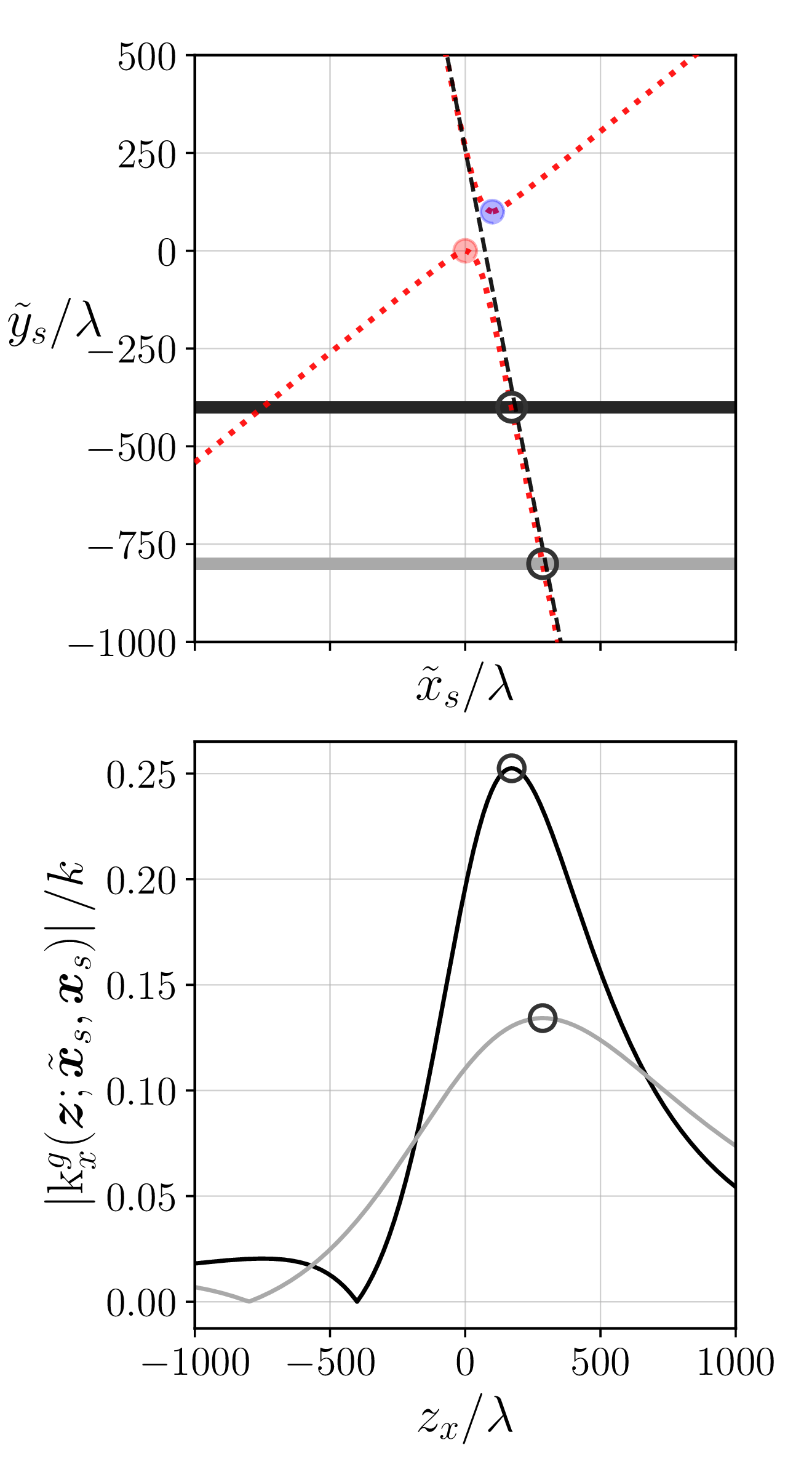}
        \caption{Rectangular array.} 
    \label{fig:critical_antenna_location_cartesian}
    \end{subfigure}
    \hfill
    % --- Deux graphes regroupés ---
    \begin{subfigure}{0.8\linewidth}
        \centering
        \begin{subfigure}{0.235\linewidth}
            \centering
            \includegraphics[width=\linewidth, trim=0 0 40 10, clip]{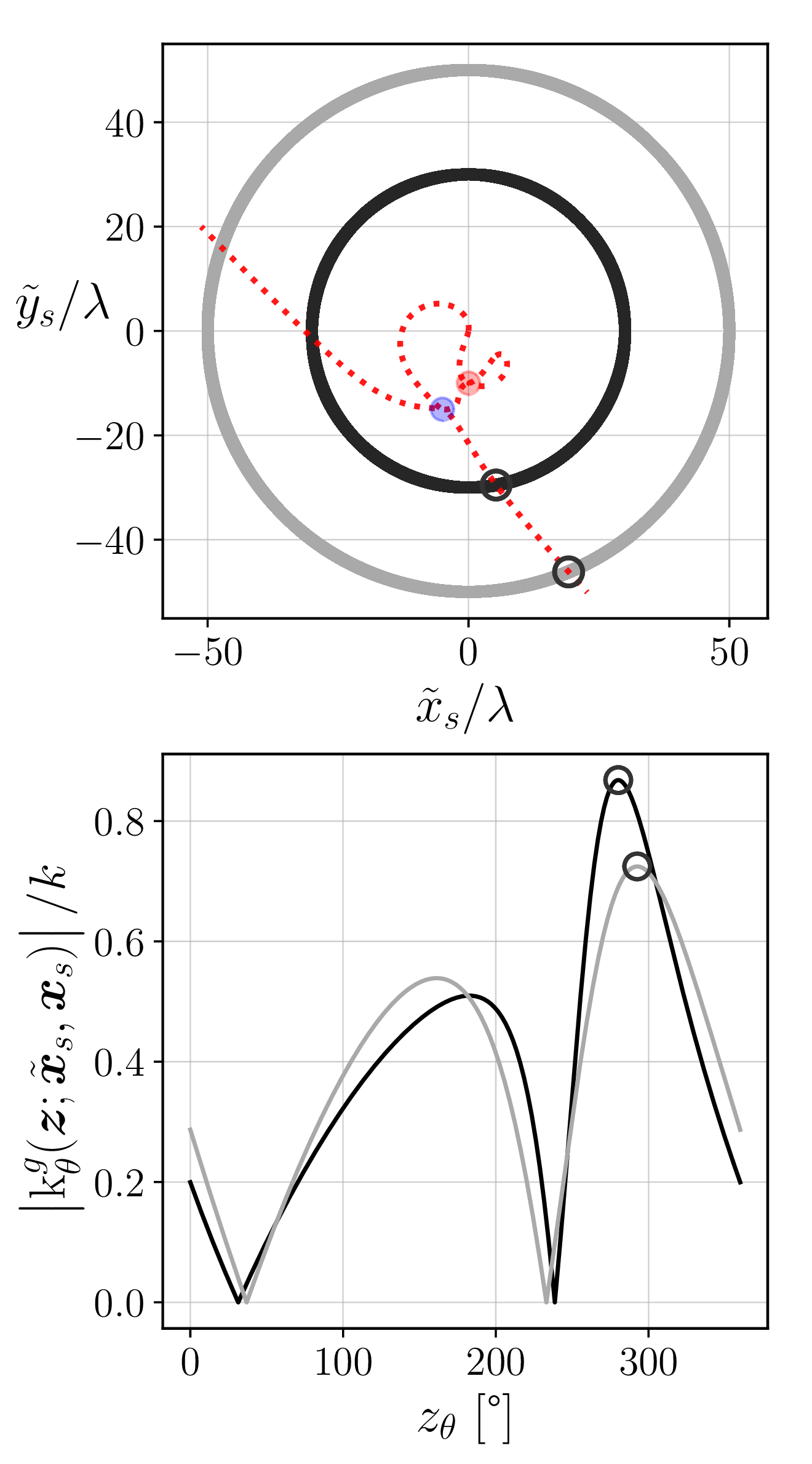}
        \end{subfigure}
        %\hfill
        \begin{subfigure}{0.235\linewidth}
            \centering
            \includegraphics[width=\linewidth, trim=0 0 40 10, clip]{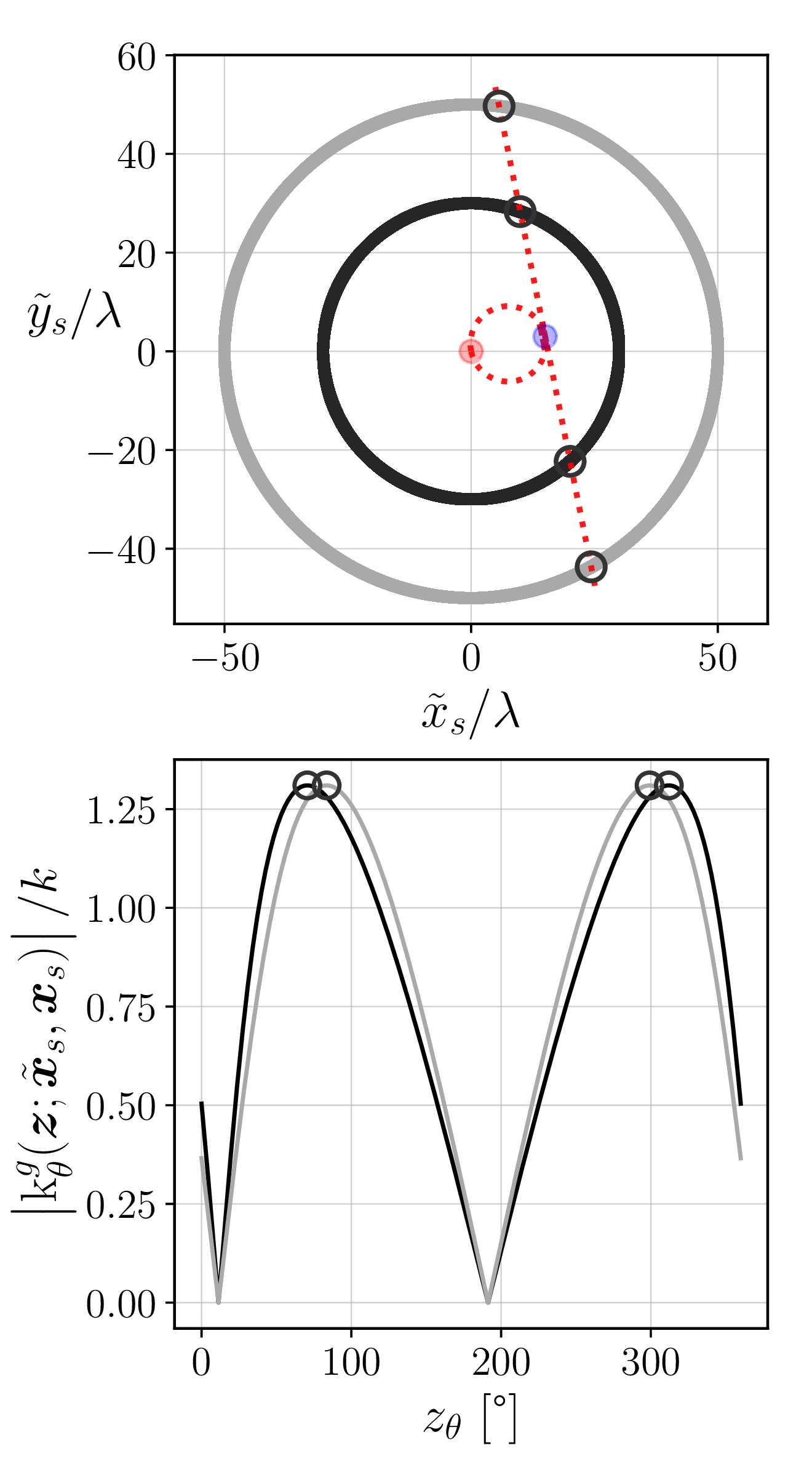}
        \end{subfigure}
        %\hfill
        \begin{subfigure}{0.235\linewidth}
            \centering
            \includegraphics[width=\linewidth, trim=0 0 40 10, clip]{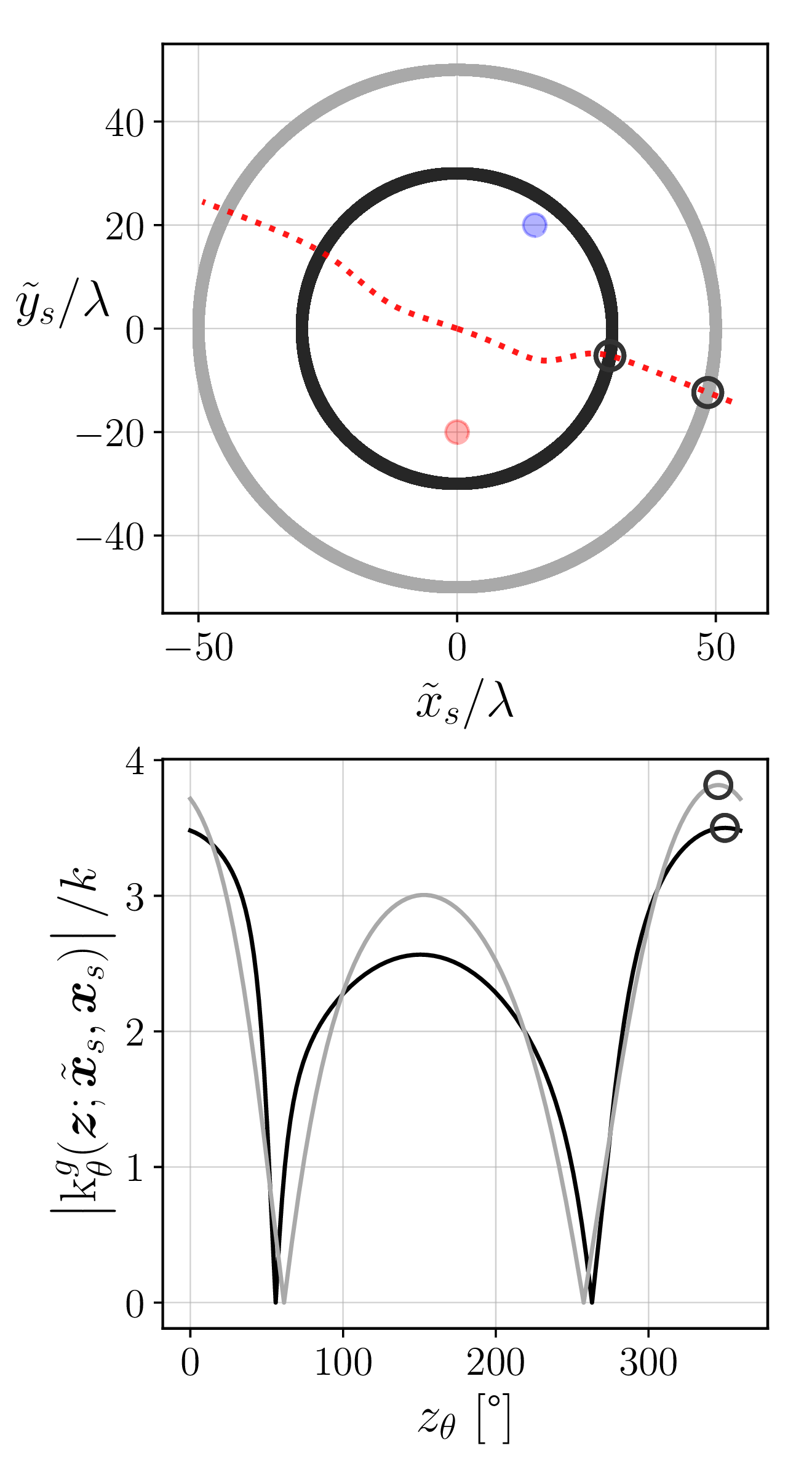}
        \end{subfigure}
        \begin{subfigure}{0.235\linewidth}
            \centering
            \includegraphics[width=\linewidth, trim=0 0 40 10, clip]{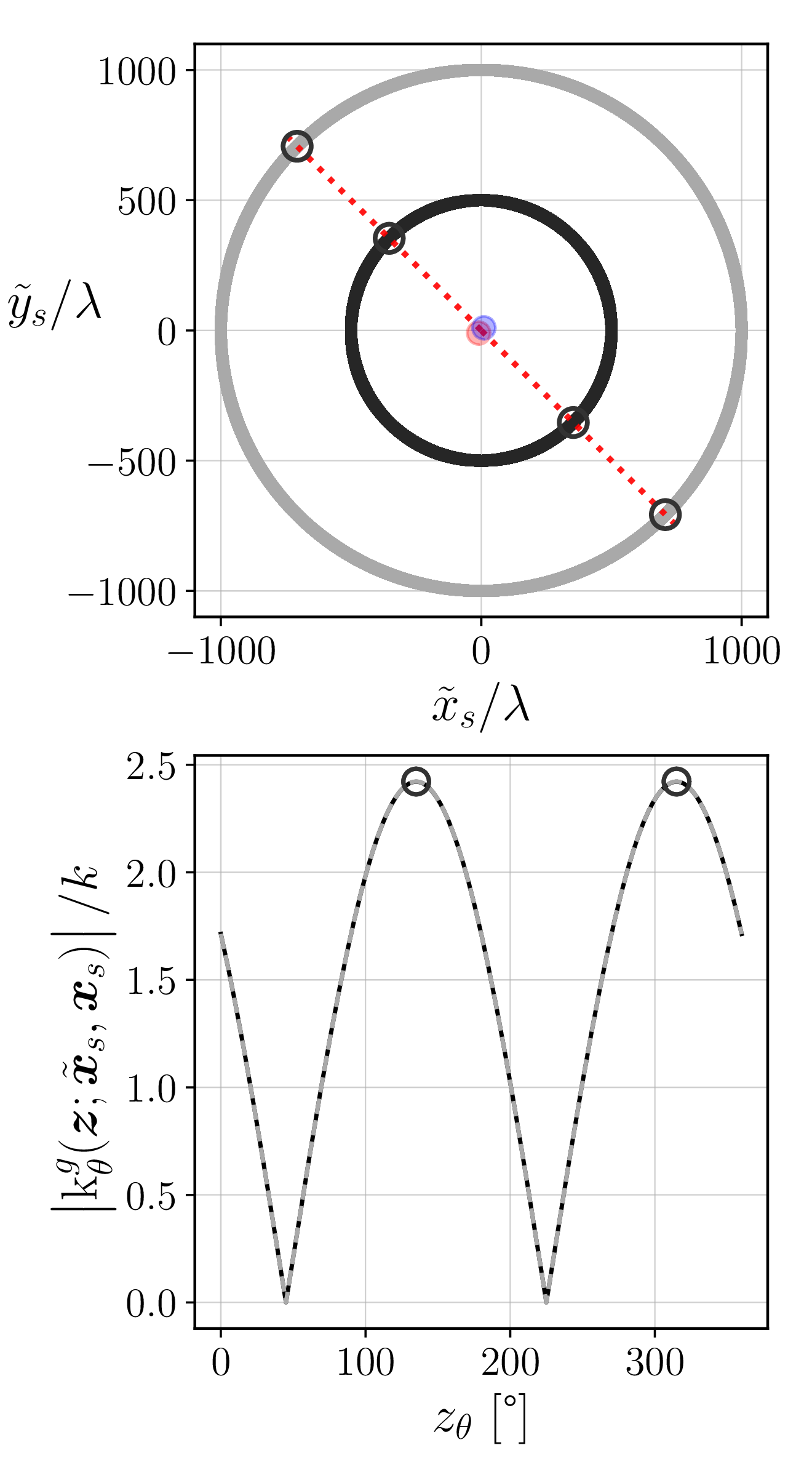}
        \end{subfigure}
        \caption{Circular array (angular direction $\theta$) .} 
    \label{fig:critical_antenna_location_polar}
    \end{subfigure}

    \caption{Critical antenna elements (circled in black) for different given source positions \( \boldsymbol{x}_s \) (red dot) and tentative locations \( \tilde{\boldsymbol{x}}_s \) (blue). Antenna arrays are shown in black and grey; the curves in the lower plot use the same colours as their corresponding arrays.}
\label{fig:critical_antenna_location}
\end{figure*}

The locus of critical antenna positions for a tentative location $\tilde{\boldsymbol{x}}_s$ and a true source point $\boldsymbol{x}_s$ can be further investigated by analysing  how the local frequencies $\mathrm{k}_{i}^{g}(\boldsymbol{z}; \tilde{\boldsymbol{x}}_s, \boldsymbol{x}_s)$ vary across the array. Such analysis is illustrated in Figure~\ref{fig:critical_antenna_location} for continuous arrays. 

The scenarios are shown on the top row, while the bottom row displays the spatial evolution of the local frequencies \( \mathrm{k}_{i}^{g}(\boldsymbol{z}; \tilde{\boldsymbol{x}}_s, \boldsymbol{x}_s) \) along the array, with colours consistent with the array representations above.
Rectangular (\cref{fig:critical_antenna_location_cartesian}) and circular arrays (\cref{fig:critical_antenna_location_polar}) are considered. The critical antennas (circled in black) are identified using \eqref{eq:critical_antenna_element} and are shown for various given source and tentative locations. For infinite continuous arrays, the local extrema of $\mathrm{k}_i^{g}$ can be found by setting the second derivative of the phase function $\phi$ to zero, i.e.,
\begin{equation}
\mathcal{C}_{i}(\tilde{\boldsymbol{x}}_s, \boldsymbol{x}_s) \subseteq 
\mathcal{Z}_{\mathrm{ex}, i} \triangleq 
\left\{ \boldsymbol{z} :
\frac{\partial \mathrm{k}_{i}^{g}(\boldsymbol{z}; \tilde{\boldsymbol{x}}_s, \boldsymbol{x}_s)}{\partial z_i} =
\frac{\partial^2 \phi_g (\boldsymbol{z}; \tilde{\boldsymbol{x}}_s, \boldsymbol{x}_s)}{\partial z_i^2} = 0
\right\}.
\label{eq:phi_second}
\end{equation}
The loci of these extrema $\mathcal{Z}_{\mathrm{ex}, i}$ are indicated by the dashed red lines, which clearly intersect the arrays at the positions of the critical antennas.

\subsubsection{Linear Array} 
\label{sec:critical_antenna_location_linear}

Appendix~\ref{appendix:critical_antenna_location_linear} shows that, for a linear array, the extrema loci $\mathcal{Z}_{\mathrm{ex}, i}$ can be approximated by hyperbolas. The corresponding CAEs therefore lie asymptotically on a straight line (black dashed line in \cref{fig:critical_antenna_location_cartesian}), whose analytical expression is provided in the appendix. If a discrete array intersects this line, the antenna achieving the \textit{global} maximum of $\mathrm{k}_i^{g}$ for a given candidate point $\tilde{\boldsymbol{x}}_s$ is already present; hence, extending the array along the same geometry cannot increase this maximum.
%If the array intersects this line, the antenna corresponding to the \textit{global} maximum for the candidate point \( \tilde{\boldsymbol{x}}_s \)--i.e., the one achieving the largest spatial frequency within the entire array--is already included. Consequently, adding more antennas along the same geometry (e.g., extending a linear array) will not further increase this maximum frequency.
%Moreover, the line indicates how the global CAE shifts when the array is moved away from the source (here, the CAE shifts to the right when moving from the black to the light-gray array), allowing the search region for the global maximum to be constrained.

However, the maximum spatial frequency $\mathrm{K}_x$ differs between the two array configurations. It attains larger values when the source $\boldsymbol{x}_s$ is closer to the array (black array), as can be seen on the below graph. As illustrated in Figure~\ref{fig:chirp_illustration_cartesian}, this behaviour arises from a larger angular separation between the two vectors in~\eqref{eq:local_spatial_frequency_g} (in red and blue) when the source is closer to the antennas, which increases their difference and thus the spatial frequency vector $\boldsymbol{\mathrm{k}}^{g}$ (in green) grows. Since $\mathrm{k}_x^{g}$ is the $x$-component of this vector, this resulting maximum $\mathrm{K}_x$ is larger for closer sources. Because $\mathrm{K}_x$ appears directly in the non-aliasing conditions \eqref{eq:aliasing_conditions}, a larger value increases the risk of violating these conditions, and therefore increases the likelihood of aliasing.

\subsubsection{Circular Array}
\label{sec:critical_antenna_location_polar}

In Figure~\ref{fig:critical_antenna_location_polar}, the same analysis is performed for complete $360^{\circ}$ circular arrays. Finding the loci of the extrema shown in dashed red is more complex in this case and is carried out numerically by solving \eqref{eq:phi_second}. 

Comparing the maximum spatial frequency $\mathrm{K}_\theta$ (circled in the below graphs), we observe that the effect of array radius on $\mathrm{K}_\theta$ strongly depends on the $(\tilde{\boldsymbol{x}}_s, \boldsymbol{x}_s)$ pair. When the source is centered (second column), $\mathrm{K}_\theta$ is independent of the array radius. For off-centered sources (first and third columns), $\mathrm{K}_\theta$ decreases (first column) or increases (third column) with the radius. As illustrated in \cref{fig:chirp_illustration_polar}, this is due to the varying angular mismatch between the local spatial frequencies (green) and the antenna positions (black) with radius, depending on source and tentative locations.

As shown in the top row, the loci of the extrema (shown in dashed red) strongly depend on the source $\boldsymbol{x}_s$ and tentative locations $\tilde{\boldsymbol{x}}_s$. 
For a centered source (second column), the locus of extrema reduces to the line orthogonal to $\boldsymbol{x}_s - \tilde{\boldsymbol{x}}_s$ and passing through $\tilde{\boldsymbol{x}}_s$, yielding two critical antennas with the same value $\mathrm{K}_\theta$. 
Off-centered configurations lead to more intricate loci with no closed-form shape.

However, as shown in \cite{icassp}, under an individual FF approximation for each antenna element--valid when the array radius $r_z >> \| \boldsymbol{x}_s \|$ and $r_z >> \| \tilde{\boldsymbol{x}}_s \|$--the phase function $\phi_g(\boldsymbol{z}; \tilde{\boldsymbol{x}}_s, \boldsymbol{x}_s)$ reduces to a function of the vector difference $\boldsymbol{x}_s - \tilde{\boldsymbol{x}}_s$:
\begin{equation}
\phi_g(\boldsymbol{z}; \tilde{\boldsymbol{x}}_s, \boldsymbol{x}_s) \approx 
k r_{s \tilde{s}} \cos(\theta_z - \theta_{s \tilde{s}}),
\label{eq:phi_g_taylor_difference}
\end{equation}
where $r_{s \tilde{s}} = \| \boldsymbol{x}_s - \tilde{\boldsymbol{x}}_s \|$ and $\theta_{s \tilde{s}} = \angle (\boldsymbol{x}_s - \tilde{\boldsymbol{x}}_s) $. Nullifying the second derivative of \eqref{eq:phi_g_taylor_difference} yields
\begin{align}
\frac{\partial^2 \phi_g(\boldsymbol{z}; \tilde{\boldsymbol{x}}_s, \boldsymbol{x}_s)}{\partial z_\theta^2} 
&= - k \, r_{s \tilde{s}} \, \cos(\theta_z - \theta_{s \tilde{s}}) = 0 \\
&\Leftrightarrow \theta_z = \theta_{s \tilde{s}} + \frac{\pi}{2} + n \pi, \quad n \in \mathbb{Z}, 
\end{align}
which shows that, under the individual FF approximation, a full circular array always contains two critical antennas for any tentative location $\tilde{\boldsymbol{x}}_s$. These co-critic antennas lie at the intersections of the array with the line orthogonal to $\boldsymbol{x}_s - \tilde{\boldsymbol{x}}_s$ (and passing through the origin) and are separated by $\pi$ radians. This is clearly seen in the last column of \cref{fig:critical_antenna_location_polar}, where the array radius is significantly larger than the distances $\| \boldsymbol{x}_s \|$ and $\| \tilde{\boldsymbol{x}}_s \|$, with $\boldsymbol{x}_s=(-10; -10)\lambda$ and  $\tilde{\boldsymbol{x}}_s=(10; 10)\lambda$. In this individual FF case, the maximum spatial frequency $K_\theta$ becomes independent of the array radius, as expected from \eqref{eq:phi_g_taylor_difference}, which depends only on the vector difference $\boldsymbol{x}_s - \tilde{\boldsymbol{x}}_s$. \\

The previous analysis focused on the global maxima of the local spatial frequencies $\mathrm{k}_i^{g}$ along the array. In practice, arrays are finite or cover only a limited angular aperture, so only a portion of $\mathrm{k}_i^{g}$’s variation is sampled. As a result, the critical antennas may not coincide with the global maxima. In such cases, they are identified by evaluating $\mathrm{k}_i^{g}$ at the discrete antenna positions and selecting the maximum, as in \eqref{eq:critical_antenna_element}.

Conversely, when the global maximum of $\mathrm{k}_i^{g}$ lies within the array aperture, the critical antennas remain unchanged even if additional elements are added along the same geometry. The CAE framework thus allows assessing how antenna addition or removal affects the achievable AFR and identifying cases where the inclusion principle of \cref{prop:inclusion_principle_2} holds with equality.
\Cref{prop:lemme_ant_critic_removal} formalises the impact on the AFR of removing antennas from a given array, based on the critical antenna elements defined in \cref{def:critical_antenna_element}.

\begin{proposition}[Removal of Antennas]\label{prop:lemme_ant_critic_removal}
Let $\mathcal{Z}^{(2)} \subseteq \mathcal{Z}^{(1)}$ be two antenna sets and let $\mathcal{S}^{(k)}$ denotes the AFR induced by $\mathcal{Z}^{(k)}$. 
For every candidate point $\tilde{\boldsymbol{x}}_s \in \partial \mathcal{S}^{(1)}(\boldsymbol{x}_s)$ 
and for every dimension $i$, if the CAEs satisfy
\begin{equation}
    \mathcal{C}_{ i}^{(1)} (\tilde{\boldsymbol{x}}_s, \boldsymbol{x}_s) \cap  \mathcal{Z}^{(2)} \neq \emptyset,
\end{equation}
then the AFRs coincide, i.e.
\begin{equation}
    \mathcal{S}^{(1)}(\boldsymbol{x}_s) = \mathcal{S}^{(2)}(\boldsymbol{x}_s).
\end{equation}
Otherwise, if this condition is violated for at least one $\tilde{\boldsymbol{x}}_s$ or dimension $i$, we have a strict inclusion:
\begin{equation}
    \mathcal{S}^{(1)}(\boldsymbol{x}_s) \subset \mathcal{S}^{(2)}(\boldsymbol{x}_s).
\end{equation}
\end{proposition}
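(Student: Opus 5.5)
The plan is to reduce everything to a comparison of the pointwise maxima $\mathrm{K}_i^{(1)}$ and $\mathrm{K}_i^{(2)}$ defined in \eqref{eq:Ki_chirp}, and then read off the AFRs through \cref{def:afr}. Since $\mathcal{Z}^{(2)} \subseteq \mathcal{Z}^{(1)}$, the maximisation yielding $\mathrm{K}_i^{(2)}$ runs over a smaller set. Hence
\begin{equation}
\mathrm{K}_i^{(2)}(\tilde{\boldsymbol{x}}_s, \boldsymbol{x}_s) \le \mathrm{K}_i^{(1)}(\tilde{\boldsymbol{x}}_s, \boldsymbol{x}_s)
\end{equation}
for every $\tilde{\boldsymbol{x}}_s$ and $i$. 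By the inclusion principle (\cref{prop:inclusion_principle_2}, with the roles of the two arrays exchanged) this already gives $\mathcal{S}^{(1)}(\boldsymbol{x}_s) \subseteq \mathcal{S}^{(2)}(\boldsymbol{x}_s)$. What remains is to decide when this inclusion is an equality and when it is strict. Throughout, I will read the CAEs in \eqref{eq:critical_antenna_element} as the maximisers of $\mathrm{k}_i^{g}$ attaining $\mathrm{K}_i$; I will note explicitly that the absolute value is harmless once symmetric spectra are assumed.

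For the equality case, I take $\tilde{\boldsymbol{x}}_s \in \partial\mathcal{S}^{(1)}$. The hypothesis $\mathcal{C}_i^{(1)} \cap \mathcal{Z}^{(2)} \neq \emptyset$ means that some maximiser of $\mathrm{k}_i^{g}$ over $\mathcal{Z}^{(1)}$ survives in $\mathcal{Z}^{(2)}$. Therefore $\mathrm{K}_i^{(2)} = \mathrm{K}_i^{(1)}$ at that point, for all $i$. So the two maximal-frequency functions coincide on the whole boundary $\partial\mathcal{S}^{(1)}$. On this boundary at least one directional constraint is active, i.e. $\mathrm{K}_i^{(1)} = 2\pi/\Delta_i$ for some $i$, and hence $\partial\mathcal{S}^{(1)} \subseteq \partial\mathcal{S}^{(2)}$.

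To upgrade this to $\mathcal{S}^{(2)} \subseteq \mathcal{S}^{(1)}$, I argue by contradiction. Suppose some $\tilde{\boldsymbol{x}}_s \in \mathcal{S}^{(2)} \setminus \mathcal{S}^{(1)}$ exists. I connect it by a path to the true source $\boldsymbol{x}_s$, which lies in both regions since $\mathrm{k}^{g} = \boldsymbol{0}$ there, and keep the path inside the (assumed connected) component of $\mathcal{S}^{(2)}$ containing $\boldsymbol{x}_s$. This path must cross $\partial\mathcal{S}^{(1)}$. At the crossing point the two $\mathrm{K}_i$ agree, so the path leaves $\mathcal{S}^{(2)}$ as well, which contradicts the choice of path.

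This topological step is the main obstacle. It needs continuity of $\mathrm{K}_i$ in $\tilde{\boldsymbol{x}}_s$, which holds because it is a finite maximum of continuous functions. It also needs a connectedness/regularity assumption on the AFR: the AFR should be the connected region around $\boldsymbol{x}_s$ delimited by $\partial\mathcal{S}$, as in \cite{npj}. I would first try to state that assumption explicitly and use it. Failing that, I would prove the equality in the weaker form "$\partial\mathcal{S}^{(1)} = \partial\mathcal{S}^{(2)}$, hence the regions they enclose coincide."

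For the strict case, suppose that for some $\tilde{\boldsymbol{x}}_s \in \partial\mathcal{S}^{(1)}$ and some $i$ no CAE belongs to $\mathcal{Z}^{(2)}$. Because $\mathcal{Z}^{(2)}$ is finite and avoids the argmax, $\mathrm{K}_i^{(2)}(\tilde{\boldsymbol{x}}_s) < \mathrm{K}_i^{(1)}(\tilde{\boldsymbol{x}}_s) = 2\pi/\Delta_i$, where I choose the active index $i$. All other constraints hold with $\le$ for $\mathcal{Z}^{(2)}$ by the monotonicity above. Hence $\tilde{\boldsymbol{x}}_s$ lies in the interior of $\mathcal{S}_i^{(2)}$.

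The final step uses continuity. Slightly outside $\mathcal{S}^{(1)}$ along the direction where constraint $i$ is violated, there are points still satisfying the strict inequality for $\mathcal{Z}^{(2)}$. If another constraint $j$ is active at $\tilde{\boldsymbol{x}}_s$ for both arrays, I will choose the boundary point or the direction generically so that only constraint $i$ is active. Such points belong to $\mathcal{S}^{(2)} \setminus \mathcal{S}^{(1)}$, which gives the strict inclusion $\mathcal{S}^{(1)} \subset \mathcal{S}^{(2)}$.
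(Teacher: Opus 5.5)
Your skeleton matches the paper's: you compare $\mathrm{K}_i^{(1)}$ and $\mathrm{K}_i^{(2)}$ using monotonicity of the maximum under $\mathcal{Z}^{(2)}\subseteq\mathcal{Z}^{(1)}$. You also rightly notice what the paper's one-line proof skips. The hypothesis only controls points of $\partial\mathcal{S}^{(1)}$, yet the paper simply asserts that $\mathrm{K}_i$ is preserved ``for all $\tilde{\boldsymbol{x}}_s$''. Your connectedness assumption is genuinely needed for this reason.

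\textbf{The equality case has a gap.} The step meant to bridge it is invalid. From $\mathrm{K}_i^{(1)}(p)=\mathrm{K}_i^{(2)}(p)$ at the crossing point $p$ you only learn that $p\in\mathcal{S}^{(2)}$ with an active constraint. Nothing forces the path points just beyond $p$, which violate a constraint for $\mathcal{Z}^{(1)}$, to violate it for $\mathcal{Z}^{(2)}$. For the same reason, an active constraint does not make $p$ a boundary point of $\mathcal{S}^{(2)}$, so $\partial\mathcal{S}^{(1)}\subseteq\partial\mathcal{S}^{(2)}$ and your fallback are unsupported.

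The hypothesis $\mathcal{C}_i^{(1)}(p)\cap\mathcal{Z}^{(2)}\neq\emptyset$ explicitly allows ties. Suppose a kept antenna $a$ and a removed antenna $b$ both reach $2\pi/\Delta_i$ at $p$. If $|\mathrm{k}_i^{g}(a;\cdot)|$ only touches this level at $p$ while $|\mathrm{k}_i^{g}(b;\cdot)|$ crosses it, then just outside $p$ you can have $\mathrm{K}_i^{(2)}<2\pi/\Delta_i<\mathrm{K}_i^{(1)}$. If such a tie persists along the boundary, the claimed equality is false.

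The missing ingredient is local agreement. Suppose no removed antenna is critical at $p$ for the active indices, i.e.\ $\mathcal{C}_i^{(1)}(p)\subseteq\mathcal{Z}^{(2)}$. Then continuity and finiteness give $\mathrm{K}_i^{(1)}=\mathrm{K}_i^{(2)}$ on a neighbourhood of $p$. Hence $\mathcal{S}^{(2)}\setminus\mathcal{S}^{(1)}$ is relatively clopen in $\mathcal{S}^{(2)}$, and connectedness of $\mathcal{S}^{(2)}$ (with $\boldsymbol{x}_s\in\mathcal{S}^{(1)}$) gives equality; no paths are needed. You must therefore either strengthen the hypothesis to ``all CAEs survive'' or add an explicit no-tie assumption.

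\textbf{The strict case needs an explicit extra hypothesis.} The index is not yours to choose: $i$ is whichever index has $\mathcal{C}_i^{(1)}(\tilde{\boldsymbol{x}}_s)\cap\mathcal{Z}^{(2)}=\emptyset$. If that constraint is inactive at $\tilde{\boldsymbol{x}}_s$ (e.g.\ only the $x$-constraint binds there and you removed the $y$-critical antenna), lowering $\mathrm{K}_i$ changes nothing near $\tilde{\boldsymbol{x}}_s$. Likewise, if an unrelaxed constraint $j$ also binds there and governs the boundary around it, there is no generic nearby point to move to. Your continuity argument is correct once you assume that the violated index is the only active constraint at that boundary point. State this as a hypothesis, since the proposition as written can fail without it; the paper's proof also just asserts that relaxing a constraint enlarges the AFR.

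Finally, do not reinterpret the CAEs through the signed $\mathrm{k}_i^{g}$; the appeal to symmetric spectra is unjustified. Instead take $\mathrm{K}_i=\max_{\boldsymbol{z}}|\mathrm{k}_i^{g}|$, which is what the paper's proof uses and what matches the definition of $\bar{K}_i$.
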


\begin{proof}
%The result follows directly from the definitions of the AFR and the CAEs. 
If, for every $\tilde{\boldsymbol{x}}_s \in \partial\mathcal{S}^{(1)}(\boldsymbol{x}_s)$ and every dimension $i$, 
the reduced set $\mathcal{Z}^{(2)}$ contains at least one CAE of $\mathcal{Z}^{(1)}$, then the maximum spatial frequency
\[
\mathrm{K^{(1)}}_i(\tilde{\boldsymbol{x}}_s,\boldsymbol{x}_s)
= \max_{\boldsymbol{z}\in\mathcal{Z}^{(1)}} 
\big| \mathrm{k}_i^{g}(\boldsymbol{z};\tilde{\boldsymbol{x}}_s,\boldsymbol{x}_s) \big|
\]
is preserved for all $\tilde{\boldsymbol{x}}_s$ when the maximisation is restricted to $\mathcal{Z}^{(2)}$.
Hence, the aliasing constraints remain unchanged and 
$\mathcal{S}^{(1)}(\boldsymbol{x}_s)=\mathcal{S}^{(2)}(\boldsymbol{x}_s)$.
Conversely, if there exist $\tilde{\boldsymbol{x}}_s$ and a dimension $i$ such that 
$
 \mathcal{C}_{ i}^{(1)} (\tilde{\boldsymbol{x}}_s, \boldsymbol{x}_s)
\cap \mathcal{Z}^{(2)}=\emptyset,
$
then critical antennas in $\mathcal{Z}^{(2)}$ yield strictly smaller values of 
$\lvert \mathrm{K}_i^{g}\rvert$ than those attained in $\mathcal{Z}^{(1)}$ since $\mathcal{Z}^{(2)} \subseteq \mathcal{Z}^{(1)}$, relaxing the aliasing condition and enlarging the AFR. 
Consequently, $
\mathcal{S}^{(1)}(\boldsymbol{x}_s)\subset \mathcal{S}^{(2)}(\boldsymbol{x}_s).
$
\end{proof}

\noindent This proposition gives a practical criterion to determine the impact of antenna removal on the AFR.
If the reduced antenna set retains at least one critical antenna for every candidate point on the boundary of the AFR, the AFR remains unchanged. Otherwise, removing antennas strictly enlarges the AFR.

Conversely, \cref{prop:lemme_ant_critic_addition} formalises the impact on the AFR of adding antennas to a given array.

\begin{proposition}[Addition of Antennas]\label{prop:lemme_ant_critic_addition}
Let \( \mathcal{Z}^{(1)} \subseteq \mathcal{Z}^{(2)} \) be two antenna sets.  
For every candidate point \( \tilde{\boldsymbol{x}}_s \in \partial \mathcal{S}^{(1)}(\boldsymbol{x}_s) \) and every dimension \( i \), suppose that the corresponding sets of critical antenna elements satisfy
\begin{equation}
    \mathcal{C}_{ i}^{(1)}(\tilde{\boldsymbol{x}}_s, \boldsymbol{x}_s) 
    \cap 
    \mathcal{C}_{i}^{(2)}(\tilde{\boldsymbol{x}}_s, \boldsymbol{x}_s) 
    \neq \emptyset.
\end{equation}
Then the AFRs are identical, i.e.,
\begin{equation}
    \mathcal{S}^{(1)}(\boldsymbol{x}_s) = \mathcal{S}^{(2)}(\boldsymbol{x}_s).
\end{equation}
Otherwise, if there exists at least one candidate point \( \tilde{\boldsymbol{x}}_s \) and one dimension \( i \) for which the above intersection is empty, the AFR strictly shrinks with the larger antenna set:
\begin{equation}
    \mathcal{S}^{(2)}(\boldsymbol{x}_s) \subset \mathcal{S}^{(1)}(\boldsymbol{x}_s).
\end{equation}
\end{proposition}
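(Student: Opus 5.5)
The plan mirrors the proof of \cref{prop:lemme_ant_critic_removal}, with the roles of the two arrays swapped. Since $\mathcal{Z}^{(1)} \subseteq \mathcal{Z}^{(2)}$, the maximisation in \eqref{eq:Ki_chirp} (taken over $|\mathrm{k}_i^{g}|$, consistent with \cref{def:critical_antenna_element}) over $\mathcal{Z}^{(2)}$ is a relaxation of the one over $\mathcal{Z}^{(1)}$. Hence
\[
\mathrm{K}_i^{(1)}(\tilde{\boldsymbol{x}}_s,\boldsymbol{x}_s) \le \mathrm{K}_i^{(2)}(\tilde{\boldsymbol{x}}_s,\boldsymbol{x}_s) \quad \text{for all } \tilde{\boldsymbol{x}}_s \text{ and } i.
\]
This pointwise inequality recovers the inclusion principle $\mathcal{S}^{(2)}(\boldsymbol{x}_s)\subseteq\mathcal{S}^{(1)}(\boldsymbol{x}_s)$ of \cref{prop:inclusion_principle_2}, and it will be used in both directions of the argument.

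\emph{Equality case.} Fix $\tilde{\boldsymbol{x}}_s\in\partial\mathcal{S}^{(1)}(\boldsymbol{x}_s)$ and a dimension $i$. Pick $\boldsymbol{z}^\star \in \mathcal{C}_i^{(1)}\cap\mathcal{C}_i^{(2)}$. Because $\boldsymbol{z}^\star$ attains the maximum over both arrays, $\mathrm{K}_i^{(1)} = |\mathrm{k}_i^{g}(\boldsymbol{z}^\star)| = \mathrm{K}_i^{(2)}$ at that point. Consequently the constraint $\mathrm{K}_i \le 2\pi/\Delta_i$ (equal spacings are assumed, as in \cref{prop:inclusion_principle_2}) is active at exactly the same boundary points for both arrays. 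This gives $\partial\mathcal{S}^{(1)}\subseteq\mathcal{S}^{(2)}$.

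The remaining task is to pass from the boundary to the whole region. I would argue that $\mathcal{S}^{(2)}\subseteq\mathcal{S}^{(1)}$ and that the two regions share the boundary on which the aliasing constraint binds. Under the implicit regularity the paper uses (the AFR is the closed sublevel set of the continuous function $\max_i \mathrm{K}_i\Delta_i$, and it is delimited by its boundary), it follows that $\mathcal{S}^{(1)}=\mathcal{S}^{(2)}$. Concretely, suppose a point of $\mathcal{S}^{(1)}$ lay outside $\mathcal{S}^{(2)}$. Since $\mathcal{S}^{(2)}\subseteq\mathcal{S}^{(1)}$, the boundary of $\mathcal{S}^{(2)}$ would then pass through the interior of $\mathcal{S}^{(1)}$, and there $\mathrm{K}_i^{(2)}=2\pi/\Delta_i > \mathrm{K}_i^{(1)}$.

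\emph{Strict case.} Suppose instead that some boundary point $\tilde{\boldsymbol{x}}_s$ and dimension $i$ have $\mathcal{C}_i^{(1)}\cap\mathcal{C}_i^{(2)}=\emptyset$. Then the maximiser over $\mathcal{Z}^{(2)}$ is not a maximiser over $\mathcal{Z}^{(1)}$, so it lies in $\mathcal{Z}^{(2)}\setminus\mathcal{Z}^{(1)}$ and gives $\mathrm{K}_i^{(2)}(\tilde{\boldsymbol{x}}_s)>\mathrm{K}_i^{(1)}(\tilde{\boldsymbol{x}}_s)=2\pi/\Delta_i$. Indeed, if the maximum values were equal, the CAEs of $\mathcal{Z}^{(1)}$ would also be CAEs of $\mathcal{Z}^{(2)}$. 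Hence $\tilde{\boldsymbol{x}}_s\in\mathcal{S}^{(1)}\setminus\mathcal{S}^{(2)}$, and combined with the inclusion principle this yields the strict inclusion.

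The main obstacle is the boundary-to-region step in the equality case. The hypotheses only constrain points of $\partial\mathcal{S}^{(1)}$, so one must rule out new violations deep inside $\mathcal{S}^{(1)}$ created by the added antennas. I would first try a continuity/connectedness argument along paths from $\boldsymbol{x}_s$ (where $\mathrm{k}^g\equiv 0$) to $\partial\mathcal{S}^{(1)}$. If that fails, I would state explicitly the regularity assumption already used implicitly in the proof of \cref{prop:lemme_ant_critic_removal}: the AFR is determined by its boundary constraints.
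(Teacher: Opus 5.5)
The route is the paper's: pointwise $\mathrm{K}_i^{(1)}\le\mathrm{K}_i^{(2)}$, a shared CAE giving $\mathrm{K}_i^{(1)}=\mathrm{K}_i^{(2)}$ on $\partial\mathcal{S}^{(1)}$, and disjoint CAE sets giving a strict increase. Those steps are fine. The paper's proof then simply asserts that equal $\mathrm{K}_i$ on the boundary leaves ``the aliasing conditions unchanged''. The obstacle you flag is therefore genuine and is not addressed there either, but your proposal does not close it. Your ``concretely'' paragraph reaches a point with $\mathrm{K}_i^{(2)}=2\pi/\Delta_i>\mathrm{K}_i^{(1)}$ inside $\mathcal{S}^{(1)}$ and stops without a contradiction. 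Continuity along paths from $\boldsymbol{x}_s$ does not by itself stop a path from leaving $\mathcal{S}^{(2)}$ and re-entering it. And the fallback regularity assumption is the conclusion itself.

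The missing idea is the ray structure of \eqref{eq:local_spatial_frequency_g}. For fixed $\boldsymbol{z}$, $\mathrm{k}_i^{g}(\boldsymbol{z};\tilde{\boldsymbol{x}}_s,\boldsymbol{x}_s)$ depends on $\tilde{\boldsymbol{x}}_s$ only through the direction of $\tilde{\boldsymbol{x}}_s-\boldsymbol{z}$. This also holds for the polar axes, whose components are projections of that vector onto the local frame at $\boldsymbol{z}$. Now take $\boldsymbol{p}\in\mathcal{S}^{(1)}\setminus\mathcal{S}^{(2)}$. Some $\boldsymbol{z}'\in\mathcal{Z}^{(2)}\setminus\mathcal{Z}^{(1)}$ and axis $i$ give $|\mathrm{k}_i^{g}(\boldsymbol{z}';\boldsymbol{p},\boldsymbol{x}_s)|>2\pi/\Delta_i$, and this value is the same at every point of the ray from $\boldsymbol{z}'$ through $\boldsymbol{p}$. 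Follow that ray beyond $\boldsymbol{p}$ up to the point $\boldsymbol{q}$ where it leaves the closed set $\mathcal{S}^{(1)}$; this happens if, e.g., $\mathcal{S}^{(1)}$ is bounded. Then $\boldsymbol{q}\in\partial\mathcal{S}^{(1)}$ and $|\mathrm{k}_i^{g}(\boldsymbol{z}';\boldsymbol{q},\boldsymbol{x}_s)|>2\pi/\Delta_i\ge\mathrm{K}_i^{(1)}(\boldsymbol{q},\boldsymbol{x}_s)$. So no element of $\mathcal{Z}^{(1)}$ is critical for $\mathcal{Z}^{(2)}$ at $(\boldsymbol{q},i)$, i.e.\ $\mathcal{C}_i^{(1)}\cap\mathcal{C}_i^{(2)}=\emptyset$ there, contradicting the hypothesis. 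Your path idea thus works along this ray rather than from $\boldsymbol{x}_s$, at the price of a boundedness-type assumption that the statement leaves implicit.

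The strict case has a separate gap. At $\tilde{\boldsymbol{x}}_s\in\partial\mathcal{S}^{(1)}$, continuity only gives $\max_j \mathrm{K}_j^{(1)}(\tilde{\boldsymbol{x}}_s,\boldsymbol{x}_s)\Delta_j/(2\pi)=1$. That is, \emph{some} axis is active, not necessarily the axis $i$ at which the CAE sets are disjoint. So your equality $\mathrm{K}_i^{(1)}(\tilde{\boldsymbol{x}}_s,\boldsymbol{x}_s)=2\pi/\Delta_i$ is unjustified. If axis $i$ is slack at $\tilde{\boldsymbol{x}}_s$, the strict increase of $\mathrm{K}_i$ need not expel $\tilde{\boldsymbol{x}}_s$ from $\mathcal{S}^{(2)}$. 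For instance, with $\Delta_y<\lambda/2$ one has $\mathrm{K}_y\le 2k<2\pi/\Delta_y$ everywhere (cf.~\cref{prop:safe_spacing_cartesian}), so disjoint $y$-CAE sets at a boundary point cannot by themselves shrink the AFR. Your argument (and the ``otherwise'' clause) is valid when $i$ is active at $\tilde{\boldsymbol{x}}_s$. This is automatic for a single-axis AFR such as that of a linear array. The paper's proof passes over this point in the same way.
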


\begin{proof}  
If every boundary point \( \tilde{\boldsymbol{x}}_s \in \partial \mathcal{S}^{(1)}(\boldsymbol{x}_s) \) and every dimension \( i \) share at least one critical antenna element between \( \mathcal{Z}^{(1)} \) and \( \mathcal{Z}^{(2)} \), then the maximum spatial frequencies coincide, i.e. \( \mathrm{K}_i^{(1)} = \mathrm{K}_i^{(2)} \). Hence, the aliasing conditions in \eqref{eq:aliasing_conditions} are unchanged and $
\mathcal{S}^{(1)}(\boldsymbol{x}_s) = \mathcal{S}^{(2)}(\boldsymbol{x}_s).
$ 
Otherwise, if for some \( (\tilde{\boldsymbol{x}}_s,i) \), no common CAE exists, then the larger set \( \mathcal{Z}^{(2)} \) introduces new critical antenna elements absent from \( \mathcal{Z}^{(1)} \). This implies that the maximum spatial frequency strictly increases since \( \mathcal{Z}^{(1)} \subseteq \mathcal{Z}^{(2)} \), i.e.  
$
\mathrm{K}_{i}^{(2)}(\tilde{\boldsymbol{x}}_s, \boldsymbol{x}_s) > \mathrm{K}_{i}^{(1)}(\tilde{\boldsymbol{x}}_s, \boldsymbol{x}_s).
$  
As a result, the aliasing conditions in \eqref{eq:aliasing_conditions} become more restrictive, leading to a reduction of the AFR:  
$
\mathcal{S}^{(2)}(\boldsymbol{x}_s) \subset \mathcal{S}^{(1)}(\boldsymbol{x}_s).
$  
\end{proof}

\noindent In other words, \cref{prop:lemme_ant_critic_addition} states that if a newly added antenna becomes critical for a point of the AFR contour $\partial \mathcal{S}$ --replacing all previously critical elements for this point--the maximum spatial frequency $\mathrm{K}_i$ increases, tightening the aliasing constraints and reducing the AFR. Otherwise, $\mathrm{K}_i$ remains unchanged, and the AFR is preserved.

Propositions \ref{prop:lemme_ant_critic_removal} and \ref{prop:lemme_ant_critic_addition} provide practical criteria, based on critical antenna elements, to assess how adding or removing antennas affects the AFR. They enable analysis of the influence of system parameters--array geometry, inter-element spacing, and source location--on the AFR, as conducted in the following sections.

%%%%%%%%%%% Cartesian Array Analysis %%%%%%%%%%%

\section{Rectangular Array Analysis}
\label{sec:point_scatterer}

This section investigates the impact of system parameters on the AFR and resolution for rectangular arrays, using the framework of \cref{sec:geometrical_tools}. The results highlight the interplay and trade-offs between these metrics as functions of array configuration. Numerical results illustrate the effects, with local spatial frequencies $\boldsymbol{\mathrm{k}}^{h}$ and $\boldsymbol{\mathrm{k}}^{g}$ along $x$- and $y$-axes given by \eqref{eq:local_spatial_frequency_h} and \eqref{eq:local_spatial_frequency_g}, respectively.

\subsection{Safe Antenna Spacing and Far-Field Compliance}
\label{sec:safe_ant_spacing_linear}

\cref{prop:safe_spacing_cartesian} provides a condition on the antenna spacing to avoid aliasing in rectangular array geometries.

\begin{proposition}\label{prop:safe_spacing_cartesian}
    For a rectangular antenna array geometry, a sufficient condition to avoid aliasing for all $\boldsymbol{\tilde{x}}_s$ is that the antenna spacing $\Delta_i$ along the $i$-th direction satisfies
    \begin{equation}
        \Delta_i \leq \frac{\lambda}{2},
    \label{eq:safe_spacing_cartesian}
    \end{equation}
\end{proposition}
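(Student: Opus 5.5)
The plan is to bound the chirp maximum spatial frequency $\mathrm{K}_i(\tilde{\boldsymbol{x}}_s,\boldsymbol{x}_s)$ uniformly over all antenna positions, source locations and tentative locations. Once that bound is in hand, the directional aliasing-free condition in \eqref{eq:def_afr} holds for every $\tilde{\boldsymbol{x}}_s$. The AFR of Definition~\ref{def:afr} is then all of $\mathbb{R}^d$.

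The key step uses the explicit form \eqref{eq:local_spatial_frequency_g}. At any antenna position $\boldsymbol{z}$, the vector $\boldsymbol{\mathrm{k}}^{g}(\boldsymbol{z};\tilde{\boldsymbol{x}}_s,\boldsymbol{x}_s)$ is $k$ times the difference of two unit vectors, $\boldsymbol{u}=(\boldsymbol{x}_s-\boldsymbol{z})/\|\boldsymbol{x}_s-\boldsymbol{z}\|$ and $\tilde{\boldsymbol{u}}=(\tilde{\boldsymbol{x}}_s-\boldsymbol{z})/\|\tilde{\boldsymbol{x}}_s-\boldsymbol{z}\|$. For a rectangular geometry each axis $i\in\{x,y,z\}$ is a fixed Cartesian direction with unit vector $\boldsymbol{e}_i$, so
\[
\bigl|\mathrm{k}_i^{g}(\boldsymbol{z};\tilde{\boldsymbol{x}}_s,\boldsymbol{x}_s)\bigr| = k\,\bigl|(\boldsymbol{u}-\tilde{\boldsymbol{u}})\cdot \boldsymbol{e}_i\bigr| \le k\bigl(|u_i|+|\tilde u_i|\bigr)\le 2k = \frac{4\pi}{\lambda}.
\]
This bound is independent of $\boldsymbol{z}$, $\boldsymbol{x}_s$ and $\tilde{\boldsymbol{x}}_s$. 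Taking the maximum over $\boldsymbol{z}\in\mathcal{Z}$ as in \eqref{eq:Ki_chirp} (or over $|\mathrm{k}_i^g|$, consistent with Definition~\ref{def:critical_antenna_element}) gives $\mathrm{K}_i\le 4\pi/\lambda$ for any array, finite or infinite, so no argument about which antennas are critical is needed.

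The last step combines this with the hypothesis. If $\Delta_i\le\lambda/2$, then $2\pi/\Delta_i\ge 4\pi/\lambda\ge \mathrm{K}_i(\tilde{\boldsymbol{x}}_s,\boldsymbol{x}_s)$. Hence \eqref{eq:aliasing_conditions} holds along every axis and every $\tilde{\boldsymbol{x}}_s$ belongs to $\mathcal{S}_i(\boldsymbol{x}_s)$ for all $i$, and thus to $\mathcal{S}(\boldsymbol{x}_s)$. I would add a remark that the bound is tight: $\boldsymbol{u}=\boldsymbol{e}_i$ and $\tilde{\boldsymbol{u}}=-\boldsymbol{e}_i$ arise when $\boldsymbol{z}$ lies on the segment between $\tilde{\boldsymbol{x}}_s$ and $\boldsymbol{x}_s$, both aligned with axis $i$. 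This is why $\lambda/2$, the familiar far-field value, cannot be relaxed uniformly and is merely sufficient. That remark motivates the ``far-field compliance'' reading in the section title.

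The only real obstacle is making clear that the argument is for Cartesian components. For polar or spherical coordinates $\mathrm{k}^g_\theta$ is not a projection onto a fixed unit vector, which is presumably why the proposition is restricted to rectangular arrays. A second subtlety is that the argument rests on the chirp approximation of $\bar{K}_i$ by $\mathrm{K}_i$ rather than on the strict spectral support. I would state the result explicitly within the chirp framework adopted for the AFR.
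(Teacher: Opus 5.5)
Your proposal is correct and follows the same route as the paper. Both arguments bound $\mathrm{K}_i(\tilde{\boldsymbol{x}}_s,\boldsymbol{x}_s)$ by $2k$, since $\boldsymbol{\mathrm{k}}^{g}$ in \eqref{eq:local_spatial_frequency_g} is $k$ times a difference of two unit vectors, and then compare $2k$ with $2\pi/\Delta_i$; your triangle-inequality step $|u_i-\tilde u_i|\le |u_i|+|\tilde u_i|\le 2$ makes explicit the paper's claim that $2k$ is the ``maximum possible value'' of $\mathrm{K}_i$, and states the result as a sufficient inequality rather than the paper's equality $\Delta_i = 2\pi/\mathrm{K}_i$.
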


\begin{proof}
Aliasing-free sampling requires $\mathrm{K}_i(\tilde{\boldsymbol{x}}_s,\boldsymbol{x}_s)\le 2\pi/\Delta_i$ \eqref{eq:aliasing_conditions}.  
From \eqref{eq:local_spatial_frequency_g}, $\mathrm{K}_i(\tilde{\boldsymbol{x}}_s,\boldsymbol{x}_s)=2k$ at its maximum possible value, so the safe spacing along the $i$-th direction is
$\Delta_i = 2\pi/\mathrm{K}_i = \pi/k = \lambda/{2}. $ 
\end{proof}

This condition, derived from the chirp-based near-field approximation, reduces to the classical far-field non-aliasing constraint of a maximum inter-element spacing of $\lambda/2$ \cite{li_sparse_2025}. In practice, however--particularly at high carrier frequencies or under hardware constraints--this requirement may be difficult to satisfy, making the determination of the AFR essential.

Furthermore, as shown in Appendix~\ref{appendix:ff_conditions_cartesian}, applying the chirp-based near-field framework to the far-field regime recovers the standard aliasing structure based on directional cosines. This confirms that the far-field model is a special case of the general near-field formulation developed in this article.

\subsection{Impact of System Parameters}

In \cref{fig:impact_parameters_cart}, the influence of various system parameters on both AFR and resolution is illustrated for a point source. 

Each subfigure contains:
\begin{itemize}
    \item The AFR boundary $\partial \mathcal{S}(\boldsymbol{x}_s)$ in yellow. The interior corresponds to candidate points $\tilde{\boldsymbol{x}}_s$ satisfying the aliasing conditions in \eqref{eq:aliasing_conditions} using the spatial frequency estimates of \eqref{eq:Ki_chirp}. Outside this region, aliasing artefacts are expected.
    \item The magenta contour of the resolution region $\mathcal{R}(\boldsymbol{x}_s)$, defined as
    \begin{equation} 
    \mathcal{R}(\boldsymbol{x}_s) \triangleq \left\{ \tilde{\boldsymbol{x}}_s : |\tilde{x}_{s,i} - x_{s,i}| \leq \delta_i(\boldsymbol{x}_s)/2 , \, \, \forall i \right\}, 
    \end{equation}
    where $\delta_i(\boldsymbol{x}_s)$ is defined in \eqref{eq:resolution}. Smaller regions reflect improved resolution and better focusing. 
    \item The boundary of the intersection of the axis-wise NCZs, $\partial(\mathcal{C}_x \cap \mathcal{C}_y)$, is shown as a dashed curve in the corresponding array colour. Antennas added in this region do not contribute to resolution improvement in any direction. 
    \item When relevant, AFR-critical antennas are circled in white.
\end{itemize}

Comparing numerical results with the chirp-based AFR predictions shows that the theoretical AFR matches the observed artefact-free area of the $AF$ for the different configurations. Inside the AFR, the $AF$ response matches that of the non-aliased case. Outside it, aliasing yields spurious high-magnitude responses that can be mistaken for genuine sources, degrading localisation reliability.

In all configurations, the AFR remains centered around the true source $\boldsymbol{x}_s. $%, indicating that the likelihood of aliasing increases when the candidate location $\tilde{\boldsymbol{x}}_s$ moves away from the true source position $\boldsymbol{x}_s$. 
As the reconstruction point \( \tilde{\boldsymbol{x}}_s \) moves away from \( \boldsymbol{x}_s \)\footnote{Specifically, in a direction not aligned with \( \nabla_{\boldsymbol{z}} \| \boldsymbol{x}_s-\boldsymbol{z}_{} \| \).}, the wave vectors \( k \nabla_{\boldsymbol{z}} \|  \tilde{\boldsymbol{x}}_s - \boldsymbol{z} \| \) and \( k \nabla_{\boldsymbol{z}} \|  \boldsymbol{x}_s - \boldsymbol{z} \| \) become increasingly misaligned (see Figure~\ref{fig:chirp_illustration}). Their difference--the local wavenumber vector \( \boldsymbol{\mathrm{k}}^{g}(\tilde{\boldsymbol{x}}_s, \boldsymbol{x}_s) \)--thus grows in magnitude, increasing the local spatial frequencies and the maximum frequency $\mathrm{K}_i$. Larger $\mathrm{K}_i$ makes the aliasing condition \eqref{eq:aliasing_conditions} more difficult to satisfy, explaining why aliasing becomes more likely as the reconstruction point departs from the source.

\begin{figure*}
\centering 

\includegraphics[width=1\linewidth, trim=0 45 0 45, clip]{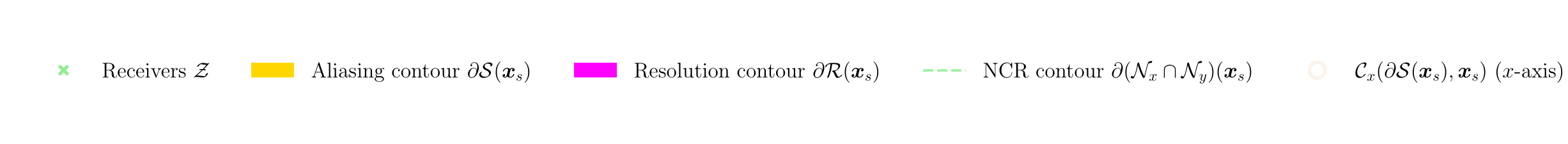}

\vspace{0.3cm}

% 1ère Subfigure
\begin{subfigure}{0.47\linewidth}
\centering 
\includegraphics[width=0.95\linewidth, trim=20 382 160 20, clip]{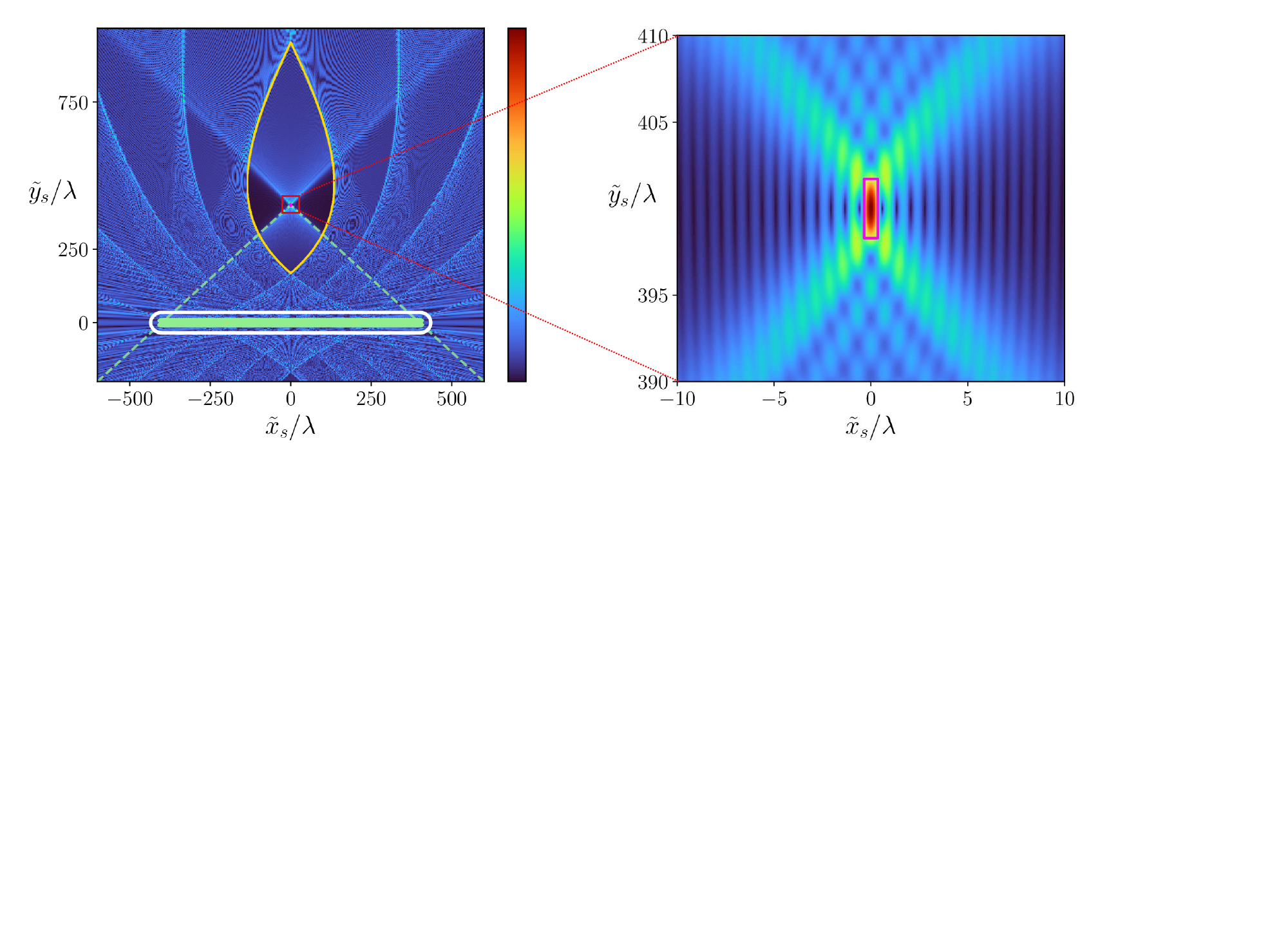}
\caption{Source location: $\boldsymbol{x}_s = (0;400)\lambda$. Array length: $D_x=800\lambda$. Number of antennas: $N=256$. }
\label{fig:impact_a}
\end{subfigure}
\hfill
% 2ème Subfigure
\begin{subfigure}{0.47\linewidth}
\centering 
\includegraphics[width=0.95\linewidth, trim=20 382 160 20, clip]{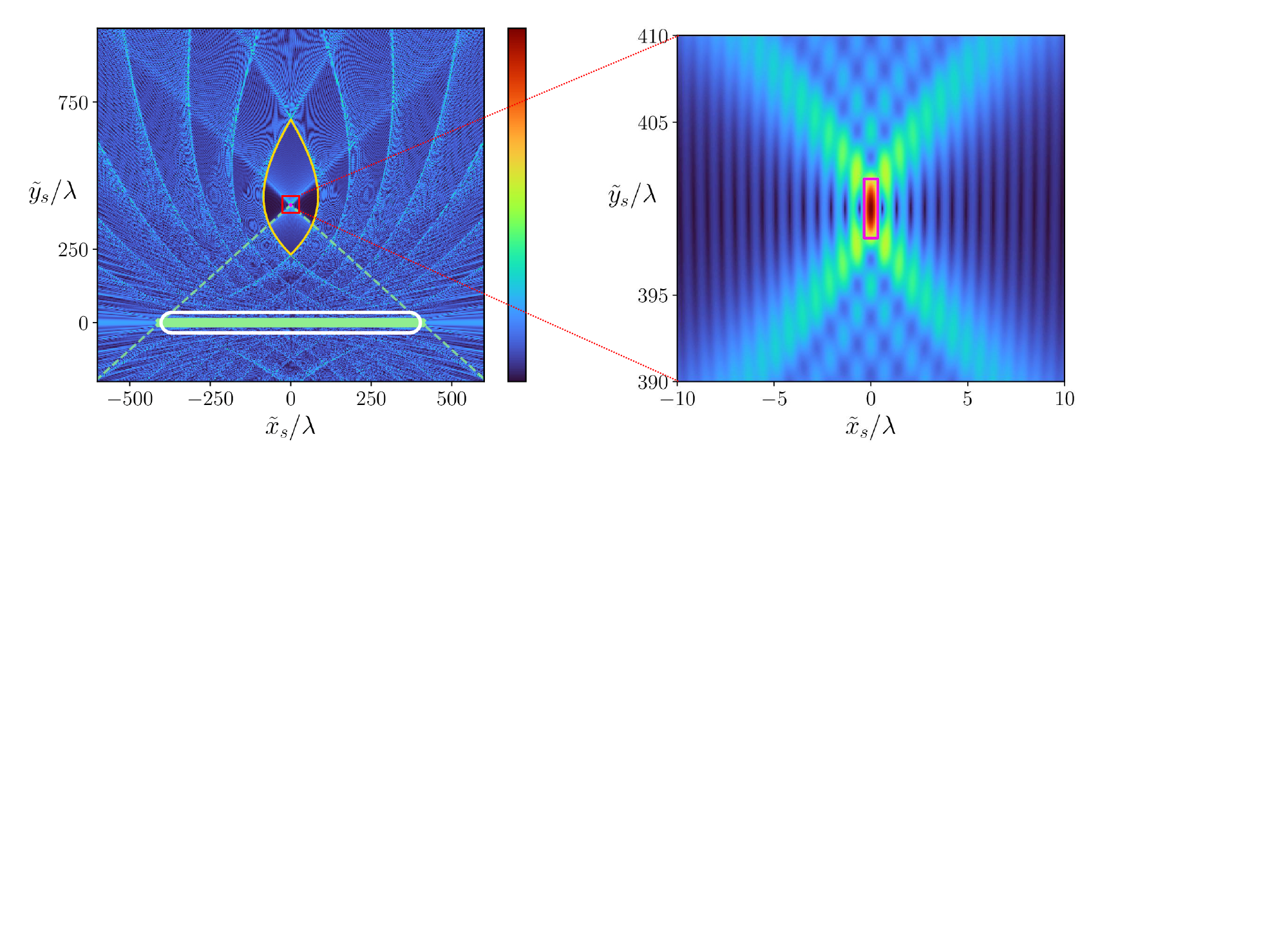}
\caption{Source location: $\boldsymbol{x}_s = (0;400)\lambda$. Array length: $D_x=800\lambda$. Number of antennas: $N=168$.}
\label{fig:impact_b}
\end{subfigure}

\vspace{0.35cm}

% 3ème Subfigure
\begin{subfigure}{0.47\linewidth}
\centering 
\includegraphics[width=0.95\linewidth, trim=20 382 160 20, clip]{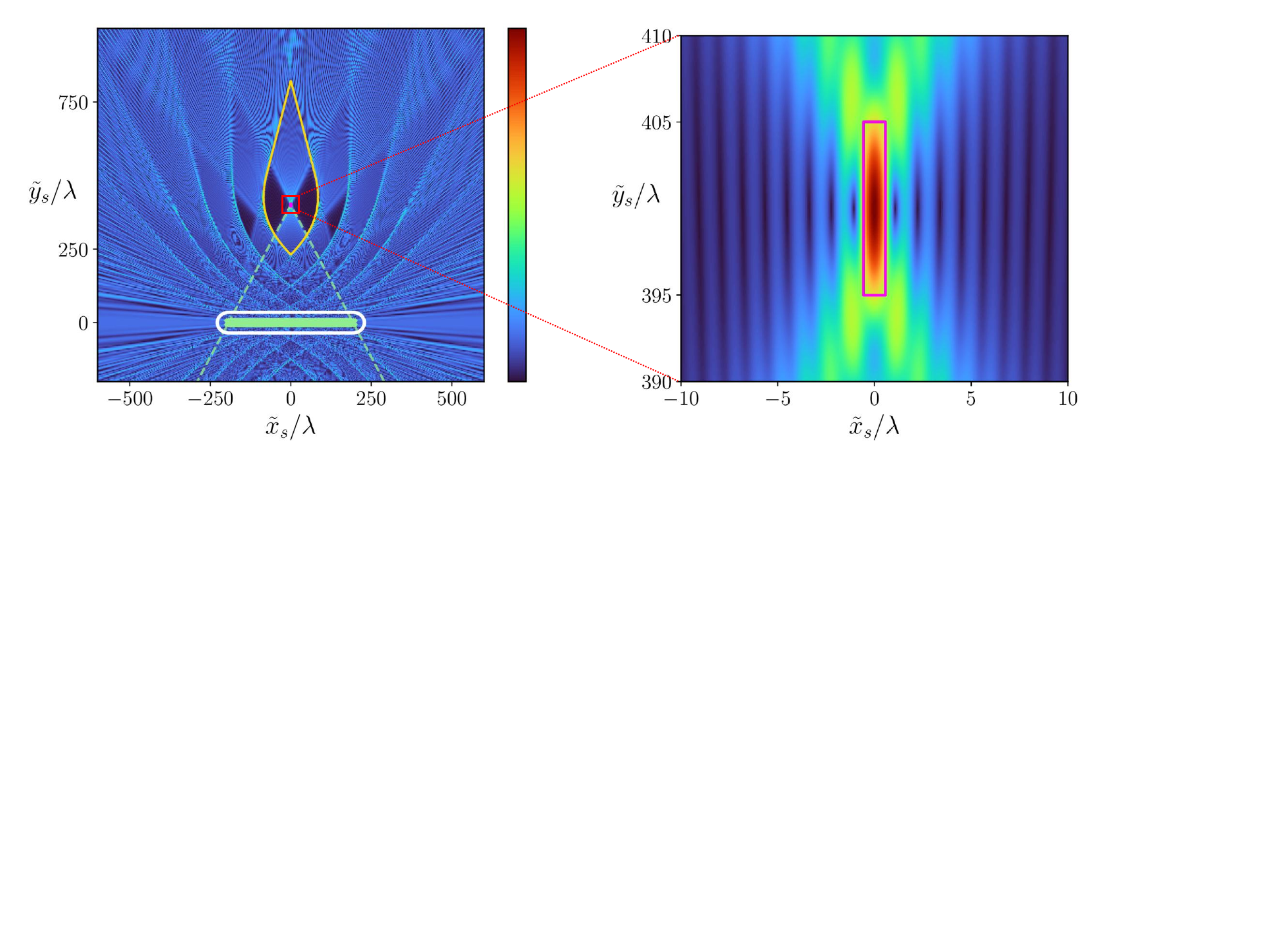}
\caption{Source location: $\boldsymbol{x}_s = (0;400)\lambda$. Array length: $D_x=400\lambda$. Number of antennas: $N=84$. }
\label{fig:impact_c}
\end{subfigure}
\hfill
% 4ème Subfigure
\begin{subfigure}{0.47\linewidth}
\centering 
\includegraphics[width=0.95\linewidth, trim=20 382 160 20, clip]{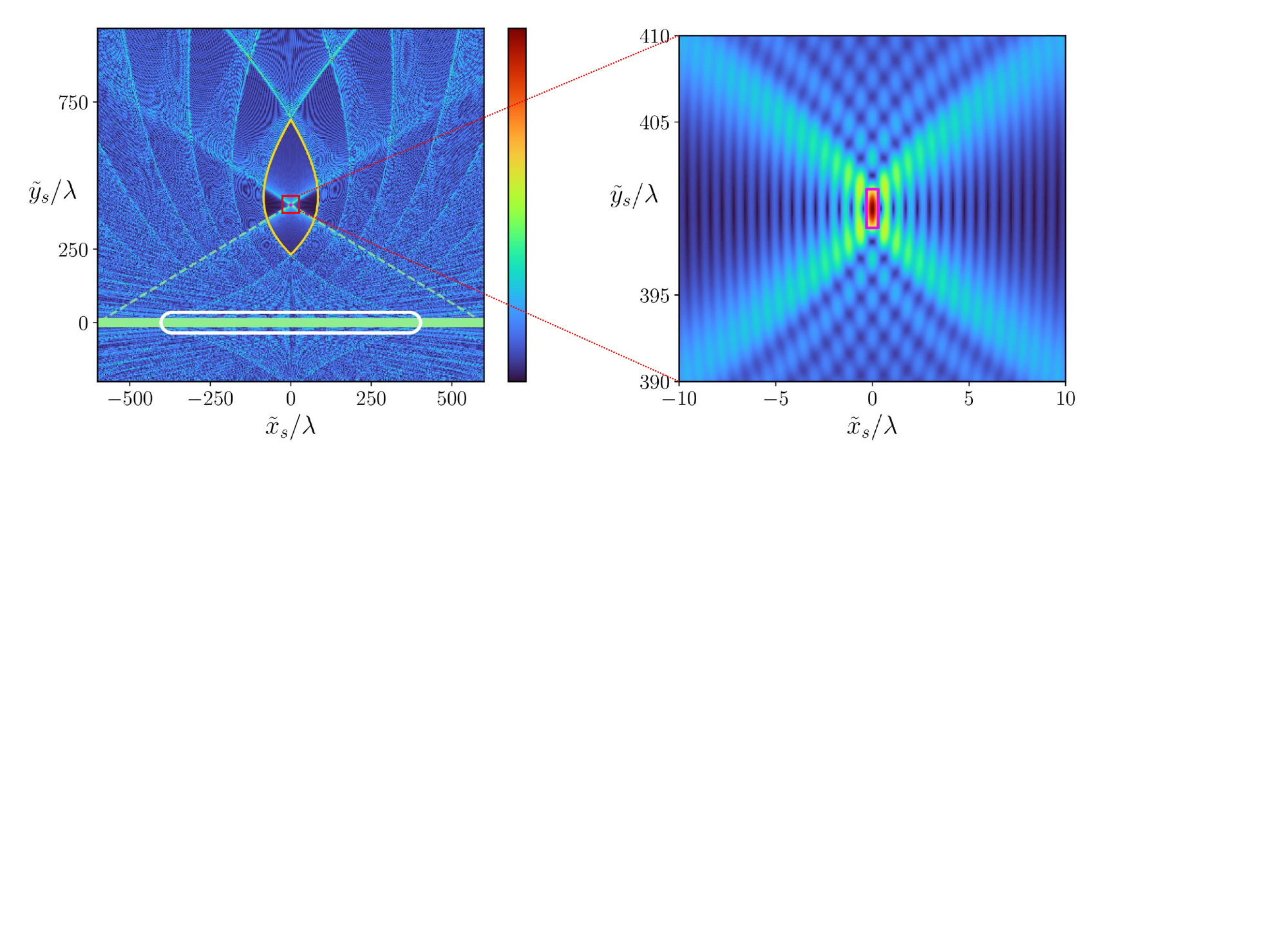}
\caption{Source location: $\boldsymbol{x}_s = (0;400)\lambda$. Array length: $D_x=1200\lambda$. Number of antennas: $N=248$.}
\label{fig:impact_d}
\end{subfigure}

\vspace{0.35cm}

% 5ème Subfigure
\begin{subfigure}{0.47\linewidth}
\centering 
\includegraphics[width=0.95\linewidth, trim=20 382 160 20, clip]{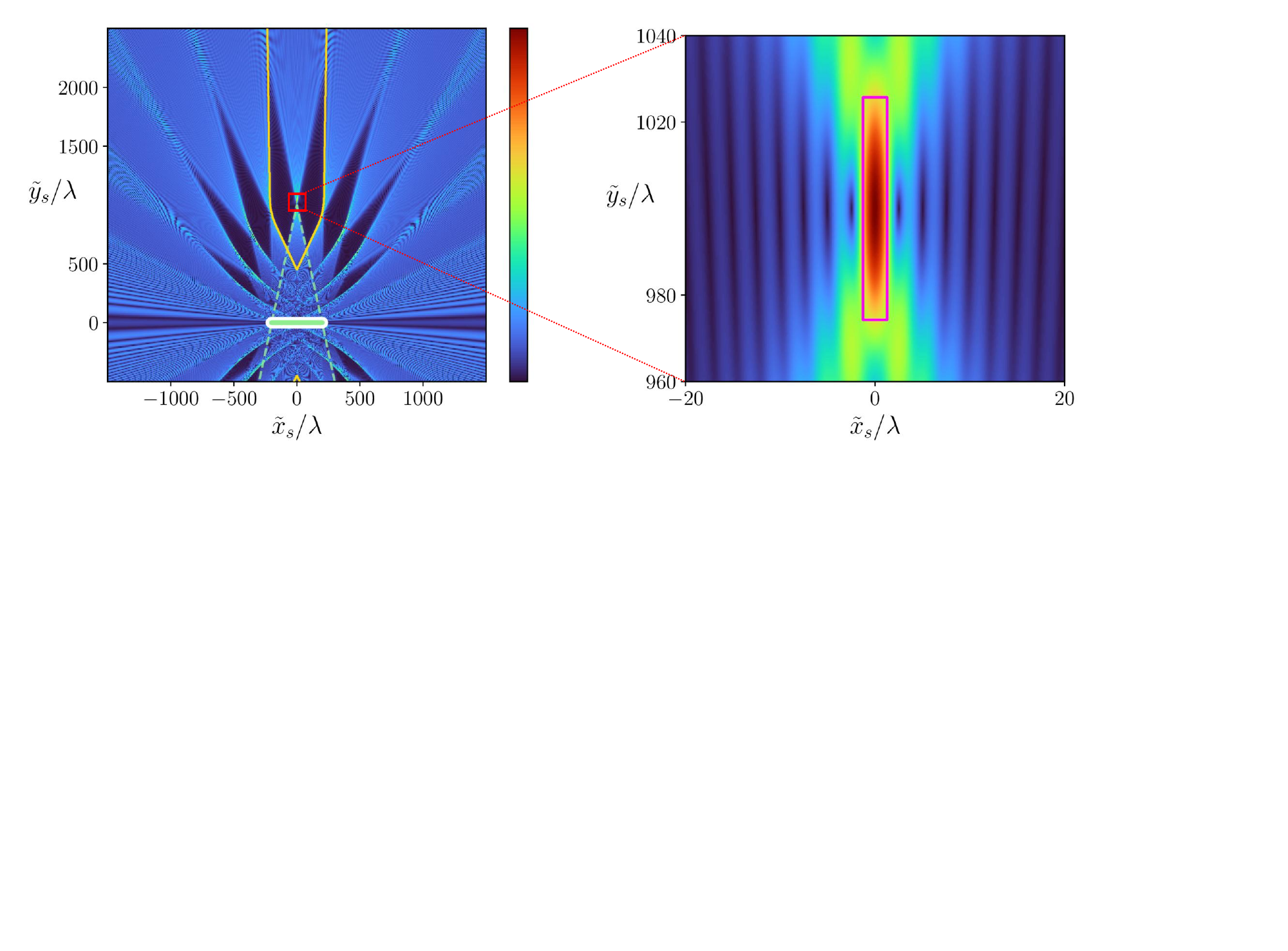}
\caption{Source location: $\boldsymbol{x}_s = (0;1000)\lambda$. Array length: $D_x=400\lambda$. Number of antennas: $N=84$. }
\label{fig:impact_e}
\end{subfigure}
\hfill
% 6ème Subfigure
\begin{subfigure}{0.47\linewidth}
\centering 
\includegraphics[width=0.95\linewidth, trim=20 382 160 20, clip]{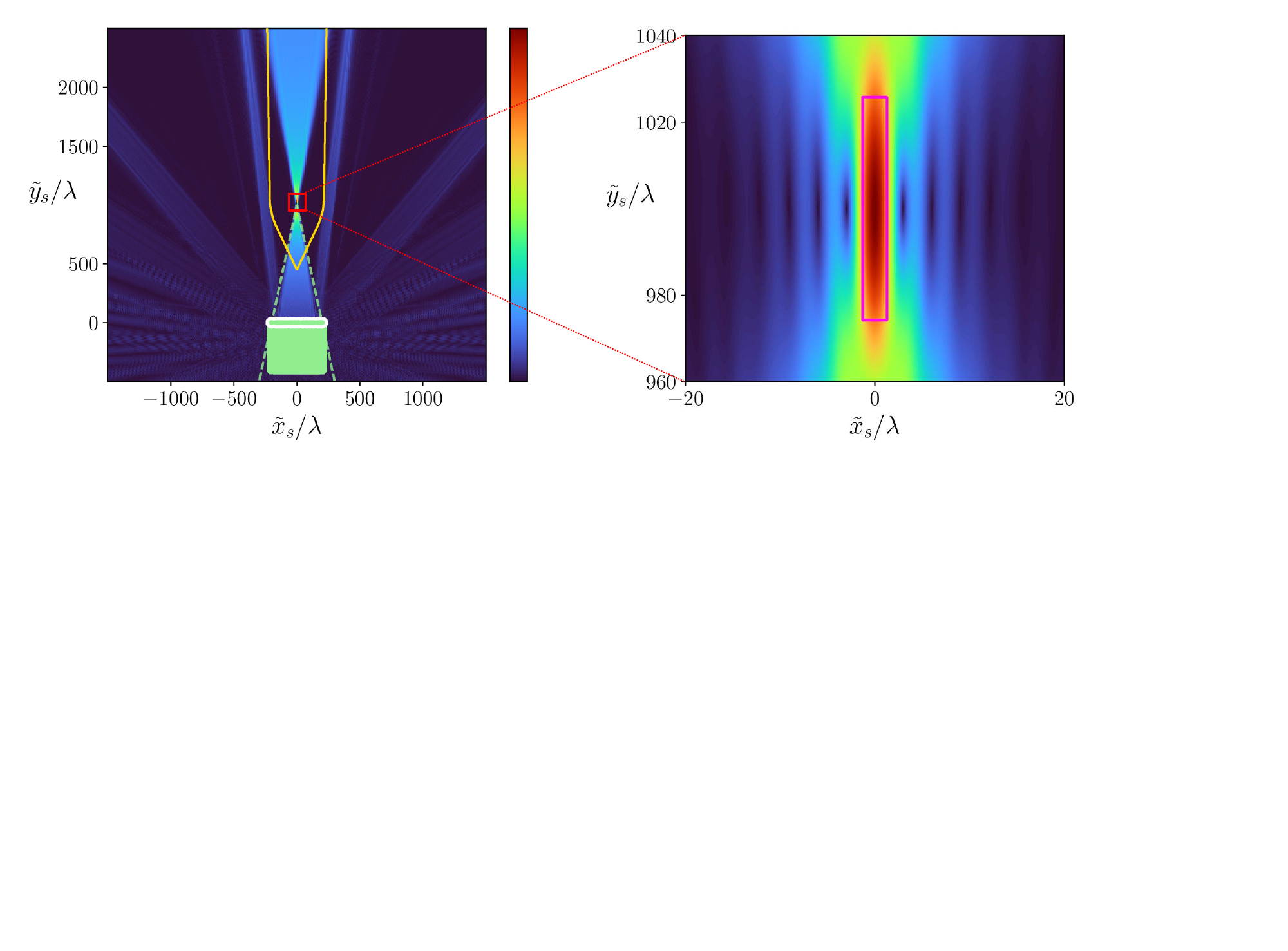}
\caption{Source location: $\boldsymbol{x}_s = (0;1000)\lambda$. Array dimensions: $D_x=400\lambda$, $D_y=400\lambda$. Number of antennas: $N=84\times84$. }
\label{fig:impact_f}
\end{subfigure}

\vspace{0.35cm}

% 7ème Subfigure
\begin{subfigure}{0.47\linewidth}
\centering 
\includegraphics[width=0.95\linewidth, trim=20 382 160 20, clip]{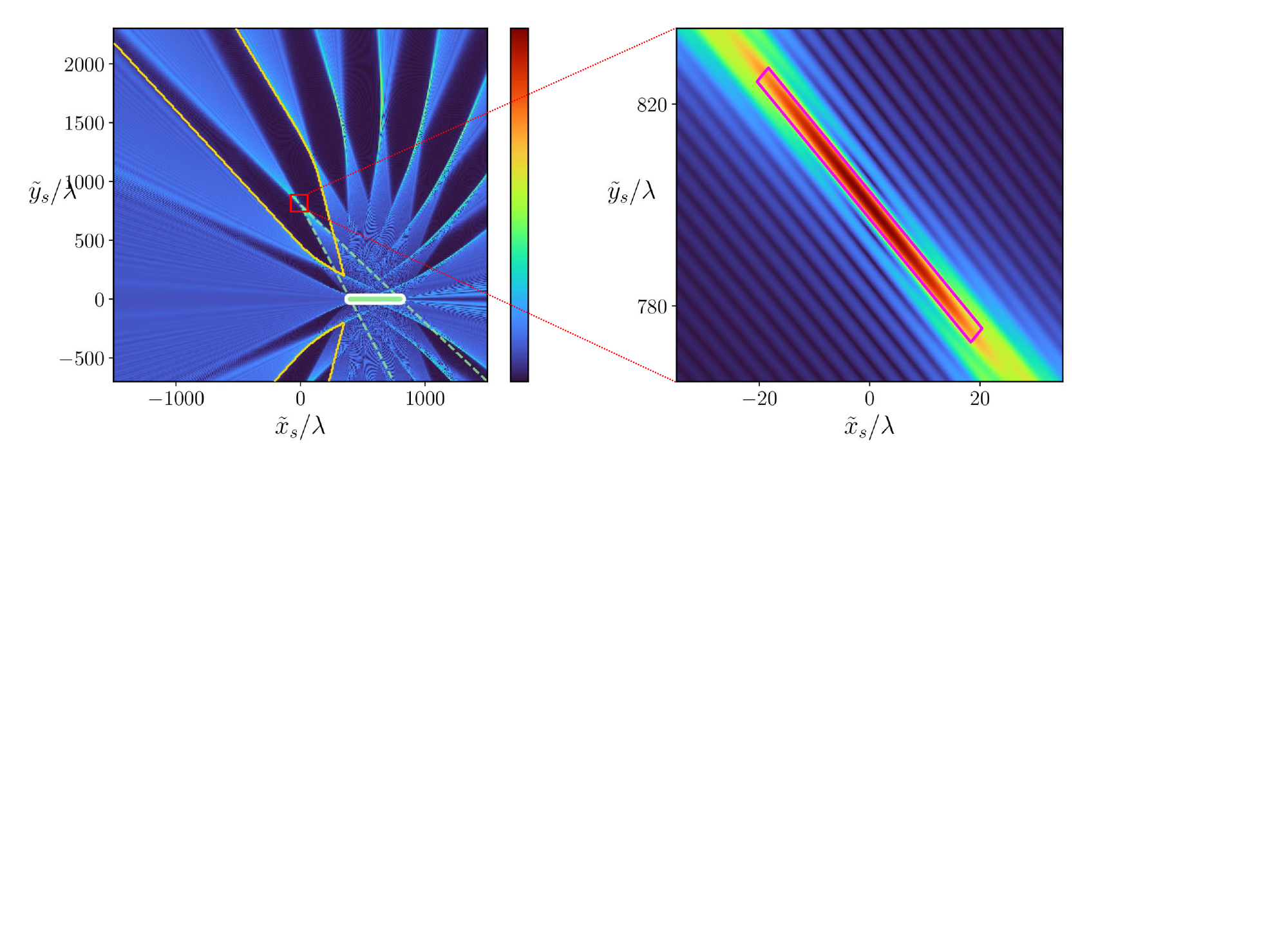}
\caption{Source location: $\boldsymbol{x}_s = (0;1000)\lambda$. Array length: $D_x=400\lambda$. Number of antennas: $N=84$. Array shifted by $600\lambda$ along $x$. }
\label{fig:impact_g}
\end{subfigure}
\hfill
% 8ème Subfigure
\begin{subfigure}{0.47\linewidth}
\centering 
\includegraphics[width=0.95\linewidth, trim=20 382 160 20, clip]{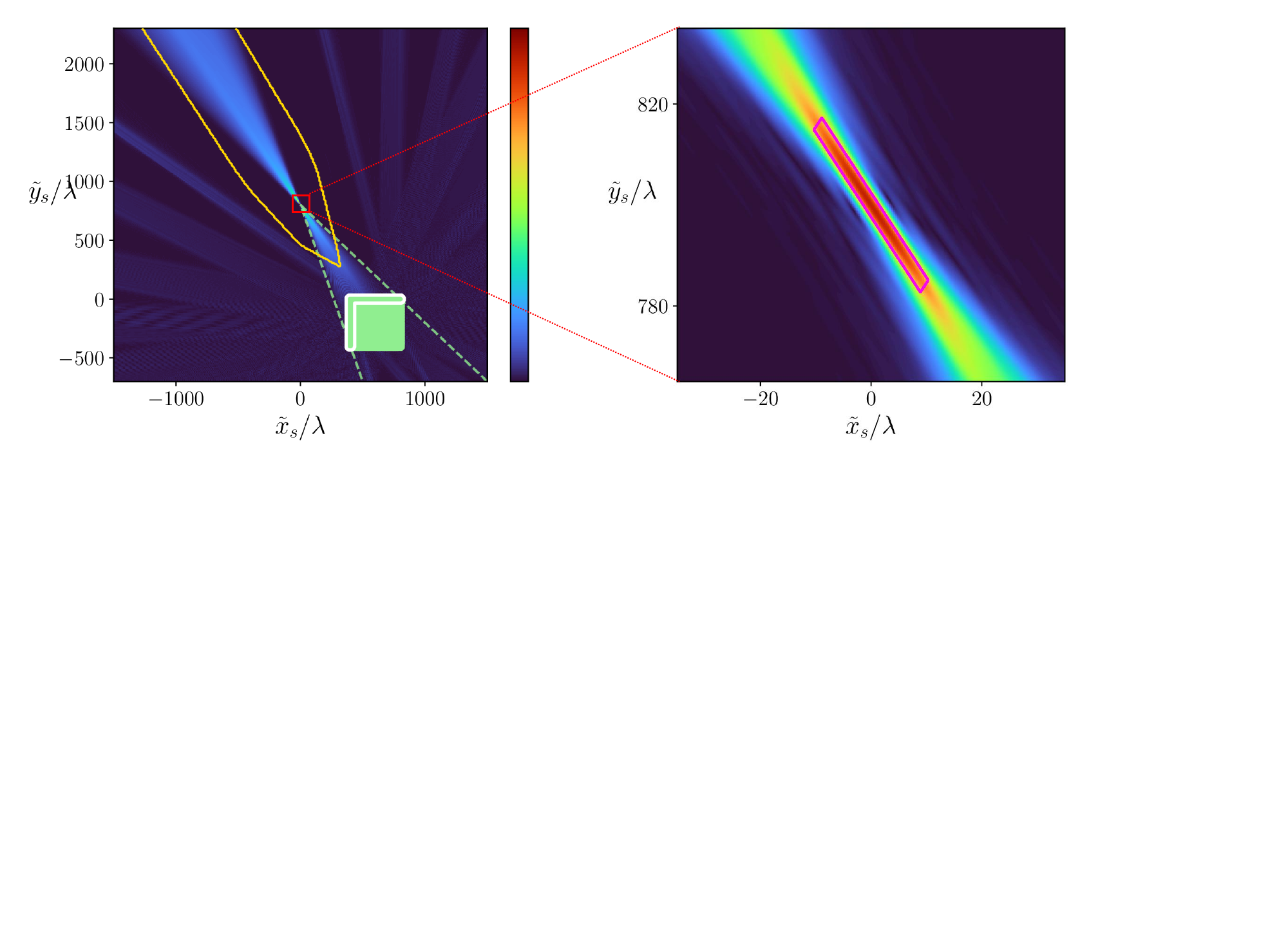}
\caption{Source location: $\boldsymbol{x}_s = (0;1000)\lambda$. Array dimensions: $D_x=400\lambda$, $D_y=400\lambda$. Antennas: $N=84\times84$. Array center: $(600;-200)\lambda$. }
\label{fig:impact_h}
\end{subfigure}

\caption{Impact of system parameters on the $AF$ for a point source with rectangular arrays. }

\label{fig:impact_parameters_cart}
\end{figure*}

The trade-offs between resolution and AFR size are examined next by varying each system parameter independently while keeping all others fixed. The following key observations can be made.

\subsubsection{Antenna Spacing}

Comparing Figures \ref{fig:impact_a} and \ref{fig:impact_b}, which differ only in antenna spacing ( \( N=256 \) in \cref{fig:impact_a} vs. \( N=168 \) in \cref{fig:impact_b} over the same aperture \( D_x=800\lambda \) ), shows that larger spacing yields a significantly smaller AFR. This occurs because the spatial spectrum \(G\) replicates more densely with a period \(2\pi/\Delta_x\) that decreases with larger spacing $\Delta_x$, raising the risk of aliasing as per~\eqref{eq:aliasing_conditions}. 

Nevertheless, because the aperture remains fixed, the resolution is unchanged, as confirmed by identical NCZ boundaries and resolution region boundaries$\partial \mathcal{R}(\boldsymbol{x}_s)$. This observation is consistent with~\cite{ahmed2014electronic} and can be clearly seen in the zoomed-in plots. This illustrates the following interplay: increasing antenna spacing preserves resolution but reduces the AFR, increasing aliasing risk.

\subsubsection{Array Aperture}

Figures~\ref{fig:impact_b}, \ref{fig:impact_c} and~\ref{fig:impact_d} share the same antenna spacing and source location $\boldsymbol{x}_s$ but differ in array aperture: \( D_x=800\lambda \), \(=400\lambda \), and \(=1200\lambda \), respectively. 

First, increasing the array aperture improves resolution: antennas are added outside the NCZ and thus extends the spatial frequency bandwidth $B_i$ in \eqref{eq:bandwidth_definition} for at least one axis, thereby reducing the resolution cell size.

For the AFR, when the antenna spacing is fixed, its size is governed by the maximum spatial frequency \( \mathrm{K}_x \), not by spectral repetition. 
From \cref{fig:impact_c} to \cref{fig:impact_b}, the AFR shrinks as the newly added antennas become critical at some points along the AFR boundary, increasing \(\mathrm{K}_x\) and tightening the aliasing constraints (see \cref{prop:lemme_ant_critic_addition}).  
In contrast, from \cref{fig:impact_b} to \cref{fig:impact_d}, although the array is extended, the newly added antennas never become critical on the AFR boundary. As a result, \( \mathrm{K}_x \) remains unchanged, and the AFR is preserved. 
Similarly, removing antennas from \cref{fig:impact_d} to \cref{fig:impact_b} does not enlarge the AFR because the removed elements were not critical either (see \cref{prop:lemme_ant_critic_removal}).

Thus, increasing the array aperture always improves resolution, whereas the AFR may shrink or remain unchanged depending on whether the added antennas become critical. Hence, resolution improvement and AFR size are not necessarily coupled, and trade-offs may arise. Propositions \ref{prop:lemme_ant_critic_addition} and \ref{prop:lemme_ant_critic_removal} provide practical criteria to predict these effects.

\subsubsection{Relative source distance}

Comparing Figures \ref{fig:impact_c} and \ref{fig:impact_e}, with the same array configuration, but different source locations (\( \boldsymbol{x}_s = (0;400)\lambda \) vs. \( \boldsymbol{x}_s = (0;1000)\lambda \)), shows that moving the source closer to the array reduces the AFR. As explained in \cref{sec:critical_antenna_location_linear} and illustrated in Figures \ref{fig:chirp_illustration_cartesian} and \ref{fig:critical_antenna_location_cartesian}, this is because shorter source distances to the antennas increase the angular separation between the vectors defining the local spatial frequencies $\boldsymbol{\mathrm{k}}^{g}$, thus increasing the maximum frequency $\mathrm{K}_x$ and tightening the aliasing conditions in \eqref{eq:aliasing_conditions}.

At the same time, bringing the source closer improves resolution, as the spatial frequency bandwidth \( \mathrm{B}_i \) increases. Nearby sources are sampled over a wider range of spatial frequencies (wider non-contributive zones), leading to finer detail in the ambiguity function. The resolution region thus shrinks when moving from \cref{fig:impact_e} to \cref{fig:impact_c}\footnote{The $y$-axis scales differ, with \cref{fig:impact_e} using a larger scale.}. 

The source distance thus introduces a strict trade-off between AFR size and resolution: while closer sources enhance the resolution, they reduce the AFR.

\subsubsection{Array Dimensionality}

Figures \ref{fig:impact_e} and \ref{fig:impact_f} compare a one-dimensional ($d=1$, 1D) linear array along \( x \) (\cref{fig:impact_e}) with a two-dimensional ($d=2$, 2D) planar array in the \( xy \)-plane (\cref{fig:impact_f}), keeping the source location and antenna spacing fixed. The additional antennas in the 2D case lie within the NCZ $\mathcal{N}(\boldsymbol{x}_s)$ of the 1D case and thus do not improve resolution, so the resolution regions are identical\footnote{Spatial frequencies $\boldsymbol{\mathrm{k}}^{h}$ are evaluated here in a rotated coordinate system aligned with the beam axis, allowing analysis along physically meaningful directions--parallel and orthogonal to the beam.}. Removing these antennas when going from 2D to 1D does not affect the AFR, as they are non-critical for its boundary, in agreement with \cref{prop:lemme_ant_critic_removal}, which is confirmed by the identical AFR boundaries in both plots. 

In contrast, Figures \ref{fig:impact_g} and \ref{fig:impact_h} show that extending a 1D array to 2D can improve resolution when the added antennas lie outside the NCZ, increasing the spatial frequency bandwidth \( B_i \) for at least one direction. In this case, the AFR shrinks, as the added antennas become critical at some AFR boundary points, increasing \(K_i\) and tightening the aliasing constraints, in agreement with \cref{prop:lemme_ant_critic_addition}. 

Thus, increasing array dimensionality can either leave both AFR and resolution unchanged, or enhance the resolution while reducing the AFR, highlighting a trade-off. 
At the same time, 2D arrays more completely sample the spatial frequencies of the source, mitigating aliasing artefacts, which explains their reduced prominence compared to 1D cases. 

\begin{table}
    \centering
    \renewcommand{\arraystretch}{1.3}
    \begin{tabular}{lccccc}
    \hline
    \textbf{Effect} &
    \begin{tabular}[c]{@{}c@{}}\textbf{Spacing} \\ \textbf{(\(\Delta_i\))} \(\uparrow\)\end{tabular} &
    \begin{tabular}[c]{@{}c@{}}\textbf{Array} \\ \textbf{Aperture (\(D_i\))} \(\uparrow\) \end{tabular} &
    \begin{tabular}[c]{@{}c@{}}\textbf{Relative} \\ \textbf{distance} \(\uparrow\) \end{tabular} &
    \begin{tabular}[c]{@{}c@{}}\textbf{Array} \\ \textbf{Dim.} \(\uparrow\)\end{tabular} \\
    \hline
    Spectrum repetition       & \(\uparrow\) & --          & --          & -- \\
    Max frequency \(K_i\)     & --           & \(\uparrow\) or $=$ & \(\downarrow\) & \(\uparrow\) or $=$ \\
    Aliasing artefacts        & -- & -- & -- & \textcolor{green!60!black}{\(\downarrow\)} \\ \hline
    Alias-free region size    & \textcolor{red}{\(\downarrow\)} & \textcolor{red}{\(\downarrow\)} or $=$ & \textcolor{green!60!black}{\(\uparrow\)} & \textcolor{red}{\(\downarrow\)} or $=$ \\
    Resolution region $\mathcal{R}$ size     & --  & \textcolor{green!60!black}{\(\downarrow\)} & \textcolor{red}{\(\uparrow\)} & \textcolor{green!60!black}{\(\downarrow\)} or $=$ \\
    \hline
\end{tabular}
\caption{Impact of system parameters on the ambiguity function for a rectangular array geometry.}
\label{tab:parameter_effects}
\end{table}

\subsubsection{\textbf{Summary}}

Table~\ref{tab:parameter_effects} summarises how key system parameters affect the ambiguity function for a rectangular array. Arrows indicate the direction of change, with colour coding showing whether the effect is beneficial (green) or detrimental (red). Key observations include:
\begin{itemize}
    \item \textbf{Antenna spacing} ($\Delta_i$) : Larger spacing increases spectrum repetition and reduces the AFR, without affecting resolution.
    \item \textbf{Array aperture} ($D_i$) : Larger aperture improves resolution. AFR may shrink or stay constant depending on critical antennas.
    \item \textbf{Relative source distance}: Moving the source farther enlarges the AFR but reduces resolution.
    \item \textbf{Array dimensionality}: Extending from 1D to 2D can enhance resolution and mitigate aliasing artefacts, but the AFR may shrink or stay constant.
\end{itemize}

Overall, these results highlight a trade-off between resolution and AFR size: parameters that improve resolution (e.g., larger aperture, closer source, higher dimensionality) may reduce the AFR, while others (e.g., greater source distance) improve AFR at the cost of resolution. Propositions \ref{prop:lemme_ant_critic_removal} and \ref{prop:lemme_ant_critic_addition} provide practical criteria to determine when these changes are strict or when the AFR remains unchanged. %, based on critical antenna locations.

%%%%%%%%%%% Polar Array Analysis %%%%%%%%%%%

\section{Circular Array Analysis}
\label{sec:polar_arrays}

This section extends the AFR–resolution trade-off analysis to circular array geometries, examining the impact of system parameters through numerical results.

\subsection{Polar Spatial Frequencies}

The phase function $\phi_g$ can be written as 
\begin{equation}
     k \| \boldsymbol{z} - \boldsymbol{x}_s \| =  k \sqrt{r_z^2 + r_s^2 - 2 r_z r_s \cos(\theta_z - \theta_s)}, 
\label{eq:polar_distance}
\end{equation}
using the law of cosine \cite{euclides_euclids_2008}, where $(r_z, \theta_z)$ and $(r_s, \theta_s)$ denote the polar coordinates of $\boldsymbol{z}$ and $\boldsymbol{x}_s$ respectively.
The local vector \( \boldsymbol{\mathrm{k}}^{g} \) along a given axis \( i \in \{r, \theta\} \) can then be expressed as
\begin{equation}
\mathrm{k}_{r}(\boldsymbol{z} ; \tilde{\boldsymbol{x}}_s, \boldsymbol{x}_s) =
k \left( 
\frac{r_z - \tilde{r}_s \cos(\theta_z - \tilde{\theta}_s)}{\| \boldsymbol{z} - \tilde{\boldsymbol{x}}_s \|}
-
\frac{r_z - r_s \cos(\theta_x - \theta_s)}{\| \boldsymbol{z} - \boldsymbol{x}_s \|}
\right)
\label{eq:k_r_polar}
\end{equation}
for the radial direction, or for the angular one as 
\begin{equation}
\mathrm{k}_{\theta}(\boldsymbol{z} ; \tilde{\boldsymbol{x}}_s, \boldsymbol{x}_s) =
k \left(
\frac{r_z \tilde{r}_s \sin(\theta_z - \tilde{\theta}_s)}{\| \boldsymbol{z} - \tilde{\boldsymbol{x}}_s \|}
-
\frac{r_z r_s \sin(\theta_z - \theta_s)}{\| \boldsymbol{z} - \boldsymbol{x}_s \|}, 
\right). 
\label{eq:k_theta_polar}
\end{equation}

\subsection{Safe Antenna Spacing}

Proposition \ref{prop:safe_spacing_polar} gives a condition on the angular spacing of antennas in circular arrays to avoid aliasing. Note that, in the radial direction, the analysis recovers the classical $\lambda/2$ sampling requirement, consistent with the linear array case presented in \cref{sec:safe_ant_spacing_linear}. 

\begin{proposition}\label{prop:safe_spacing_polar}
    For a 1D circular array of radius $R$ ($r_z = R$, $\forall \boldsymbol{z} \in \mathcal{Z}$) and a source inside the array ($r_s \le R$), aliasing is avoided for all test points $\tilde{\boldsymbol{x}}_s$ within the array ($\tilde{r}_s \leq R$) if
    \begin{equation}
        \Delta_\theta \leq \frac{\lambda}{2R}.
    \label{eq:safe_spacing_polar}
    \end{equation}
\end{proposition}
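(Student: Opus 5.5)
Following the proof of \cref{prop:safe_spacing_cartesian}, it suffices to bound the angular chirp frequency $\mathrm{K}_\theta(\tilde{\boldsymbol{x}}_s,\boldsymbol{x}_s)$ uniformly over all admissible configurations and then invoke the non-aliasing condition \eqref{eq:aliasing_conditions} with $i=\theta$. The phase is a function of $\theta_z$ alone, since all antennas lie on $r_z=R$. The relevant local frequency is therefore the derivative of $\phi_g$ with respect to $\theta_z$, which is the quantity given in \eqref{eq:k_theta_polar} with $r_z=R$. I would show that
\begin{equation}
\sup_{r_s\le R,\ \tilde r_s\le R,\ \theta_z}\big|\mathrm{k}_\theta(\boldsymbol{z};\tilde{\boldsymbol{x}}_s,\boldsymbol{x}_s)\big| \le 2kR .
\end{equation}
This gives $\mathrm{K}_\theta\le 2kR = 4\pi R/\lambda$, and the condition $\mathrm{K}_\theta\le 2\pi/\Delta_\theta$ is then guaranteed whenever $\Delta_\theta\le 2\pi/(2kR)=\lambda/(2R)$, which is \eqref{eq:safe_spacing_polar}.

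The core step is a bound on each of the two terms of \eqref{eq:k_theta_polar} separately. Each term has the form $kRr\sin\psi/\sqrt{R^2+r^2-2Rr\cos\psi}$, with $(r,\psi)$ equal to $(r_s,\theta_z-\theta_s)$ or $(\tilde r_s,\theta_z-\tilde\theta_s)$. Geometrically, this term is $k$ times the tangential component at $\boldsymbol{z}$ of the unit vector pointing toward the source, multiplied by $R$, because $\partial/\partial\theta_z = R\,\partial/\partial s$ where $s$ is arc length. Its magnitude is therefore at most $kR$. Equivalently, the inequality $r^2\sin^2\psi\le R^2+r^2-2Rr\cos\psi$ is exactly $(R-r\cos\psi)^2\ge 0$, so each term is bounded by $kR$ for every $r$, and in particular for $r\le R$. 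The triangle inequality then gives $|\mathrm{k}_\theta|\le 2kR$.

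I expect the main obstacle to be the role of the hypotheses $r_s,\tilde r_s\le R$. The bound above does not use them, so I would check whether they serve to make the tangential-projection bound tight or meaningful, for instance by avoiding the degenerate case in which a point lies on the array, where the distance vanishes. The paper may instead intend a sharper argument in which the maximum $2kR$ is approached only when both points lie near the circle on opposite tangential sides of an antenna. I would present the simple bound as sufficient, since the proposition only claims sufficiency. I would also note that $\Delta_\theta$ is an angular step, so the condition corresponds to an arc spacing of $R\Delta_\theta\le\lambda/2$ divided by $R$. This is consistent with the $\lambda/2$ arc-length rule of \cref{prop:safe_spacing_cartesian}.
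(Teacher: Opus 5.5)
Your proof is correct and reaches the same bound $\mathrm{K}_\theta\le 2kR$ as the paper, but by a different route.

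The paper does not bound the two terms of \eqref{eq:k_theta_polar} separately. It asserts that the maximum occurs at $\theta_z-\tilde\theta_s=\pi/2$, $\theta_z-\theta_s=-\pi/2$, evaluates $\mathrm{K}_\theta=k\bigl(R\tilde r_s/\sqrt{R^2+\tilde r_s^2}+Rr_s/\sqrt{R^2+r_s^2}\bigr)$ there, and then bounds this by $2kR$ using $r_s,\tilde r_s\le R$.

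Your termwise bound via $(R-r\cos\psi)^2\ge 0$, followed by the triangle inequality, never needs to locate a maximiser. That is a genuine advantage here. Maximising $kRr\sin\psi/\sqrt{R^2+r^2-2Rr\cos\psi}$ over $\psi$ gives $\cos\psi=r/R$ (for $r\le R$), where the term equals $kr$, which exceeds $kRr/\sqrt{R^2+r^2}$. So the paper's stated maximiser is only asymptotically correct for $r_s,\tilde r_s\ll R$, and its closed form underestimates the supremum over the angles. That supremum is actually $k(r_s+\tilde r_s)$. It is reached when both points lie on the circle of diameter $[\boldsymbol{0},\boldsymbol{z}]$, on opposite sides of the radius through $\boldsymbol{z}$, which is essentially the picture you anticipated. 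The paper's conclusion survives only because $k(r_s+\tilde r_s)\le 2kR$ as well.

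You are also right that the hypotheses $r_s,\tilde r_s\le R$ are unnecessary for the stated sufficient condition: for $r>R$, each term peaks at exactly $kR$, at $\cos\psi=R/r$. Those hypotheses would matter only for the sharp, configuration-dependent value $k(r_s+\tilde r_s)$, which a corrected version of the paper's route would deliver. That value could justify a coarser $\Delta_\theta$, namely $\lambda/(2\rho)$, when source and test points are known to satisfy $r_s,\tilde r_s\le\rho<R$.
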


\begin{proof}
    Aliasing is avoided if $\forall \tilde{\boldsymbol{x}}_s$ \(\mathrm{K}_\theta(\tilde{\boldsymbol{x}}_s,\boldsymbol{x}_s)\le 2\pi/\Delta_\theta \) (cf. \eqref{eq:aliasing_conditions}). From \eqref{eq:k_theta_polar}, the maximum angular spatial frequency occurs at $\theta_z - \tilde{\theta}_s = \pi/2$ and $\theta_z - \theta_s = -\pi/2$: 
    \begin{equation}
    \mathrm{K}_\theta (\tilde{\boldsymbol{x}}_s,\boldsymbol{x}_s) = k \left( \frac{R \tilde{r}_s}{\sqrt{R^2 + \tilde{r}_s^2}} + \frac{R r_s}{\sqrt{R^2 + r_s^2}} \right) \leq 2kR,  
    \end{equation} 
    where the inequality holds for all $r_s, \tilde{r}_s \le R$. We have $\mathrm{K}_\theta (\tilde{\boldsymbol{x}}_s,\boldsymbol{x}_s) \leq 2kR$ and thus $\Delta_\theta \leq \frac{\pi}{kR} = \frac{\lambda}{2R}$. 
\end{proof}

As $R$ increases, the local spatial frequencies vary more rapidly between neighboring antennas, requiring finer angular sampling to avoid aliasing.

\subsection{Parameter Analysis}

\begin{figure*}
\centering 

\includegraphics[width=1\linewidth, trim=0 35 0 35, clip]{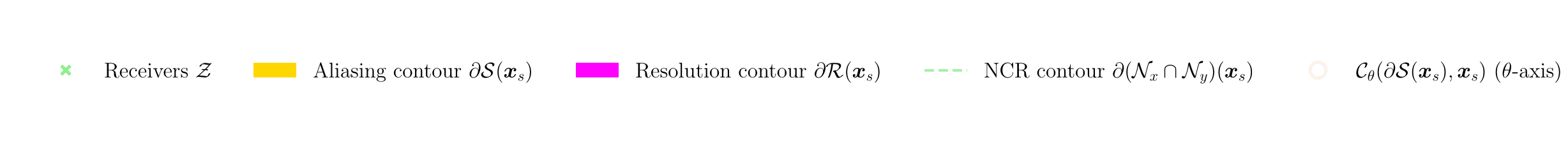}

\vspace{0.3cm}

% 1ère Subfigure
\begin{subfigure}{0.47\linewidth}
\centering 
\includegraphics[width=\linewidth, trim=20 392 160 20, clip]{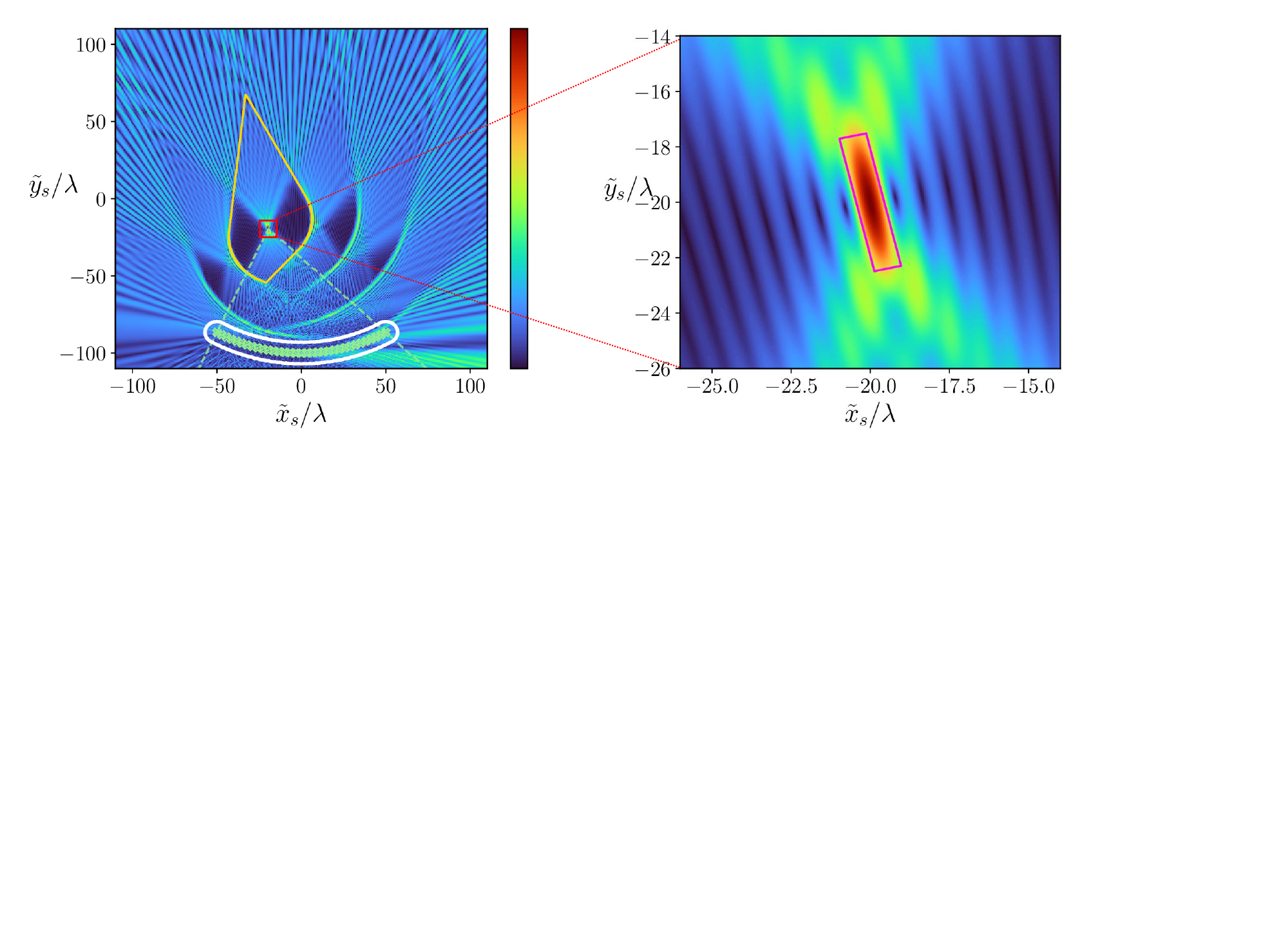}
\caption{Source location: $\boldsymbol{x}_s = (-20;-20)\lambda$. Array aperture: $60^{\circ}$. Number of antennas: $N=32$. Array radius: $R=100 \lambda$. }
\label{fig:impact_circular_a}
\end{subfigure}
\hfill
% 2ème Subfigure
\begin{subfigure}{0.47\linewidth}
\centering 
\includegraphics[width=\linewidth, trim=20 392 160 20, clip]{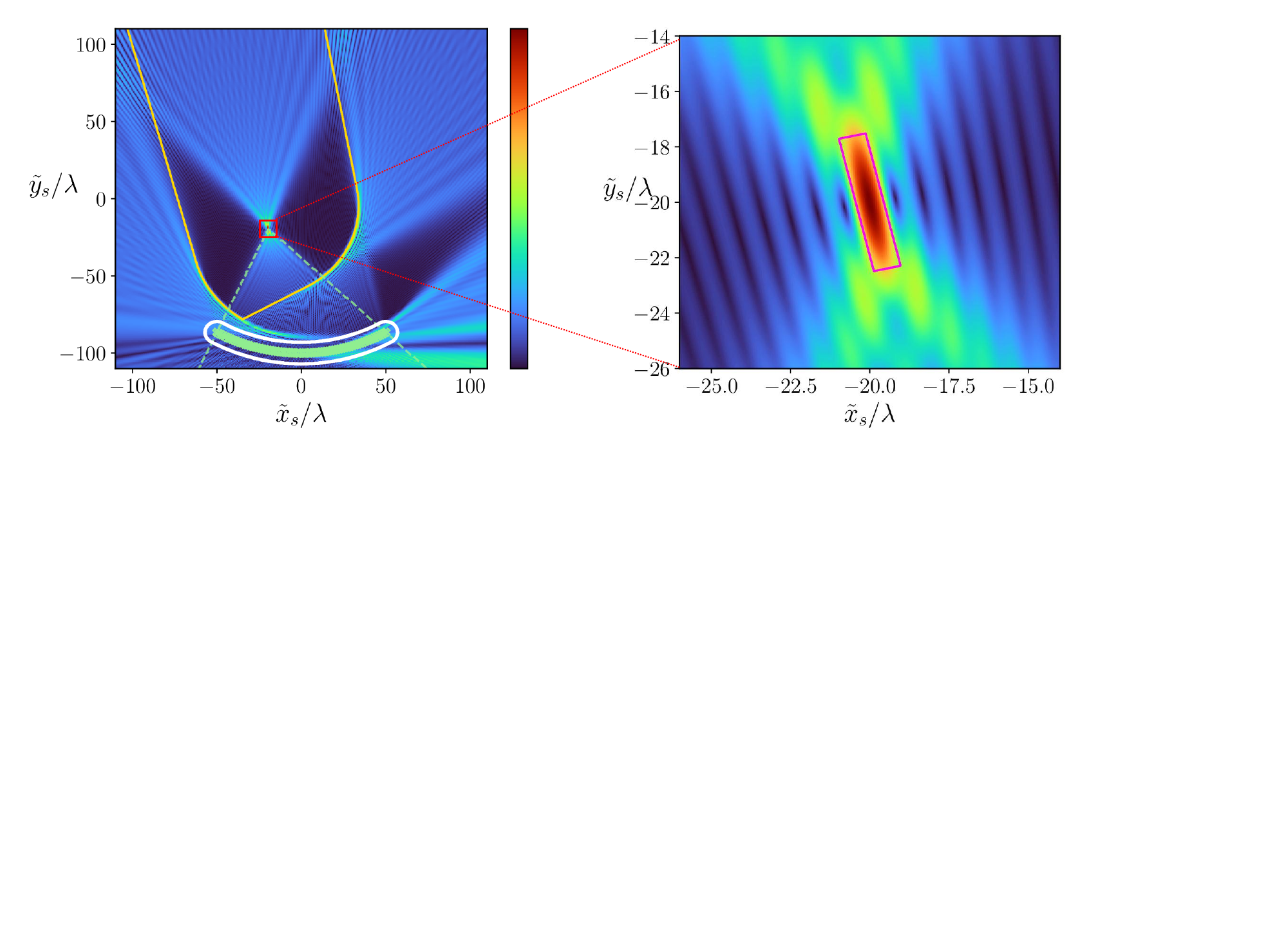}
\caption{Source location: $\boldsymbol{x}_s = (-20;-20)\lambda$. Array aperture: $60^{\circ}$. Number of antennas: $N=64$.  Array radius: $R=100 \lambda$.}
\label{fig:impact_circular_b}
\end{subfigure}

\vspace{0.4cm}

% 3ème Subfigure
\begin{subfigure}{0.47\linewidth}
\centering 
\includegraphics[width=\linewidth, trim=20 392 160 20, clip]{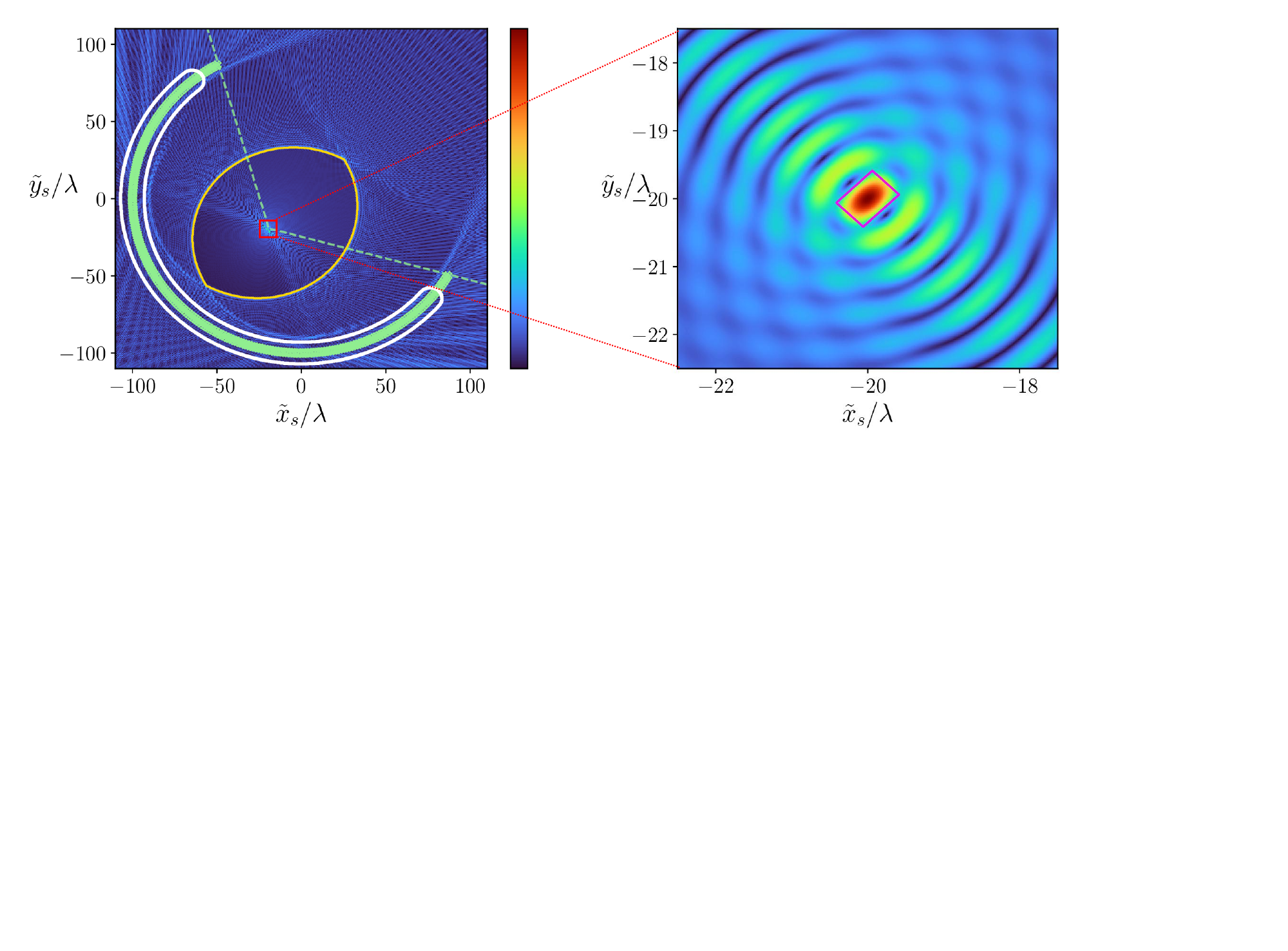}
\caption{Source location: $\boldsymbol{x}_s = (-20;-20)\lambda$. Array aperture: $210^{\circ}$. Number of antennas: $N=224$.  Array radius: $R=100 \lambda$.}
\label{fig:impact_circular_c}
\end{subfigure}
\hfill
% 4ème Subfigure
\begin{subfigure}{0.47\linewidth}
\centering 
\includegraphics[width=\linewidth, trim=20 392 160 20, clip]{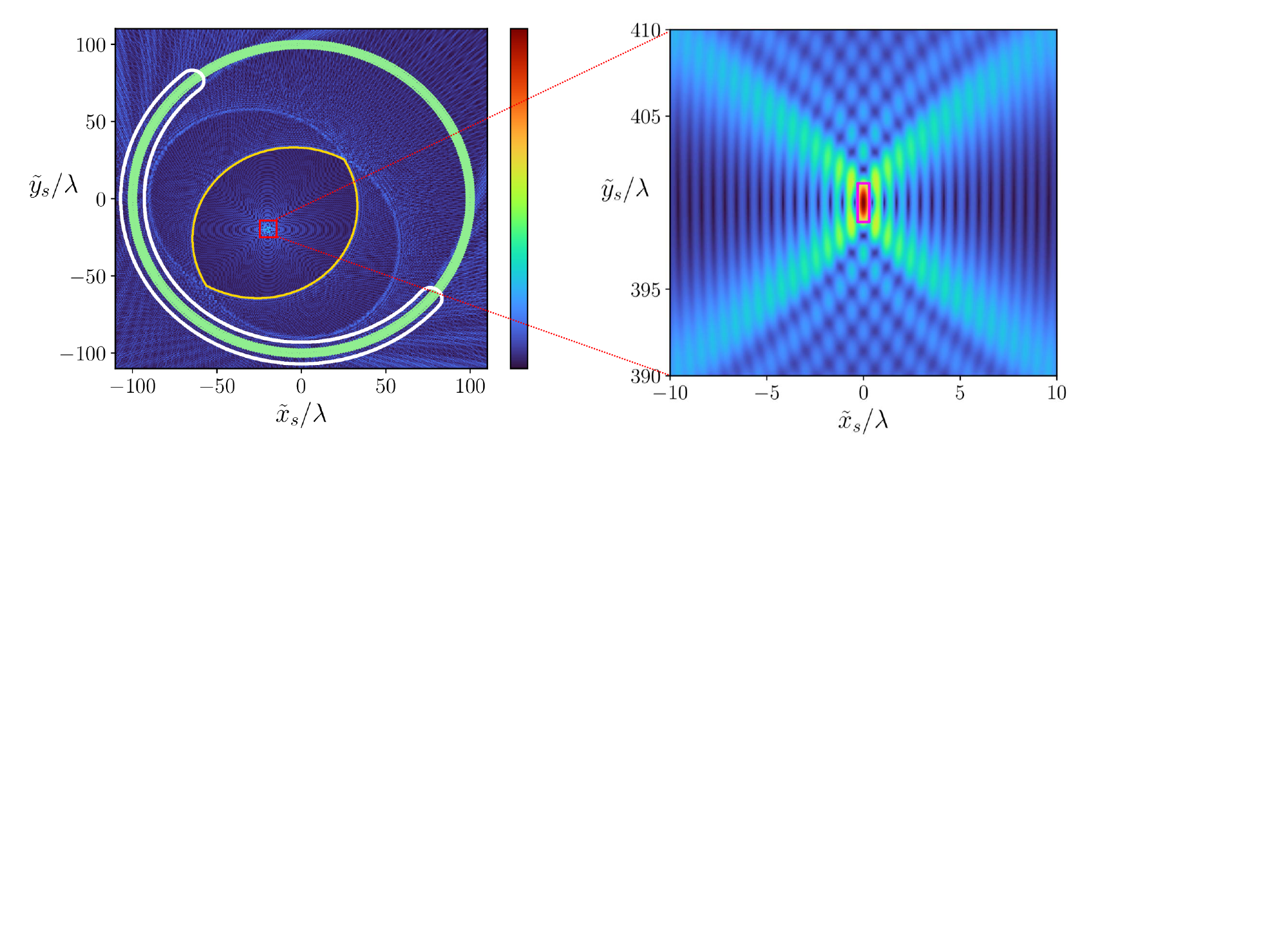}
\caption{Source location: $\boldsymbol{x}_s = (-20;-20)\lambda$. Array aperture: $360^{\circ}$. Number of antennas: $N=384$.  Array radius: $R=100 \lambda$. }
\label{fig:impact_circular_d}
\end{subfigure}

\vspace{0.4cm}

% 5ème Subfigure
\begin{subfigure}{0.47\linewidth}
\centering 
\includegraphics[width=\linewidth, trim=20 392 160 20, clip]{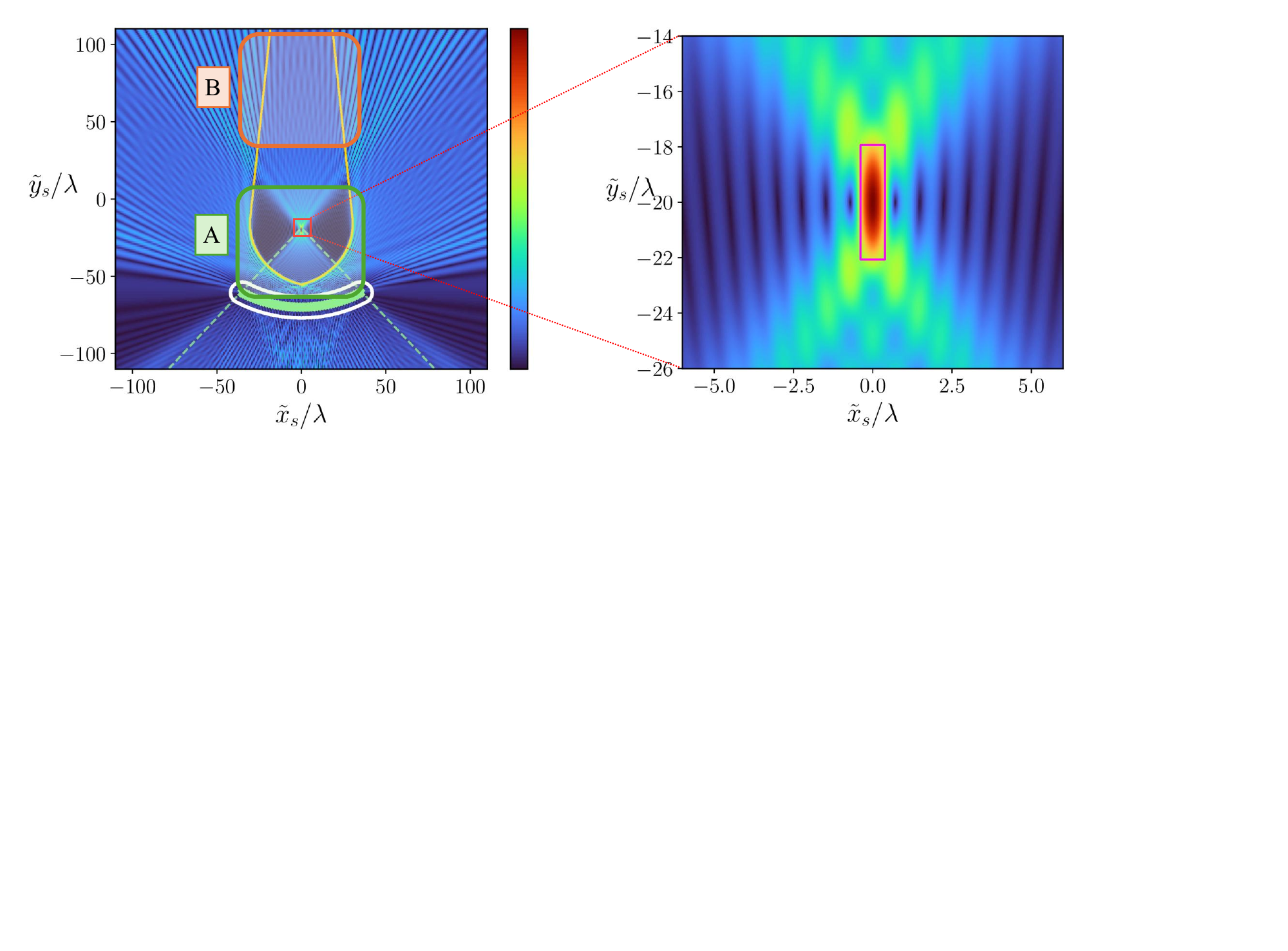}
\caption{Source location: $\boldsymbol{x}_s = (0;-20)\lambda$. Array aperture: $60^{\circ}$. Number of antennas: $N=45$.  Array radius: $R=70 \lambda$. }
\label{fig:impact_circular_e}
\end{subfigure}
\hfill
% 6ème Subfigure
\begin{subfigure}{0.47\linewidth}
\centering 
\includegraphics[width=\linewidth, trim=20 392 160 20, clip]{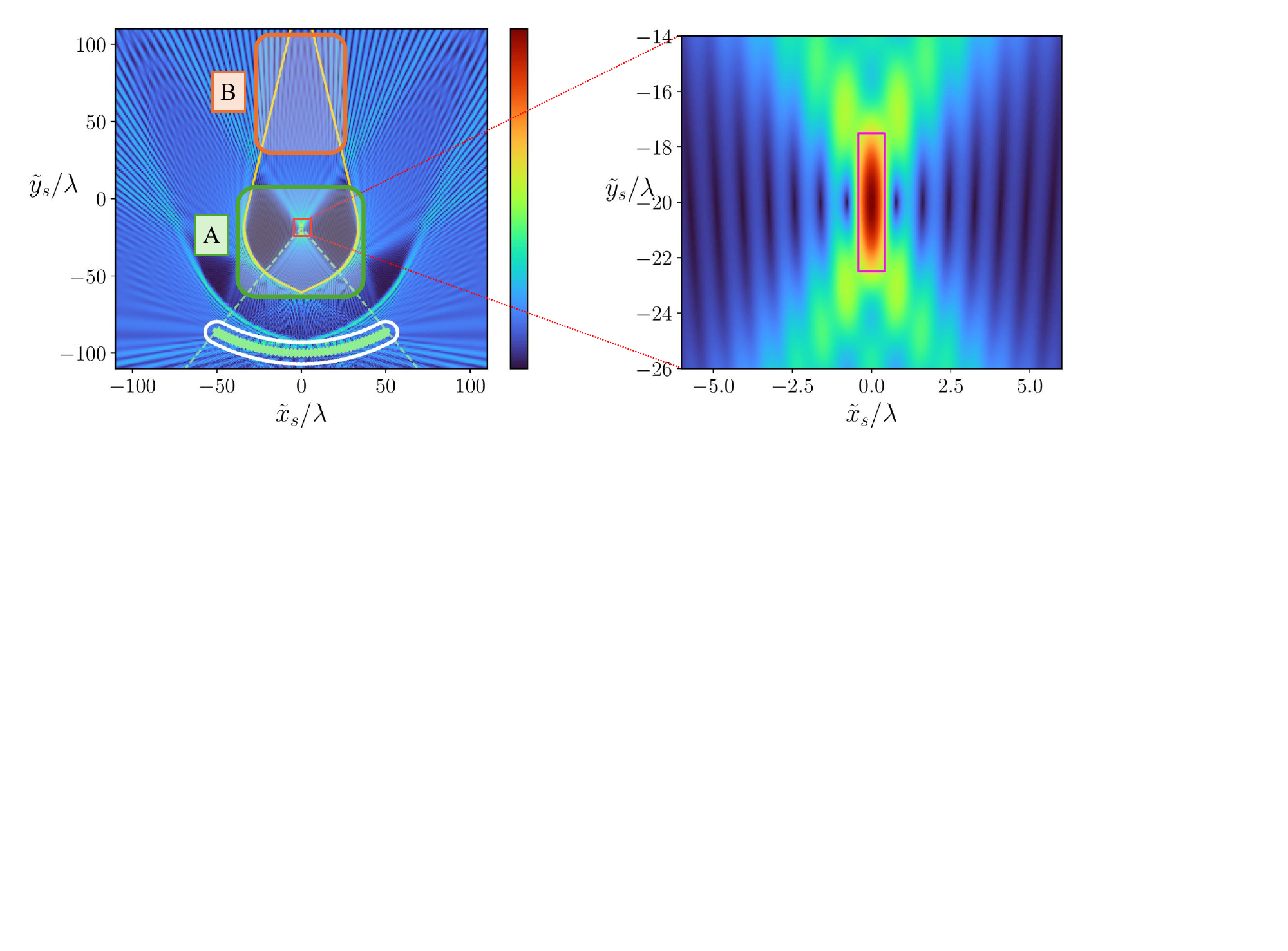}
\caption{Source location: $\boldsymbol{x}_s = (0;-20)\lambda$. Array aperture: $60^{\circ}$. Number of antennas: $N=45$.  Array radius: $R=100 \lambda$.  }
\label{fig:impact_circular_f}
\end{subfigure}

\vspace{0.4cm}

% 7ème Subfigure
\begin{subfigure}{0.47\linewidth}
\centering 
\includegraphics[width=\linewidth, trim=20 392 160 20, clip]{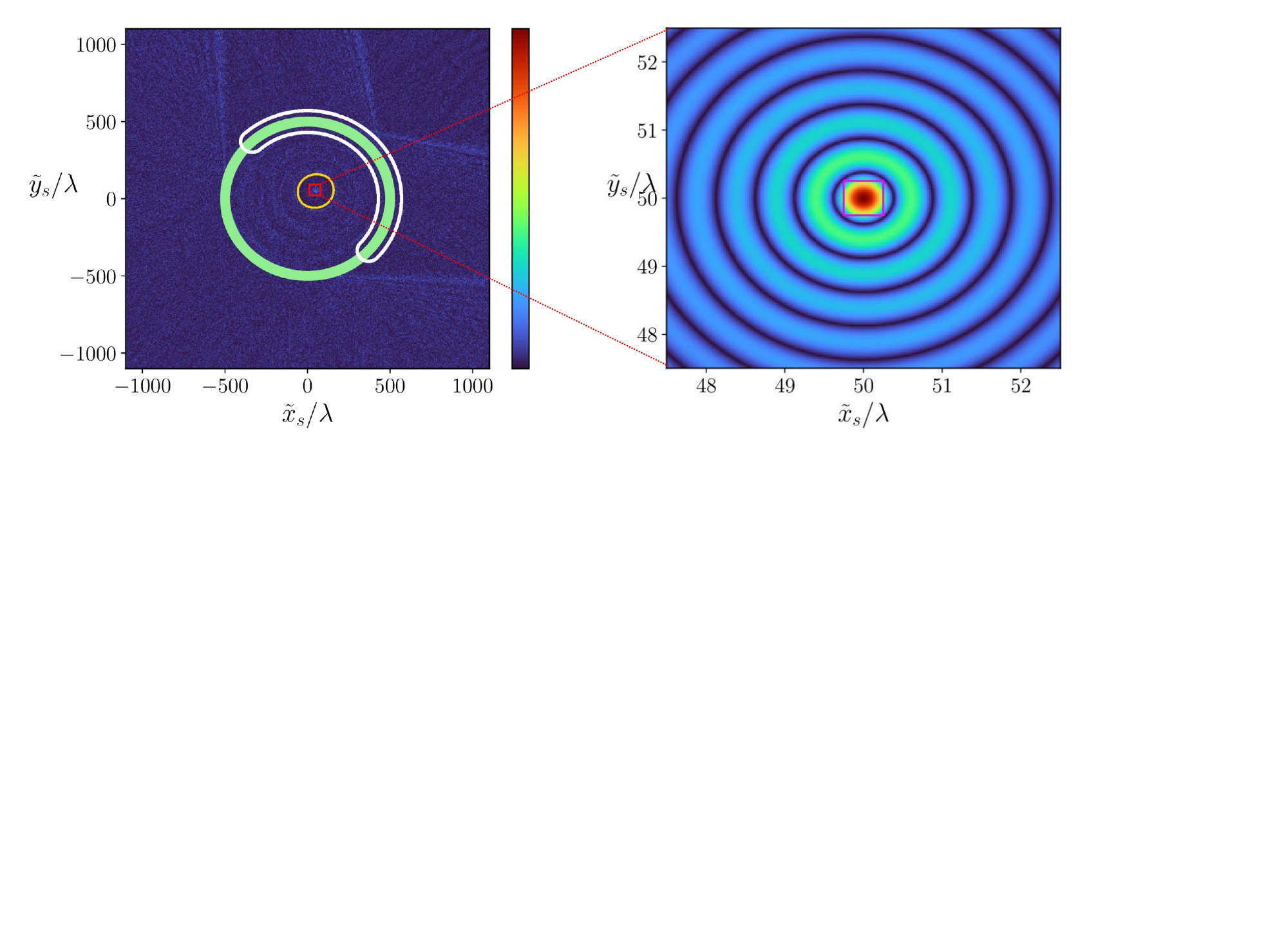}
\caption{Source location: $\boldsymbol{x}_s = (50;50)\lambda$. Array aperture: $360^{\circ}$. Number of antennas: $N=720$.  Array radius: $R=500 \lambda$.  }
\label{fig:impact_circular_g}
\end{subfigure}
\hfill
% 8ème Subfigure
\begin{subfigure}{0.47\linewidth}
\centering 
\includegraphics[width=\linewidth, trim=20 392 160 20, clip]{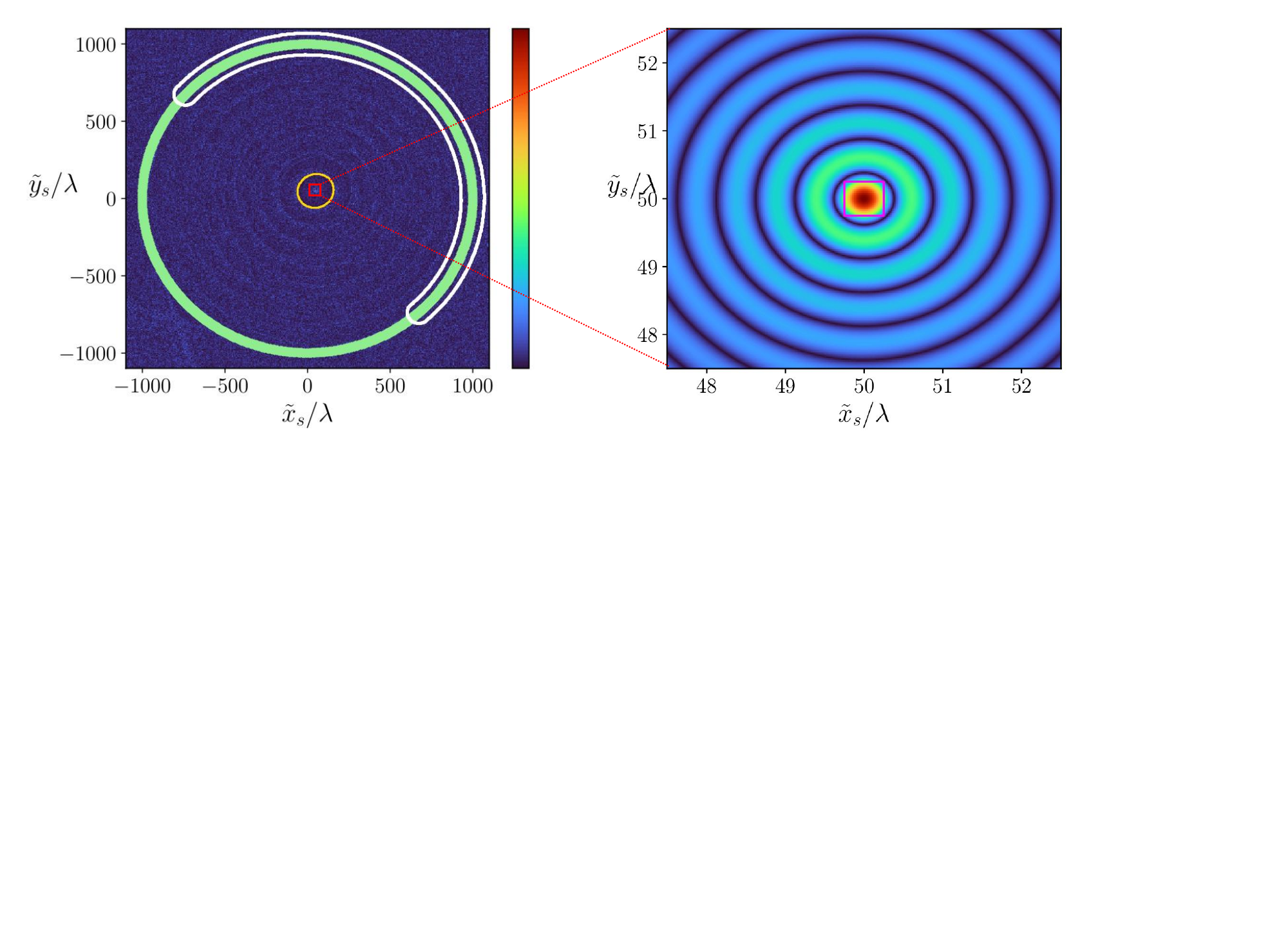}
\caption{Source location: $\boldsymbol{x}_s = (50;50)\lambda$. Array aperture: $360^{\circ}$. Number of antennas: $N=720$.  Array radius: $R=1000 \lambda$. }
\label{fig:impact_circular_h}
\end{subfigure}

\vspace{0.4cm}

% 9ème Subfigure
\begin{subfigure}{0.47\linewidth}
\centering 
\includegraphics[width=\linewidth, trim=20 392 160 20, clip]{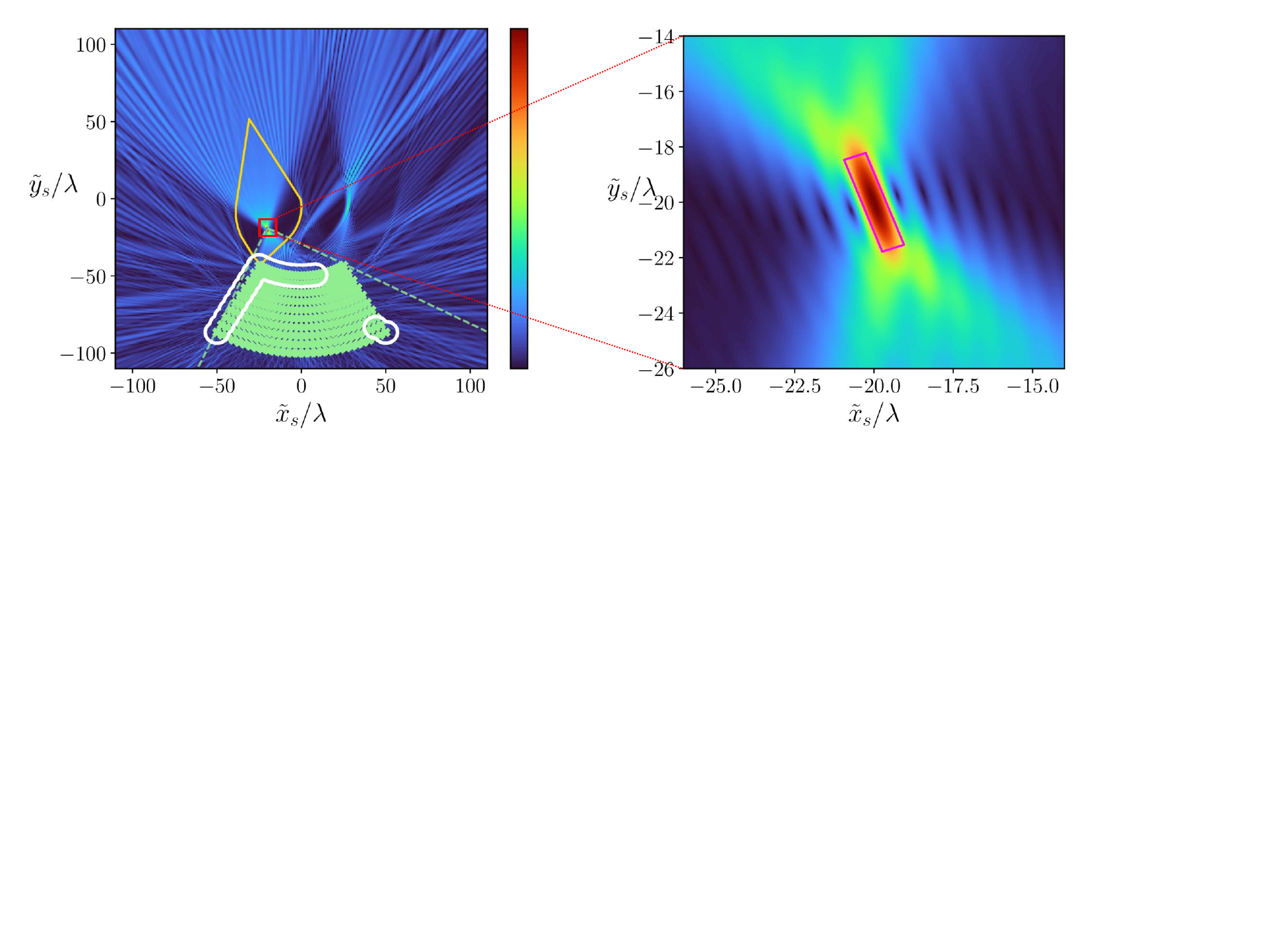}
\caption{Source location: $\boldsymbol{x}_s = (-20;-20)\lambda$. Array aperture: $60^{\circ}$. Number of antennas: $N_\theta \times N_r =32 \times 10$.  Array radius: $R= [50;100] \lambda$.  }
\label{fig:impact_circular_i}
\end{subfigure}
\hfill
% 10ème Subfigure
\begin{subfigure}{0.47\linewidth}
\centering 
\includegraphics[width=\linewidth, trim=20 392 160 20, clip]{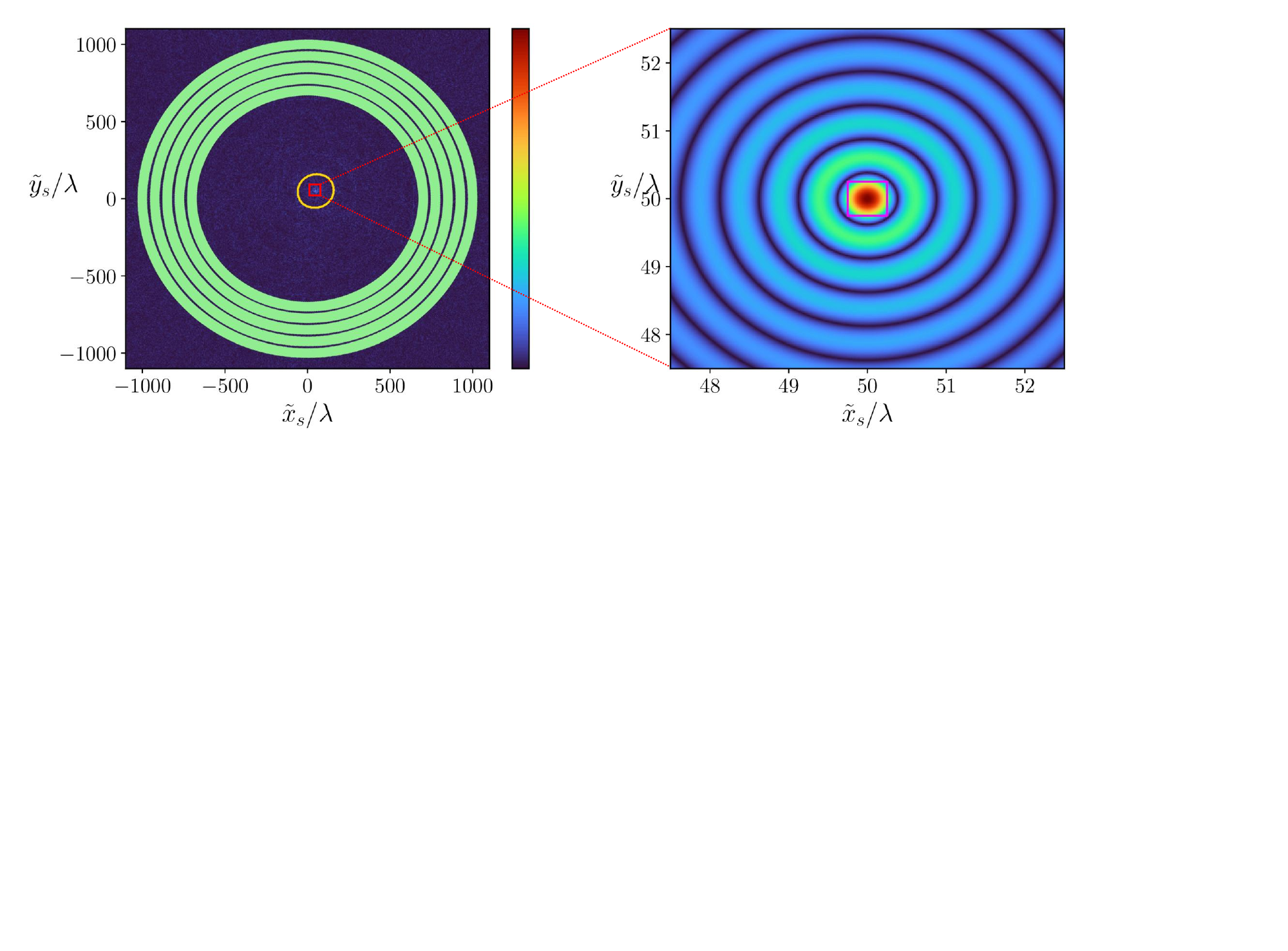}
\caption{Source location: $\boldsymbol{x}_s = (50;50)\lambda$. Array aperture: $360^{\circ}$. Number of antennas: $N_\theta \times N_r =720 \times 5$.  Array radius: $R= [700;1000] \lambda$. }
\label{fig:impact_circular_j}
\end{subfigure}

\caption{Impact of system parameters on the $AF$ for a point source with circular arrays. }

\label{fig:impact_parameters_polar}
\end{figure*}

Figure~\ref{fig:impact_parameters_polar} illustrates the impact of system parameters on the AFR for a point source in circular arrays. The legend at the top follows that of Figure~\ref{fig:impact_parameters_cart}, with adjustments for circular geometry. For clarity, the spatial frequencies $\boldsymbol{\mathrm{k}}^{h}$ used to assess resolution are evaluated in a rectangular system aligned with the beam, ensuring that the resolution region $\mathcal{R}(\boldsymbol{x}_s)$ accurately reflects resolving capability along and across the beam. The following observations describe the parameter impacts. 

\subsubsection{Angular Spacing} 

Figures \ref{fig:impact_circular_a} and \ref{fig:impact_circular_b} differ only in angular antenna spacing, with \( N=32 \) and \( N=64 \) antennas, respectively, over the same array aperture of \( 60^{\circ} \). 
The NCZ contours remain identical, indicating that resolution along both directions is unchanged due to constant aperture, as seen in the zoomed-in plots.
However, increasing angular spacing (reducing the number of antennas) significantly decreases the AFR. As in rectangular arrays, the spatial spectrum \(G\) repeats more densely with a period \(2\pi/( \Delta_\theta)\) that decreases with larger spacing $\Delta_\theta$, raising the risk of aliasing according to~\eqref{eq:aliasing_conditions}.

\subsubsection{Array Aperture}

From \cref{fig:impact_circular_b} to \cref{fig:impact_circular_c} and \cref{fig:impact_circular_d}, the array aperture increases from \( 60^{\circ} \)  to \( 210^{\circ} \) and then to \( 360^{\circ} \), with constant angular spacing and source location. Resolution improves as antennas are added beyond the non-contributive zones, increasing spatial frequency bandwidth $\mathrm{B}_i$ and reducing the size of the resolution region $\mathcal{R}(\boldsymbol{x}_s)$ delineated in magenta. 

For the AFR, expanding from $60^\circ$ to $210^\circ$, shrinks it because new antennas become critical at some AFR boundary points (\cref{prop:lemme_ant_critic_addition}).
Further expansion to $360^\circ$ adds no new critical antennas, leaving the AFR unchanged. Conversely, from $360^\circ$ to $210^\circ$, non-critical antennas can be removed without affecting the AFR (\cref{prop:lemme_ant_critic_removal}), but enlarges the resolution cell.

Increasing the array aperture thus reduces the AFR up to a certain limit, while consistently enhancing resolution, highlighting possible trade-off situations. 

\subsubsection{Array Radius}
\label{sec:impact_circular_array_radius}

The effect of array radius is more complex. Figures \ref{fig:impact_circular_e} and \ref{fig:impact_circular_f} show the same source, aperture, and number of antennas, but radii \( R=70\lambda \)  and \( R=100\lambda \). 

Increasing the radius expands the AFR in zone A (green), while it shrinks in zone B (orange). This behaviour arises from the dependence of the maximal angular spatial frequency \( \mathrm{K}_\theta \) on the considered pair \( (\tilde{\boldsymbol{x}}_s, \boldsymbol{x}_s) \) and the antenna positions \( \boldsymbol{z} \), as discussed in \cref{sec:critical_antenna_location_polar}. As a result, a larger radius can either increase or decrease \( \mathrm{K}_\theta \) at different points along the AFR boundary, producing the observed non-uniform evolution of the AFR. Moreover, \cref{sec:critical_antenna_location_polar} established that, for a full circular array, \( \mathrm{K}_\theta \) becomes independent of $R$ once the radius is sufficiently large. This is confirmed by comparing Figures~\ref{fig:impact_circular_g} and \ref{fig:impact_circular_h}, where the AFRs are identical despite increasing the radius from \( 500\lambda \) to \( 1000\lambda \).

Regarding resolution: 
\begin{itemize}
    \item from \cref{fig:impact_circular_e} to \cref{fig:impact_circular_f}, the resolution region is enlarged with greater radius (worsened resolution); 
    \item from \cref{fig:impact_circular_g} to \cref{fig:impact_circular_h}, it remains the same;
    \item in \cref{fig:impact_circular_i}, the NCZ is enlarged (improved resolution) on its left border by the last row of antennas at greater radius.
\end{itemize}

\subsubsection{Array Dimensionality} 

Comparing \cref{fig:impact_circular_i} with \cref{fig:impact_circular_a}, $N_r=10$ radial positions are added between \( R=50\lambda \) and \( R=100\lambda \), keeping $N_\theta=32$ over a $60^\circ$ angular aperture. This 2D layout introduces multiple radii, so the AFR–radius behaviour discussed in \cref{sec:impact_circular_array_radius} must be considered for all of them. For each region, the smallest AFR among those associated with each radius is retained.

As seen in \cref{def:afr}, the AFR is obtained as the intersection of the directional AFRs associated with each dimension. Consequently, adding a radial dimension introduces an additional constraint and thus may reduce the overall AFR--which is the case here.
Besides, as shown in \cref{fig:chirp_illustration_polar}, the local radial frequency \( \mathrm{k}_r^{g} \) increases at smaller radii (the corresponding critical antenna is at the smallest radius). Hence, the smaller the inner radius of the array, the larger the maximum radial frequency \( \mathrm{K}_r \), and the stricter the aliasing constraints along this direction. 

The additional antennas lie outside the NCZ $\mathcal{N}(\boldsymbol{x}_s)$ of the 1D case, thereby increasing the spatial frequency bandwidth \( B_i \) in at least one direction and improving resolution, as shown by the smaller resolution cell in the zoomed-in plot.

However, comparing \cref{fig:impact_circular_j} with \ref{fig:impact_circular_h}, $N_r=5$ radial positions are added, while keeping $N_\theta=720$ angular positions with a full $360^{\circ}$ aperture. Here, the radii are sufficiently large so that \( \mathrm{K}_\theta \) is independent of the radius, as explained in \cref{sec:impact_circular_array_radius}. Since the radial spacing $\Delta_r$ is also small enough to avoid additional aliasing constraints, the AFR remains identical in the two figures. With the aperture unchanged, the resolution regions are identical as well.

\subsubsection{Summary}

Table~\ref{tab:parameter_effects_polar} summarises the impacts of various system parameters on resolution and AFR for circular arrays. Each parameter is varied independently, with the legend matching Table~\ref{tab:parameter_effects}. Compared to rectangular arrays, the AFR shape differs and the impact of source-array distance (i.e., array radius) is more intricate, depending on both the AF region and the direction (radial or angular), leading to more complex trade-off situations. 

\begin{table}
    \centering
    \renewcommand{\arraystretch}{1.3}
    \begin{tabular}{lccccc}
    \hline
    \textbf{Effect} &
    \begin{tabular}[c]{@{}c@{}}\textbf{Spacing} \\ \textbf{(\(\Delta\))} \(\uparrow\)\end{tabular} &
    \begin{tabular}[c]{@{}c@{}}\textbf{Array} \\ \textbf{Aperture} \(\uparrow\)\end{tabular} &
    \begin{tabular}[c]{@{}c@{}}\textbf{Array} \\ \textbf{radius (\(R\))} \(\uparrow\)\end{tabular} &
    \begin{tabular}[c]{@{}c@{}}\textbf{Array} \\ \textbf{Dim.} \(\uparrow\)\end{tabular} \\
    \hline
    Spectrum repetition       & \(\uparrow\) & -- & -- & -- \\
    Max frequency \(K_i\)     & -- & \(\uparrow\) or $=$ & \(\downarrow\) ($r$) - \(\downarrow\) \&/or $=$ \&/or \(\uparrow\) ($\theta$) & \(\uparrow\) or $=$ \\
    Aliasing artefacts        & -- & -- & -- & \textcolor{green!60!black}{\(\downarrow\)} \\
    \hline
    Alias-free region size    & \textcolor{red}{\(\downarrow\)} & \textcolor{red}{\(\downarrow\)} or $=$ & \textcolor{red}{\(\downarrow\)} (r) - \textcolor{red}{\(\downarrow\)} \&/or $=$ \&/or \textcolor{green!60!black}{\(\uparrow\)} & \textcolor{red}{\(\downarrow\)} or $=$ \\
    Resolution region $\mathcal{R}$ size   & -- & \textcolor{green!60!black}{\(\downarrow\)} & \textcolor{red}{\(\downarrow\)} or = or \textcolor{green!60!black}{\(\downarrow\)} & \textcolor{green!60!black}{\(\downarrow\)} or = \\
    \hline
    \end{tabular}%
    \caption{Impact of system parameters on the AF for a circular array. $r$ and $\theta$ denote the radial and angular directions.}
    \label{tab:parameter_effects_polar}
\end{table}

%%%%%%%%%%% Conclusion %%%%%%%%%%%

\section{Conclusion}
\label{sec:conclusion}

In this paper, we  have introduced a joint framework for analysing the impacts of resolution, aliasing and their trade-off on the ambiguity function of monostatic systems with spatially distributed antenna arrays. By defining critical antennas and the non-contributive zone, we have derived explicit conditions for the aliasing-free region (AFR) and resolution limits as functions of array geometry, source position, and array dimensionality. The analysis clarifies how these parameters shape the aliasing–resolution trade-off.
We have showed that increasing array size or dimensionality can enhance resolution without shrinking the AFR, provided that added antennas remain non-critical. In contrast, antenna spacing and source distance directly constrain the AFR and must be carefully controlled, while parameters such as array radius in circular configurations have strong configuration-dependent effects.
Numerical results confirm the theoretical predictions and illustrate the practical relevance of the proposed framework.

Future work will extend this analysis to noisy scenarios, bistatic configurations, non-uniform array geometries, and will explore the incorporation of the derived aliasing and resolution criteria into array design and optimisation frameworks.

%%%%%%%%%%% Appendix %%%%%%%%%%%

\appendix
%\section*{Appendix} 
\setcounter{section}{0} % remettre à zéro
\renewcommand{\thesection}{\Alph{section}} % sections = A, B, C...

\subsection{Loci of Critical Antennas for Infinite Linear Arrays}
\label{appendix:critical_antenna_location_linear}

We seek the antenna positions \( \boldsymbol{z} = (x,y) \) such that \( k_x(\boldsymbol{z} ; \tilde{\boldsymbol{x}}_s, \boldsymbol{x}_s) \) is maximised. Using the chirp-based approach, this leads to 
\begin{equation}
    \frac{\partial^2 \phi_g(\boldsymbol{z} ; \tilde{\boldsymbol{x}}_s, \boldsymbol{x}_s)}{\partial x^2} = 0,
\end{equation}
which yields, after substitution and rearrangements,
\begin{equation}
(\tilde{y}_s - y )^{4/3} ((x_s -x)^2 + (y_s - y)^2) 
= (y_s - y )^{4/3} ((\tilde{x}_s -x)^2 + (\tilde{y}_s - y)^2)
\end{equation}

Defining \( A \triangleq y - y_s \) and \( B \triangleq y - \tilde{y}_s \), this leads to a quadratic equation in \( x \):
\begin{equation}
\begin{split}
    &\left(A^{4/3}-B^{4/3}\right) x^2
    - 2x\left(A^{4/3} \tilde{x}_s - B^{4/3} x_s\right) \\ 
    &\quad + \left[ A^{4/3}\left(\tilde{x}_s^2 + B^2\right)
    - B^{4/3}\left(x_s^2 + A^2\right) \right] = 0.
\end{split}
\end{equation}

This last equation can be rewritten in the canonical form
\begin{equation}
    (x - x_0(y))^2 + \frac{\tilde{c}(y)}{\tilde{a}} = 0,
\end{equation}
where, by identification,
\begin{equation}
\left\{
\begin{aligned}
    \tilde{a} &= A^{4/3} - B^{4/3}, \\[2pt]
    x_0(y) &= \frac{A^{4/3} \tilde{x}_s - B^{4/3} x_s}{A^{4/3} - B^{4/3}}, \\[2pt]
    \tilde{c}(y) &= A^{4/3}\!\left(\tilde{x}_s^2 + B^2\right) - B^{4/3}\!\left(x_s^2 + A^2\right) \\
    &\quad - \frac{\big(A^{4/3} \tilde{x}_s - B^{4/3} x_s\big)^2}{A^{4/3} - B^{4/3}}.
\end{aligned}
\right.
\end{equation}

Assuming \( A \approx B \) (i.e., \( B = A + \epsilon \) with \( \epsilon = y_s - \tilde{y}_s \) small), a first-order Taylor expansion gives
$
    B^{4/3} \approx A^{4/3} + 4/3 A^{1/3} \epsilon,
$
leading to
\begin{equation}
    x_0(y) \approx A u_1 + u_0, \text{ with }
    u_1 = -\frac{3}{4} \frac{\tilde{x}_s - x_s}{\epsilon}, \quad
    u_0 = x_s,
\end{equation}
and
\begin{equation}
    - \frac{\tilde{c}(y)}{\tilde{a}} \approx A^2 u_2 + A u_3,
\end{equation}
where
\begin{equation}
\left\{
\begin{aligned}
    u_2 &= \frac{9}{16} \frac{(\tilde{x}_s - x_s)^2}{\epsilon^2} + \frac{1}{2}, \\[2pt]
    u_3 &= \frac{{x}_s^2 - \tilde{x}_s^2 - \epsilon^2}{-4\epsilon/3} - \frac{3}{2} x_s \frac{\tilde{x}_s - x_s}{\epsilon}.
\end{aligned}
\right.
\end{equation}

Thus, the loci of critical antennas can be approximated as
\begin{equation}
    (x - (A u_1 + u_0))^2 + A^2 u_2 + A u_3 = 0.
\end{equation}

Rearranging terms leads to the conic section form
\begin{equation}
    a x^2 + b x A + c A^2 + d x + e A + f = 0,
\label{eq:conic_section}
\end{equation}
with coefficients
\begin{equation}
\left\{
\begin{aligned}
    a &= 1, & b &= -2 u_1, & c &= u_1^2 - u_2,\\
    d &= -2 u_0, & e &= -2 u_0 u_1 - u_3, & f &= u_0^2, 
\end{aligned}
\right.
\end{equation}

whose nature is determined by the discriminant \( D = b^2 - 4ac \) \cite{korn_mathematical_2013, granino_a_korn_theresa_m_korn_math_2006, zelator_integral_2009}. 
For our parameters (\(a = 1\), \(b = -2u_1\), \(c = u_1^2 - u_2 = -1/2\)), we have \(D = 4 u_1^2 + 2 > 0\), indicating a hyperbola.  
The corresponding asymptotes can be derived as described in Appendix~\ref{appendix:asymptotes_hyperbola}.

\subsection{Asymptotes of a Hyperbola}
\label{appendix:asymptotes_hyperbola}

For a conic
\begin{equation}
    a x^2 + b x y + c y^2 + d x + e y + f = 0,
\end{equation}
with $D=b^2-4ac>0$, the asymptotes $y = m x + p$ satisfy \cite{korn_mathematical_2013, granino_a_korn_theresa_m_korn_math_2006, zelator_integral_2009}
\begin{equation}
\left\{
\begin{aligned}
    a + b m + c m^2 &= 0, \\[3pt]
    b p + 2 c m p + e m + d &= 0.
\end{aligned}
\right.
\end{equation}
Thus,
\begin{equation}
    m = \frac{-b \pm \sqrt{D}}{2c}, \qquad
    p = -\frac{e m + d}{b + 2 c m}.
\end{equation}

\subsection{Directional Cosine for Far-Field Conditions in Rectangular Arrays}
\label{appendix:ff_conditions_cartesian}

Let the source and tentative source be written as
\begin{equation}
\boldsymbol{x}_s = D(\cos\theta_s,\sin\theta_s), \qquad
\tilde{\boldsymbol{x}}_s = \tilde{D}(\cos\theta_s,\sin\theta_s).
\end{equation}
In the far field ($D,\tilde{D}\gg\|\boldsymbol{z}\|$), following a first order Taylor expansion, 
$
\|\boldsymbol{z}-\boldsymbol{x}_s\|
\approx D - \boldsymbol{z}\cdot\boldsymbol{s},
$
which implies
\begin{equation}
    \frac{\boldsymbol{z} - \boldsymbol{x}_s}{\| \boldsymbol{z} - \boldsymbol{x}_s \|} \approx \frac{\boldsymbol{z} - D \boldsymbol{s}}{D - \boldsymbol{z} \cdot \boldsymbol{s}} \approx -\boldsymbol{s},
\end{equation}
Applying the same approximation to $\tilde{\boldsymbol{x}}_s$ gives
$
    \boldsymbol{\mathrm{k}}^{g}(\boldsymbol{z}) \approx k(\boldsymbol{s}-\tilde{\boldsymbol{s}}),
$
showing that, for rectangular arrays in FF conditions, $\boldsymbol{\mathrm{k}}^{g}$ becomes independent of $\boldsymbol{z}$ and depends only on the difference in directional cosines.

%%%%%%%%%%% References %%%%%%%%%%%

\bibliographystyle{ieeetr}
\bibliography{biblio}

\end{document}